\numberwithin{equation}{section}
\newtheorem{maintheorem}{Theorem}
\newtheorem{theorem}{Theorem}[section]
\newtheorem*{theorem*}{Theorem}
\newtheorem{conjecture}[theorem]{Conjecture}
\newtheorem*{conjecture*}{Conjecture}
\newtheorem{lemma}[theorem]{Lemma}
\newtheorem{condition}[theorem]{Condition}
\theoremstyle{definition}{

\newtheorem*{definition*}{Definition}

}
\theoremstyle{remark}{

\newtheorem*{remark*}{Remark}

}
\newcommand{\E}{\mathbb{E}}
\renewcommand{\P}{\mathbb{P}}
\renewcommand{\epsilon}{\varepsilon}
\renewcommand{\phi}{\varphi}
\newcommand{\gt}{\tilde{G}}
\newcommand{\dtree}{\hat{\mathbbm{T}}_d}
\newcommand{\hmu}{\hat{\mu}}
\newcommand{\qt}{\tilde{Q}}
\date{}
\begin{document}
\title{Computational Transition at the Uniqueness Threshold}

\author{Allan Sly}
\address{Allan Sly\hfill\break
Microsoft Research\\
One Microsoft Way\\
Redmond, WA 98052, USA.}
\email{allansly@microsoft.com}
\urladdr{}

\begin{abstract}
The hardcore model is a model of lattice gas systems which has received  much attention in statistical physics, probability theory and theoretical computer science. It is the probability  distribution over independent sets $I$ of a graph weighted proportionally to $\lambda^{|I|}$ with fugacity parameter $\lambda$.  We  prove that at the uniqueness threshold of the hardcore model on the $d$-regular tree, approximating the partition function becomes computationally hard on graphs of maximum degree~$d$.

Specifically, we show that unless NP$=$RP there is no polynomial time approximation scheme for the partition function (the sum of such weighted independent sets) on graphs of maximum degree $d$ for fugacity $\lambda_c(d) < \lambda < \lambda_c(d) + \varepsilon(d)$ where
\[
\lambda_c = \frac{(d-1)^{d-1}}{(d-2)^d}
\]
is the uniqueness threshold on the $d$-regular tree and $\varepsilon(d)>0$ is a positive constant.  Weitz~\cite{Weitz:06} produced an FPTAS for approximating the partition function when $0<\lambda < \lambda_c(d)$ so this result demonstrates that the computational threshold exactly coincides with the statistical physics phase transition thus confirming the main conjecture of~\cite{MWW:09}.
We further analyze the special case of $\lambda=1, d=6$ and show there is no polynomial time approximation scheme for approximately counting independent sets on graphs of maximum degree $d= 6$, which is optimal, improving the previous bound of $d= 24$.

Our proof is based on specially constructed random bi-partite graphs which act as  gadgets in a reduction to MAX-CUT.
Building on the involved second moment method analysis of \cite{MWW:09} and combined with an analysis of the reconstruction problem on the tree our proof establishes a strong version of ``replica'' method heuristics developed by theoretical physicists.  The result establishes the first rigorous correspondence between the hardness of approximate counting and sampling with statistical physics phase transitions.
\end{abstract}

\maketitle

\vspace{-1cm}

\pagebreak

\section{Introduction}\label{sec:intro}

The hardcore model  is a model from statistical physics representing hardcore interaction of gas particles.  It is a probability distribution on independent sets $I$ of a graph weighted as $\frac1Z \lambda^{|I|}$ where $\lambda$ is a positive parameter called the fugacity and $Z$ is a normalizing constant called the partition function.  Physicists and probabilists  have done extensive work towards identifying the phase transitions and other properties of the model.

In computational complexity approximately counting (weighted) independent sets is a central problem.  The hardcore model is of key importance as this is exactly the problem of producing an FPRAS (fully polynomial randomized approximation scheme) for $Z$, the partition function.  When $\lambda$ is small the hardcore model has rapid decay of correlations and the partition function can be approximated either using MCMC or through computational tree methods~\cite{Weitz:06}.  For  larger fugacities long range dependencies may appear and the problem is known to be hard  when $\lambda$ is sufficiently large.

In this paper we determine a computational threshold where approximating $Z$ becomes hard.  Using an ingenious computational tree approach Weitz~\cite{Weitz:06} produced a PTAS for approximating $Z$ when $\lambda < \lambda_c(d)$ where
\[
\lambda_c(d)=  \frac{(d-1)^{d-1}}{(d-2)^d}
\]
is the uniqueness threshold for the hardcore model on the infinite $d$-regular tree~\cite{Kelly:85}, the point at which long range dependencies become possible.  Mossel, Weitz and Wormald~\cite{MWW:09} showed that beyond this phase transition local MCMC algorithms fail and conjectured that it gives the threshold for computations hardness.  While such statistical physics phase transitions are believed to coincide with the transition in computational hardness of approximating the partition function for a number of important models no such examples had been proven.  Our main result essentially confirms the conjecture of~\cite{MWW:09} giving the first such rigorous example.

\begin{maintheorem}\label{t:main}
For every $d\geq 3$ there exists  $\varepsilon(d)>0$ such that when $\lambda_c(d) < \lambda < \lambda_c(d) + \varepsilon(d)$, unless NP$=$RP, there does not exist an FPRAS for the partition function of the hardcore model with fugacity $\lambda$ for graphs of maximum degree at most $d$.
\end{maintheorem}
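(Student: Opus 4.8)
\section*{Proof proposal for Theorem~\ref{t:main}}

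The plan is to prove hardness by a polynomial-time \emph{randomized} reduction from the gap version of MAX-CUT on bounded-degree graphs, which is NP-hard to approximate (by the PCP theorem). The engine of the reduction is a family of \emph{gadgets}: random bipartite graphs in which almost every vertex has degree exactly $d$, while a small prescribed set of ``port'' vertices has degree $d-1$, so that after attaching one external edge to each port the whole construction still has maximum degree $d$. Given a $\Delta$-regular instance $G=(V,E)$ of MAX-CUT, for a suitable fixed constant $\Delta$, we replace each $v\in V$ by an independent copy $\Gamma_v$ of the gadget and, for each edge $\{u,v\}\in E$, we wire an unused port of $\Gamma_u$ to an unused port of $\Gamma_v$ (possibly through a short fixed ``wire'' whose only role is to fix the sign of the resulting interaction). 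The output graph $H=H(G)$ has maximum degree $d$ and $|V(H)|=\Theta(|V|)$.

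The core of the proof is the behaviour of the hardcore model on a single gadget $\Gamma$ at a fugacity $\lambda$ slightly above $\lambda_c(d)$. I would show that, with high probability over $\Gamma$, the Gibbs measure $\mu_\Gamma$ is, up to a negligible error, a balanced mixture $\tfrac12\mu_+ + \tfrac12\mu_-$ of two ``phases'' exchanged by swapping the two sides of the bipartition: in $\mu_+$ one side is heavily occupied and the other nearly empty, and in $\mu_-$ the roles are reversed. The crucial extra property is that the phase is \emph{visible at the ports}: under $\mu_{\pm}$ the ports are occupied, approximately independently, with probability $q_{\pm}$, and $q_+\neq q_-$. Hence the local weight contributed by a wire joining two ports depends only on the pair of phases of the two incident gadgets, and the wire can be arranged so that \emph{opposite} phases are favoured by a definite multiplicative factor. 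Summing over all $2^{|V|}$ phase assignments $\sigma\colon V\to\{+,-\}$ gives the ``replica''-type identity
\[
Z(H) \;=\; e^{\,a|V| + b|E| + o(|V|)}\cdot\sum_{\sigma}\ \prod_{\{u,v\}\in E}\psi(\sigma_u,\sigma_v),
\]
where $\psi$ is a fixed symmetric kernel with $\psi(+,-)=\psi(-,+)>\psi(+,+)=\psi(-,-)$, so the last factor is an antiferromagnetic Ising partition function dominated by the maximum cut: $\log Z(H) = c_1|V| + c_2|E| + c_3\,\mathrm{maxcut}(G) + o(|V|)$ with $c_3>0$. An FPRAS for $Z$, run with inverse-polynomial accuracy and calibrated on a few reference instances whose maximum cuts are known exactly (e.g.\ disjoint unions of edges or of triangles, which also pin down $c_1,c_2$), then determines $\mathrm{maxcut}(G)$ to additive error $o(|V|)$, enough to decide the MAX-CUT gap problem. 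Since the reduction uses random gadgets this yields $\mathrm{NP}=\mathrm{RP}$.

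To establish the phase description of $\Gamma$ I would use the conditional (small-subgraph) second moment method, building on and sharpening the computation of Mossel, Weitz and Wormald~\cite{MWW:09}. The first moment of the number of independent sets of $\Gamma$ with a prescribed pair of side-densities has, precisely when $\lambda>\lambda_c(d)$, two symmetric maximizers (below $\lambda_c$ the maximizer is unique and symmetric — this is where the uniqueness threshold enters). The delicate point is the matching upper bound: one must show that the second moment of a suitably truncated count, refined to also record the port occupations, is within a constant factor of the square of the first moment. The dominant contribution comes from pairs of independent sets lying in the same phase at the ``replica-symmetric'' overlap, and excluding other overlaps requires controlling the pairwise correlation structure through an analysis of the broadcast/reconstruction problem for the hardcore model on the $d$-regular tree; in the window $\lambda_c(d)<\lambda<\lambda_c(d)+\varepsilon(d)$ this analysis forces the overlap of two samples within a phase to concentrate and makes the governing tree recursions contractive. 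Small-subgraph conditioning then converts the second-moment estimate into the required high-probability concentration of $Z(\Gamma)$ and of the port marginals.

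I expect the main obstacle to be carrying out this second-moment/reconstruction analysis \emph{uniformly as $\lambda\downarrow\lambda_c(d)$}: at the threshold the two phases merge, so the exponential separation between the ``good'' overlap and its competitors degenerates, and one needs quantitative, non-asymptotic control of the tree recursions and of every error term in the moment computation that survives this degeneration. A second technical point is to show that wiring $O(\Delta)$ external edges onto a gadget is a genuinely small perturbation — that it cannot tilt, merge, or destroy the two phases — which rests on the fact that flipping a gadget's phase costs probability exponentially small in its size; this robustness is exactly what justifies the factorised formula for $Z(H)$. Finally, the sharp statement for $\lambda=1$, $d=6$ (noting $\lambda_c(6)=5^5/4^6<1<\lambda_c(5)$) follows from the same construction once the gadget analysis is shown to hold on the whole interval $(\lambda_c(6),1]$ rather than only infinitesimally above the threshold, i.e.\ once the relevant moment and recursion estimates are verified on that fixed range.
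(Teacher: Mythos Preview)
Your high-level strategy matches the paper's: random bipartite gadgets with a two-phase structure, second moment plus small-subgraph conditioning, and a MAX-CUT reduction. However, two crucial devices from the paper are missing, and without them the reduction does not go through.

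First, the second moment method with small-graph conditioning does \emph{not} by itself deliver the claim that ``under $\mu_\pm$ the ports are occupied, approximately independently, with probability $q_\pm$'' with usable error bounds. What it gives (Lemma~\ref{l:whpSmallGraph} and Theorem~\ref{t:gtWHP}) is only that each $Z^\pm(\eta)$ is at least $n^{-1/2}$ times its expectation with high probability; combined with Markov's inequality this pins down $\P(\sigma_U=\eta\mid Y)/Q_U^\pm(\eta)$ only up to a factor polynomial in $n$, not $1+o(1)$. The paper's remedy is to \emph{append $(d-1)$-ary trees} of logarithmic depth to the ports of $\gt$ (producing a smaller root set $V$) and then to run a separate reconstruction analysis on these trees (Section~\ref{s:reconstruction}, culminating in Lemma~\ref{l:reconstructionConcentration}) to show that the influence of $\sigma_U$ on $\sigma_V$ is doubly-exponentially small in the tree height. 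It is this tree-plus-reconstruction step, not the second moment, that yields the $L^\infty$ bound of Theorem~\ref{t:Gproperties}. Your proposal invokes reconstruction only as a tool inside the second-moment overlap analysis; in the paper the second moment is instead controlled by the explicit optimization Condition~\ref{cond:technical}, and reconstruction serves the entirely different purpose just described.

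Second, your one-edge-per-wire scheme cannot separate signal from noise. Even granting perfect phase balance and exact port independence, the Ising-type sum $\sum_\sigma\prod_e\psi(\sigma_u,\sigma_v)$ has logarithm $c_3\,\mathrm{maxcut}(G)+c_2|E|+O(|V|)$, \emph{not} $+\,o(|V|)$: the entropy of the $2^{|V|}$ phase assignments is the same order as the cut term. On top of this, the phase weights are only known to satisfy $\P(Y=\pm)\ge 1/n$ (equation~\eqref{e:GpropA}), contributing another factor as large as $n^{|V|}$. The paper fixes both issues at once by laying down $n^{3\theta/4}$ parallel edges between each pair of connected gadgets while keeping the MAX-CUT instance of size at most $n^{\theta/4}$ (so the blow-up is polynomial, contrary to your $|V(H)|=\Theta(|V|)$); this amplifies the per-edge cut signal to $\psi^{n^{3\theta/4}}$, which swamps the $n^{\pm n^{\theta/4}}$ noise (see~\eqref{e:phaseRatio}). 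The paper then reads off the maximum cut from an approximate \emph{sample} rather than from an approximate count with calibration, which also sidesteps the difficulty that your constants $c_1,c_2,c_3$ depend on the random gadget and are recoverable from reference instances only up to the same $O(|V|)$ slack.
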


While we believe the result holds for all $\lambda>\lambda_c$, for technical reasons (specifically showing that an explicit function of three variables attains its maximum at a prescribed location, see Section~\ref{s:techCond} for details) the result is limited to $\lambda$ close to criticality.  This limitation notwithstanding, it clearly demonstrates the central role played by the uniqueness threshold.

When $\lambda=1$ the hardcore model is simply the uniform distribution over independent sets and the partition function is simply the number of independent sets and as such this case is of particular interest.  When $d \leq 5$  Weitz's result provides a FPRAS as $\lambda_c(d)>1$.  Conversely it is known that with $d\geq 25$ the problem is computationally hard~\cite{DFJ:02}.  While the case $d=6,\lambda=1$ does not fall within the scope of Theorem~\ref{t:main}, using a computer assisted proof, we establish the necessary technical condition and prove the following result.

\begin{maintheorem}\label{t:lambda1}
Unless NP$=$RP for every $d\geq 6$ there does not exist a fully polynomial
approximation scheme for counting independent sets on graphs of maximum degree at most $d$.
\end{maintheorem}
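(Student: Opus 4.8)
The plan is to deduce Theorem~\ref{t:lambda1} from the \emph{proof} of Theorem~\ref{t:main} together with one computer-assisted verification. There are two preliminary reductions. First, hardness of approximate counting is monotone in the degree bound: every graph of maximum degree at most $6$ is also a graph of maximum degree at most $d$ for any $d\ge 6$, so an FPRAS for counting independent sets on the latter class restricts to one on the former, and it suffices to treat $d=6$. Second, counting independent sets is exactly evaluating the partition function $Z$ of the hardcore model at $\lambda=1$, and an FPTAS would in particular be an FPRAS, so the statement reduces to: there is no FPRAS for $Z$ on graphs of maximum degree at most $6$ at fugacity $\lambda=1$. Since $\lambda_c(6)=5^5/4^6=3125/4096<1<256/243=4^4/3^5=\lambda_c(5)$ and $\lambda_c(d)$ is decreasing, the point $\lambda=1$ lies strictly in the non-uniqueness regime for every $d\ge 6$; this is the only property of ``$\lambda>\lambda_c(d)$'' that the gadget construction of Theorem~\ref{t:main} uses.

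Next I would isolate where the hypothesis $\lambda<\lambda_c(d)+\varepsilon(d)$ actually enters the proof of Theorem~\ref{t:main}. By the discussion in Section~\ref{s:techCond}, it is used at exactly one point: one must show that an explicit real-analytic function $\Phi=\Phi_{d,\lambda}$ of three real variables --- the exponential growth rate controlling the (conditional) second moment of the partition function of the random bipartite gadget, built on the analysis of \cite{MWW:09} --- attains its global maximum over the relevant compact domain at the ``replica-symmetric'' point predicted by the first moment. For $\lambda$ near criticality this is established analytically; for $(d,\lambda)=(6,1)$ I would instead verify it by rigorous numerics: (i) write down $\Phi_{6,1}$ and its domain explicitly; (ii) exhibit the candidate maximizer $p^\star$ and check there that the gradient vanishes and the Hessian is negative definite; (iii) partition the domain into a fine grid and, on every cell not containing $p^\star$, use interval-arithmetic enclosures of $\Phi_{6,1}$ (and, on cells adjacent to $p^\star$, of $\nabla\Phi_{6,1}$, to invoke strict local concavity) to certify that the value stays strictly below $\Phi_{6,1}(p^\star)$. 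Once this technical condition holds, the proof of Theorem~\ref{t:main} goes through verbatim at $(d,\lambda)=(6,1)$, delivering the MAX-CUT reduction and hence the stated hardness under NP$\ne$RP; the monotonicity reduction then yields all $d\ge 6$.

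The crux is step~(iii). The function $\Phi_{6,1}$ is not globally concave, its maximizer can lie on or near the boundary of the domain, and the Hessian is small near the critical point, so the grid must be refined there and the interval enclosures sharp enough to separate the true maximum from any competing local maxima without the certified search becoming infeasible; degeneracies of $\Phi_{6,1}$ at the boundary of the domain (coordinates approaching the endpoints of their range) also need a dedicated argument, while away from $p^\star$ crude bounds suffice. This finite, fully explicit inequality is precisely the kind of statement a rigorous computer-assisted proof --- interval arithmetic with controlled rounding --- is suited to, and the smoothness of $\Phi_{6,1}$ makes it feasible; it is the only non-analytic ingredient in the argument.
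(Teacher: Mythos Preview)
Your proposal is correct and follows essentially the same route as the paper: the monotonicity reduction to $d=6$ is implicit there (the gadget $H^G$ built with $d=6$ has maximum degree $6$, so an FPRAS for any larger degree class applies to it), and the proof of Theorem~\ref{t:main} is then rerun verbatim once Condition~\ref{cond:technical} is verified at $(d,\lambda)=(6,1)$ by rigorous numerics.

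Two minor points of comparison. First, the paper's verification (Section~\ref{s:computerAssit}) is organized differently from your grid-and-function-value sketch: it analytically eliminates $\epsilon$ via the explicit optimizer $\hat\epsilon(\alpha,\beta,\gamma,\delta)$, uses the crude bound $f\le f^*$ to trim the range of $\delta$ away from boundary singularities, and then certifies by interval arithmetic that the \emph{Hessian} of the resulting two-variable function $\hat g_{\alpha,\beta}(\gamma,\delta)$ is negative definite on the entire remaining rectangle. This sidesteps the ``nearly equal values near the maximizer'' difficulty you anticipate, handles the boundary degeneracies you mention, and by continuity of the Hessian immediately yields the required open neighborhood of $(p^-,p^+)$ in Condition~\ref{cond:technical}. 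Second, you omit one small additional hypothesis: Theorem~\ref{t:Gproperties} also needs the inequalities~\eqref{e:extraConditions}, namely $q^+q^-(d-1)<1$ and $q^+<\tfrac35$, which are used in the reconstruction analysis of Section~\ref{s:reconstruction} and are easily checked numerically at $(6,1)$.
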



\subsection{Background and Previous Results}
Even on graphs of maximum degree 3 the problem of exactly counting independent sets is $\#$P hard~\cite{Greenhill:02} and as such one can at most ask when it is possible to approximately count independent sets, that is when an FPRAS exists.  As the model is self-reducible, approximate counting is equivalent to approximately sampling from the partition function~\cite{JerSin:89}.  This has led to a major line of research in analyzing the performance of MCMC techniques, particularly the Glauber dynamics.

When $\lambda\leq \frac{2}{d-2}$ the Glauber dynamics mixes rapidly~\cite{LubVig:99} which in particular gives an FPRAS for counting independent sets on graphs of maximum degree at most 4 (see \cite{DyGr:00} for similar bounds).  Weitz~\cite{Weitz:06} showed that the hardcore model has a decay of correlation property called strong spatial mixing whenever $\lambda < \lambda_c$ which implies rapid mixing on graphs of sub-exponential growth.  Moreover, his paper gives a deterministic polynomial time approximation scheme on all graphs when $\lambda < \lambda_c$ through a computational tree approximation.

Finding the ground state of the hardcore model, the largest independent set, is of course a canonical NP-hard problem and is hard to approximate even on regular graphs of degree 3~\cite{BerFuj:99}.  Intuitively the problem of counting becomes harder as $\lambda$ grows as this places more mass on the larger, harder to find, independent sets and indeed such hardness results have been established.  In~\cite{LubVig:99} it was shown that there is no FPRAS (assuming NP$=$RP) when $\lambda\leq c/d$ for $c\approx 10000$.  In the case of $\lambda=1$ this was improved to $d\geq 25$ in~\cite{DFJ:02} using random regular bi-partite graphs as basic gadgets in a hardness reduction.  They further showed that with high probability the mixing time of the Glauber dynamics on a random bipartite $d$-regular graph is exponential in the size of the graph.  Calculations of~\cite{DFJ:02} led the authors there to speculate that $\lambda_c$ may be the threshold for hardness but the evidence was not conclusive enough to make such a conjecture.

\subsubsection{Replica Heuristics}

The replica and cavity methods and heuristics  have provided powerful tools (often non-rigorous) in the study of a wide range of random optimization problems and predictions for the behavior of spin glasses and dilute mean fields spin systems~\cites{MPV:87,MezMon:09}.  Developed by theoretical physcicits, in in some cases these heuristics have been made rigorous, notably the SK model~\cite{Talagrand:06}, solution space of solutions to random constraint satisfaction problems~\cite{AchCoj:08} and the assignment problem~\cite{Aldous:01}.  In dilute spin glass models such methods have given rise to powerful new algorithms such as survey propagation (see e.g. \cite{KMRSZ:07}).

Random regular bi-partite graphs are widely known to be locally tree-like with only a small number of short cycles.  The statistical physics theory makes the following predictions for the hardcore model on typical random bi-partite $d$-regulars.  The first is that the model is expected to exhibit spontaneous symmetry breaking for $\lambda>\lambda_c$.  When $\lambda<\lambda_c$ correlations decay exponentially and the configuration (independent set) is essentially balanced between the two halves of the bi-partite graph.  By contrast when $\lambda > \lambda_c$ the configuration separates its mass unevenly placing $\Omega(n)$ more mass on one side or the other.  Configurations with a roughly equal proportion of sites on each side make up only an exponentially small fraction of the distribution.  This is intuitively plausible as the largest bi-partite sets will be those containing most of one side of the graph or the other.

The second is that this symmetry breaking splits the configuration space into two ``pure states'' of roughly equal probability.  We will denote the ``phase'' of the configuration as the side of the graph with more sites.  Conditional on the phase the spins of randomly chosen vertices are assumed to be asymptotically independent and the local neighbourhood of the configuration are given by extremal measures.  This conditional independence is a crucial element of cavity-method type arguments.

A first moment analysis of~\cite{DFJ:02} suggested that configurations obey the first prediction but their proof proceeded without specifically proving it.  In a technical tour de force the prediction was rigorously established for $\lambda_c(d) < \lambda < \lambda_c(d) + \varepsilon(d)$ in~\cite{MWW:09} using an involved second moment method analysis together with the small graph conditioning method.
The restriction to the region $\lambda < \lambda_c(d) + \varepsilon(d)$ is somewhat surprising at first as the problem ought to become easier as $\lambda$ grows.  It is the result of a technical difficulty in estimating the second moment bound. Even establishing this for $\lambda$ close to the critical value  took up fully a third of the proof.  As a central part of our proof is a modification of this method the same restriction applies.

Based on establishing the symmetry breaking~\cite{MWW:09} showed that any local reversible Markov Chain has mixing time exponential in the number of vertices by establishing a bottleneck in the mixing on asymptotically almost all random $d$-regular bi-partite graphs.  This bound is tight as subsequent results~\cite{MosSly:09}*{Theorem 4} imply rapid mixing on almost all random bi-partite graphs when $\lambda<\lambda_c(d)$.  Based on these finding they made the following conjecture.

\begin{conjecture}\label{c:cong}(\cite{MWW:09})
Unless NP$=$RP for every $d\geq 4$ and $\lambda_c(d) < \lambda$ there does not exist a fully polynomial
approximation scheme for the partition function of the hardcore model with fugacity $\lambda$ for graphs of maximum degree at most $d$.
\end{conjecture}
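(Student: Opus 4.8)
\emph{Proof proposal (for Theorem~\ref{t:main}, which establishes Conjecture~\ref{c:cong} in the near-critical window).} The plan is a randomized reduction from a bounded-degree version of \textsc{Max-Cut}, which is NP-hard to approximate within a fixed multiplicative constant. The basic objects are \emph{gadgets}: random bipartite graphs on sides $A,B$ that are $d$-regular except for a bounded number of designated \emph{ports}, which are vertices of degree $d-1$ carrying one free half-edge. Given a \textsc{Max-Cut} instance $H$ of bounded degree, one builds $G(H)$ by placing an independent copy of the gadget at every vertex of $H$ and, for each edge of $H$, joining a free port of one gadget to a free port of the other. The heart of the argument is to show that the log-partition function of $G(H)$ encodes the maximum cut of $H$ to within $o(|V(G(H))|)$, after which an FPRAS for the partition function would approximate $\mathrm{MaxCut}(H)$ well enough to contradict its APX-hardness unless NP$=$RP.

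First I would analyze a single gadget. Building on the second moment method and small subgraph conditioning of~\cite{MWW:09}, one shows that a sample from the hardcore measure on the gadget exhibits the predicted spontaneous symmetry breaking: there is a \emph{phase} $\sigma\in\{+,-\}$ (the heavier side), each value occurring with probability $\tfrac12+o(1)$, and conditioned on $\sigma$ the occupancy variables of the ports are asymptotically independent with marginal $p_{+}$ for a port on the heavy side and $p_{-}$ for a port on the light side. The marginals $p_{\pm}$ are the occupation probabilities of the two extremal Gibbs measures on the infinite $d$-regular tree; the asymptotic independence of the port spins conditioned on $\sigma$ is where the analysis of the reconstruction problem enters, since for $\lambda$ slightly above $\lambda_c$ one is in the non-reconstruction regime and the local law at a port, conditioned only on the global phase, is governed by the corresponding tree recursion. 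Crucially, because $\lambda>\lambda_c$ one has $p_{+}\neq p_{-}$ strictly (in the uniqueness regime $p_{+}=p_{-}$ and the construction would carry no information, consistent with Weitz's FPTAS), and in parallel one must estimate the \emph{conditional} partition function $Z$ restricted to a given phase and a given pattern of port spins, showing it has the exponential order dictated by the replica heuristic.

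With the gadget understood, the assembly is essentially bookkeeping. Summing over phase assignments $\eta\in\{+,-\}^{V(H)}$ and over port patterns, one obtains
\[
\log Z_{G(H)} \;=\; c\,|V(G(H))| \;+\; \max_{\eta\in\{+,-\}^{V(H)}}\ \sum_{uv\in E(H)} \Phi(\eta_u,\eta_v) \;+\; o\!\left(|V(G(H))|\right),
\]
where $\Phi$ is symmetric and, by choosing how many ports lie on each side and how they are wired, can be arranged so that $\Phi(+,-)>\Phi(+,+)=\Phi(-,-)$ with a strict gap coming from $p_{+}\neq p_{-}$; thus the maximum equals $\Phi(+,+)\,|E(H)| + (\Phi(+,-)-\Phi(+,+))\cdot \mathrm{MaxCut}(H)$. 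A $(1\pm\varepsilon)$-approximation of $Z_{G(H)}$ gives an additive $O(\varepsilon|V(G(H))|)$ estimate of $\log Z_{G(H)}$, hence an additive $O(\varepsilon|V(H)|)$ estimate of $\mathrm{MaxCut}(H)$, and for $\varepsilon$ a sufficiently small constant this solves the NP-hard gap version of bounded-degree \textsc{Max-Cut}; since the FPRAS is randomized this yields NP$=$RP.

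The main obstacle is the refined second-moment estimate. Tracking the extra port variables forces one to show that an explicit function of three parameters — the overlap between two independent samples together with the induced correlations at the ports — attains its maximum at the symmetric ``diagonal'' point; this is precisely the technical condition alluded to in Section~\ref{s:techCond}, and, exactly as in~\cite{MWW:09}, verifying it confines the argument to $\lambda\in(\lambda_c,\lambda_c+\varepsilon(d))$. One also needs the small subgraph conditioning to control short cycles of the random gadgets uniformly over the polynomially many port patterns that occur, a point that does not arise in the bare partition-function estimate of~\cite{MWW:09}. For Theorem~\ref{t:lambda1} ($d=6$, $\lambda=1$), which lies outside this window, the same scheme applies provided the three-variable maximization is checked directly, and this is what the rigorous computer-assisted verification supplies.
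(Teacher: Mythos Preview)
Your high-level strategy --- random bipartite gadgets with degree-$(d-1)$ ports, glued along edges of $H$, reduction from \textsc{Max-Cut} --- matches the paper, but the quantitative setup has a real gap. You take a \emph{bounded} number of ports per gadget and one inter-gadget edge per edge of $H$, and then read the cut off of $\log Z_{G(H)} = c\,|V(G(H))| + \max_\eta\sum_{uv}\Phi(\eta_u,\eta_v)+o(|V(G(H))|)$. The scales do not match: the gadgets must have size $n\to\infty$ for the second-moment machinery to bite, so $|V(G(H))|\approx n\,|V(H)|$, while with constant $\Phi$ the cut signal is only $O(|E(H)|)=O(|V(H)|)$; the $o(n|V(H)|)$ error term swamps it and no FPRAS accuracy can recover the cut. (Separately, a $(1\pm\varepsilon)$ approximation of $Z$ gives additive error $\approx\varepsilon$ in $\log Z$, not $O(\varepsilon|V(G(H))|)$.) The paper instead equips each gadget with \emph{polynomially many} ports ($m\approx n^\theta$), realises each edge of $H$ by $n^{3\theta/4}$ parallel port-to-port edges, and takes $|V(H)|\le n^{\theta/4}$; the per-edge phase gap is then $\Theta(n^{3\theta/4})$ and dominates all lower-order corrections (including the $n^{-n^{\theta/4}}$ phase-imbalance factor). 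It then passes from the FPRAS to approximate \emph{sampling} via self-reducibility, samples a configuration, reads off the phase vector $(Y_x)_{x\in H}$, and shows this is an \emph{exact} maximum cut with probability $1-o(1)$, reducing from exact \textsc{Max-Cut} rather than a gap version.

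There is a second missing ingredient. With $n^\theta$ ports one needs the conditional law of the port spins, given the phase, to agree with a product measure to within $n^{-2\theta}$ in the $L^\infty$ \emph{ratio} (equation~\eqref{e:GpropB}); otherwise errors compound across the $n^\theta$-fold product. The second-moment/small-subgraph-conditioning analysis on the base bipartite graph $\tilde G$ only delivers this for the \emph{expectations} $\E Z^\pm_{\tilde G}(\eta)$ (Lemma~\ref{l:gtExpectedPartition}) together with a high-probability lower bound $Z^\pm_{\tilde G}(\eta)\ge n^{-1/2}\E Z^\pm_{\tilde G}(\eta)$ (Theorem~\ref{t:gtWHP}) --- not a uniform ratio bound over all $\eta$. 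The paper closes this by \emph{appending $(d-1)$-ary trees} of height $\Theta(\log n)$ onto the leaf set $U$ and taking the roots as the port set $V$: a reconstruction analysis for the extremal semi-translation-invariant measures $\hat\mu^\pm$ on $\hat{\mathbbm T}_d$ (Lemma~\ref{l:reconstructionConcentration}) shows that the root's dependence on the leaf configuration is negligible outside a set of doubly-exponentially small probability, and this upgrades the approximate-in-expectation statement on $U$ to the required $L^\infty$ statement on $V$. Your sketch invokes non-reconstruction at the right place but omits this tree-amplification step, without which the uniformity over port patterns does not follow. Two minor corrections: the port marginals are $q^\pm$ (the $(d-1)$-ary-tree occupation densities), not $p^\pm$, since ports have degree $d-1$; and the paper proves only $\P_G(Y=\pm)\ge 1/n$, not $\tfrac12+o(1)$ --- the weaker bound suffices precisely because the polynomial per-edge gap overwhelms the subexponential phase-imbalance factor.
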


Phase transitions of spin systems have been known to exactly determine the region of rapid mixing in a number of systems including the ferromagnetic Ising model on $\mathbbm{Z}^2$~\cite{MarOli:94} and on the $d$-regular tree~\cite{BKMP:05}.  The first such example on completely general bounded degree graphs was recently established by Mossel and the present author~\cite{MosSly:09} showing rapid mixing of the Glauber dynamics of the ferromagnetic Ising model on graphs of maximum degree $d$ when $(d-1)\tanh \beta<1$.  The threshold $(d-1)\tanh \beta = 1$ is a statistical physics phase transition, the uniqueness threshold for the Ising model on the $d$-regular tree.

Slow mixing of MCMC algorithms do not by themselves imply hardness of approximating the partition function.  Indeed, in the ferromagnetic Ising model the mixing time of local reversible Markov chains may be exponential but nonetheless there is an FPRAS by the famous algorithm of Jerrum and Sinclair~\cite{JerSin:90}.  However, unlike the hardcore model or indeed the anti-ferromagnetic Ising model which do exhibit phase transitions, the ground states of the ferromagnetic Ising model are trivially found.

While phase transitions exists on many infinite graphs, it is the uniqueness threshold on the tree that appears to determine the onset of computational hardness in general graphs in a number of models as they represent the extreme case for correlation decay in graphs for many models.
Sokal~\cite{Sokal:01} conjectured that uniqueness on the $d$-regular tree for the hardcore model implies uniqueness on any graph of maximum degree $d$.  This conjecture was established in~\cite{Weitz:06} which further showed that for \emph{any} 2-spin system strong spatial mixing on the $d$-regular tree implies strong spatial mixing on all graphs of maximum degree $d$.  Indeed for most, although not all, spin systems the regular tree is expected to be the limiting case for extreme correlations amongst all graphs of maximum degree $d$ (see e.g. \cite{Sly:08} for more details).  The emergence of long range correlations appears to be a necessary prerequisite for hardness of sampling and this motivates the conjectures that the uniqueness threshold on the tree determines the onset of computational hardness.

In this paper we establish a form of the second heuristic prediction on a modified random bipartite graph. We show that on a polynomial sized set of vertices the spins are close to a product measure, conditional on the phase in the $L^\infty$ distance on measures.  Being able to treat large numbers of vertices as conditionally independent given the phase plays a key role in our reduction.  While some results of this nature have been established previously (see e.g. \cite{DemMon:10,MMS:09}) this is the first example we are aware of where the number of conditionally independent sites grows polynomially in the size of the graph.

\subsection{Proof Techniques}

Following the approach of~\cite{DFJ:02} and as suggested in~\cite{MWW:09} we utilize random bi-partite graphs as basic gadgets in a hardness reduction.  In those papers the basic unit of the construction is the random $d$-regular bipartite graph.  To obtain a sharp result we cannot afford to add edges to such graphs (creating degree $d+1$ vertices) so our basic gadgets are bi-partite random graphs, most of whose vertices are degree $d$ but with a small number of degree $d-1$ vertices which are used to connect to other gadgets.

We begin by constructing a graph $\gt$ which is a random bipartite graph with $n$ vertices of degree~$d$ and $m' \approx n^{\theta+\psi}$ vertices of degree $d-1$ where $\theta,\psi$ are small positive parameters.  We label the sides as ``plus'' and ``minus''  and edges are chosen according to random matchings of the vertices on the two sides.  We denote the phase of the configuration (the random independent set) to be plus or minus according to the side which has more elements of the set amongst the degree $d$ vertices.

With $U$ denoting the set of vertices of degree $(d-1)$ we consider the random partition functions $Z^\pm(\eta)$ giving the sum over $\lambda^{|\sigma|}$ over all configurations with phase $\pm$ and with $\sigma_U=\eta$ where $\eta\in\{0,1\}^U$.  We show that in expectation the $\E Z^\pm(\eta)$ are essentially proportional to the probabilities of a product measure on $U$ whose marignals are given by the marginals of extremal Gibbs measures for the hardcore model on the $(d-1)$-ary tree.  Our proof requires that this holds approximately for the $Z^\pm(\eta)$ themselves and adopt the second moment approach of~\cite{MWW:09} including their use of the small graph conditioning method~\cite{Wormald:99}.  While still involved, by estimating ratios of quantities in our model to quantities calculated in~\cite{MWW:09} we greatly simplify these computations.  We are, however, still left with the same technical condition as~\cite{MWW:09} which we describe in the next subsection.

Even this approximate conditional independence is not sufficient for our reduction.  To this end we construct a new random graph $G$ by appending $(d-1)$-ary trees of height $\psi \log_{d-1} n$ onto $U$ and denote the set of $m\approx n^\theta$ roots of the trees as $V$ which are of degree $d-1$. Our proof proceeds to show that, conditional on the phase, $\sigma_V$ is very close to a product measure.  We note that appending the trees reweights the probabilities on configurations $\sigma_U$ but it does so in a quantifiable way.

By construction the spins $\sigma_V$ are conditionally independent given $\sigma_U$.  Moreover, the statistical physics heuristics imply that the configuration of the neighbourhood around $\sigma_V$ should be given by an extremal semi-translation invariant Gibbs measure on the tree with strong decay of correlation from the root to the leaves of the tree.  Based on this intuition, we show that after conditioning on the phase the probability that $\sigma_U$ has a non-negligible influence on $\sigma_V$ is doubly exponential small in the height.  Through this we can establish its distribution with bounds in the $L^\infty$ norm.  This is done by bounding the probability that the spins in a distant level influence the root using methods from the ``reconstruction problem on the tree'' (see e.g.~\cite{Mossel:04,Sly:09}).

The random graph $G$ constitutes our gadget.  Given a graph $H$ on up to $n^{\theta/4}$ vertices we construct $H^G$ by taking a copy of $G$ for each vertex of $H$.  Then for every edge  in $H$ we connect $n^{3\theta/4}$ vertices between each side of $V$ in the corresponding copies of $G$ maintaining the maximum degree $d$.  Since the spins in $V$ are almost conditionally independent given the phase we can estimate the effect of adding these edges.  An easy calculation shows that the most efficient arrangement is to have connected gadgets have opposite phases. The hardcore model on $H^G$ puts most of its mass on configurations whose phases are solutions to MAX-CUT on $H$.  Hence, by the equivalence of approximate counting and approximate sampling, this gives a randomized reduction to MAX-CUT.

\subsection{Preliminaries}\label{s:preliminaries}
For a finite graph $G$ with edge set $E(G)$ the independent sets are subsets of the vertices containing no adjacent vertices or equivalently elements of the set of configurations
\[
I(G) = \{\sigma \in [0,1]^G:\forall (u,v)\in E(G), \sigma_u \sigma_v = 0\}.
\]
The \emph{Hardcore Model} is a probability distribution over independent sets of a graph $G$ defined by
\begin{equation}\label{e:defnHardcore}
\P_G(\sigma)=\frac1{Z_G(\lambda)} \lambda^{\sum_{v\in G}\sigma_v} \mathbbm{1}_{\sigma\in I(G)}
\end{equation}
where $Z_G(\lambda)=\sum_{\sigma\in I(G)} \lambda^{|\sigma|}=\sum_{\sigma\in I(G)} \lambda^{\sum_{v\in G}\sigma_v}$ is a normalizing constant known as the \emph{partition function} and is a weighted counting of the independent sets.  When $\lambda=1$ the hardcore model is the uniform measure on independent sets and $Z_G(1)$ is the number of independent sets of the graph.

The definition of the hardcore model can be extended to infinite graphs by way of the DLR condition which essentially says that for every finite set $A$ the configuration on $A$ is given by the Gibbs distribution given by a random boundary generated by the measure outside of $A$.  Such a measure is called a Gibbs measure and there may be more one or infinitely many such measures (see e.g. \cite{Georgii:88} for more details).  When there is exactly one Gibbs measure we say the model has \emph{uniqueness}. Our main result relates the uniqueness threshold on $\mathbbm{T}_d$, the infinite $d$-regular tree, to the hardness of approximating the partition function on graphs of maximum degree~$d$.

The hardcore model on $\mathbbm{T}_d$ undergoes a phase transition at $\lambda_c(d)=\frac{(d-1)^{d-1}}{(d-2)^d}$
with uniqueness when $\lambda\leq \lambda_c$ and non-uniqueness when $\lambda > \lambda_c$~\cite{Kelly:85}.   The following picture is described in \cite{MWW:09}.  For every $\lambda$ there exists a unique translation invariant Gibbs measure $\mu=\mu_{d,\lambda}$ known as the free measure with occupation density $p^*=\mu(\sigma_\rho)$ for $\rho$ the root of the tree.  When $\lambda>\lambda_c$ there also exist two semi-translation invariant (that is invariant under parity preserving automorphisms of $\mathbbm{T}_d$) measures $\mu_+$ and $\mu_-$ whose occupation densities we denote by $p^+=\mu_+(\sigma_\rho),p^-=\mu_-(\sigma_\rho)$.  These measures are obtained by conditioning on level $2\ell$ (resp. $2\ell+1$) of the tree to be completely occupied and taking the weak limit as $\ell\to\infty$.

It will also be of use to discuss related measures on the infinite $(d-1)$-ary tree $\hat{\mathbbm{T}}^d$ rooted at $\rho$.  We define analogously the measures $\hmu_+$ and $\hmu_-$ obtained by conditioning on level $2\ell$ (resp. $2\ell+1$) of $\dtree$ to be completely occupied and taking the weak limit as $\ell\to\infty$.  We set $q^+$ and $q^-$ to be the respective occupation densities $q^+=\hmu_+(\sigma_\rho),q^-=\hmu_-(\sigma_\rho)$ of the root $\rho$.

The measure $\mu_\pm$ and $\hmu_\pm$ are naturally related as follows.  Let $v$ be a child of $\rho$ and denote $\mathbbm{T}_v$ to be the subtree of $\mathbbm{T}^d$ rooted at $v$.  There is a natural identification of $\mathbbm{T}^d \setminus \mathbbm{T}_v$ with the $(d-1)$-ary tree $\hat{\mathbbm{T}}^d$ and under this identification the measures satisfy
\begin{equation}\label{e:measureIdentification}
\hmu_\pm(\sigma\in \cdot) = \mu_\pm(\sigma_{\mathbbm{T}^d \setminus \mathbbm{T}_v}\in\cdot|\sigma_v=0).
\end{equation}
In particular since $\sigma_\rho=1$ implies  $\sigma_v=0$ for an independent set  in $\mathbbm{T}^d$ it follows that
\begin{equation}\label{e:pqRelation}
q^\pm = \frac{p^\pm}{1-p^\mp}.
\end{equation}
Furthermore, standard tree recursions for Gibbs measures (see e.g. \cite{MWW:09}) establish that
\[
q^\pm = \frac{\lambda(1-q^\mp)^{d-1}}{1+\lambda(1-q^\mp)^{d-1}}
\]
and consequently by equation~\eqref{e:pqRelation},
\begin{equation}\label{e:qTreeRelation}
\frac{q^\pm}{1-q^\pm} = \lambda(1-q^\mp)^{d-1} = \lambda\left(\frac{1-p^\pm - p^\mp}{1-p^\pm}\right)^{d-1}.
\end{equation}
It is shown in~\cite{MWW:09}*{Section 4} and~\cite{DFJ:02}*{Claim 2.2} that the following hold for $\lambda>\lambda_c$:
\begin{enumerate}
\item
The solutions to  $h(\alpha)=\beta,h(\beta)=\alpha$ with $(\alpha,\beta)\in \mathcal{T}=\{(\alpha,\beta):\alpha,\beta\geq 0, \alpha+\beta\leq 1 \}$ where
\[
h(x)=(1-x)\left[1-\left(\frac{x}{\lambda(1-x)}\right)^{1/d} \right]
\]
are exactly $(p^+,p^-)$, $(p^-,p^+)$ and $(p^*,p^*)$.  These densities satisfy  $p^- < p^* < p^+$ and when $\lambda \downarrow \lambda_c$ we have that $p^*,p^+,p^-\to 1/d$.
\item The points $(p^+,p^-)$ and $(p^-,p^+)$ are the maxima of $\Phi_1(\alpha,\beta)$ in $\mathcal{T}$ where
\begin{align*}
\Phi_1(\alpha,\beta)&=(\alpha+\beta)\log \lambda -\alpha\log\alpha-\beta\log\beta-d(1-\alpha-\beta)\log(1-\alpha-\beta)\\
& \qquad + (d-1)\left((1-\alpha)\log(1-\alpha) + (1-\beta)\log(1-\beta) \right).
\end{align*}
\end{enumerate}

\subsubsection{Technical Conditions}\label{s:techCond}
We now describe the technical condition necessary for our result.
The function in question is
\begin{align}
&f(\alpha,\beta,\gamma,\delta,\epsilon) = 2(\alpha+\beta)\log \lambda + H(\alpha)+H_1(\gamma,\alpha)+H_1(\alpha-\gamma,1-\alpha) + H(\beta) + H_1(\delta,\beta) \nonumber\\
&\qquad + H_1(\beta-\delta,1-\beta) + d\bigg[ H_1(\gamma,1-2\beta+\delta)-H(\gamma)+H_1(\epsilon,1-2\beta+\delta-\gamma) \nonumber\\
&\qquad +H_1(\alpha-\gamma-\epsilon,\beta-\delta)-H_1(\alpha-\gamma,1-\gamma) + H_1(\alpha-\gamma,1-\beta-\gamma-\epsilon) -H_1(\alpha-\gamma,1-\alpha)\bigg]
\end{align}
where $H_1(x,y) = - x(\log x - \log y) + (x-y)(\log(y-x)-\log(y))$ and $H(x)=H(x,1)$
and where $f$ is defined in the range $(\alpha,\beta)\in\mathcal{T}$ and
\begin{equation}\label{e:greekRegion}
\alpha-\gamma-\epsilon\geq 0,\beta-\delta\geq 0, 1-2\beta+\delta-\gamma-\epsilon\geq 0.
\end{equation}
which emerges naturally when calculating the second moment of the partition function.
\begin{condition}\label{cond:technical}
The technical condition is that there exists a constant $\chi>0$ such that when when $|p^+-\beta|,|p^- - \alpha|<\chi$ the function $g_{\alpha,\beta}(\gamma,\delta,\epsilon)=f(\alpha,\beta,\gamma,\delta,\epsilon)$ attains its unique maximum in the set~\eqref{e:greekRegion} at the point $(\gamma^*,\delta^*,\epsilon^*)=(\alpha^2,\beta^2,\alpha(1-\alpha-\beta))$.
\end{condition}
The following result of~\cite{MWW:09} establishes Condition~\ref{cond:technical} when $\lambda_c<\lambda<\lambda_c(d)+\epsilon(d)$.
\begin{lemma}[\cite{MWW:09}*{Lemma 6.10, Lemma 5.1}]\label{l:MWWfMax}
For each $d\geq3$ there exists $\chi>0$ such that when $|\alpha-\frac1d|,|\beta-\frac1d| < \chi$ then $g_{\alpha,\beta}(\gamma,\delta,\epsilon)$ has a unique maximum at $(\gamma^*,\delta^*,\epsilon^*)$ where $\gamma^*=\alpha^2,\delta^*=\beta^2,\epsilon^*=\alpha(1-\alpha-\beta)$.
\end{lemma}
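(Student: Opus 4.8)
The plan is to identify $f$ with (a constant shift of) the exponential rate governing the ``correlated'' part of the second moment of the bipartite partition function analysed in \cite{MWW:09}, and then to follow their argument. In this identification $(\alpha,\beta)$ are the occupation densities on the two sides of a single independent set, while $(\gamma,\delta,\epsilon)$ record the overlap statistics of a pair of sets: $\gamma,\delta$ are the densities of vertices occupied in \emph{both} copies on the minus and plus sides, and $\epsilon$ is the relevant cross term. The designated point $(\gamma^*,\delta^*,\epsilon^*)=(\alpha^2,\beta^2,\alpha(1-\alpha-\beta))$ is exactly where these overlap densities factor as products of the single-copy marginals, i.e.\ the ``independent'' overlap; one checks that there $f(\alpha,\beta,\gamma^*,\delta^*,\epsilon^*)=2\Phi_1(\alpha,\beta)$. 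Thus the lemma asserts that, once $(\alpha,\beta)$ is close to the first-moment optimum $(\tfrac1d,\tfrac1d)$ (equivalently $\lambda$ close to $\lambda_c$), the only overlap profile realising the square of the first-moment rate is the independent one --- which is precisely the input needed for the second moment method.

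First I would confirm that $(\gamma^*,\delta^*,\epsilon^*)$ is an interior critical point of $g_{\alpha,\beta}$ on the region~\eqref{e:greekRegion}: writing out $\partial_\gamma f,\partial_\delta f,\partial_\epsilon f$ using the elementary formulas for the partials of the entropy terms $H_1$, each stationarity equation reduces to a log-ratio identity that is readily verified at the product point. Then I would compute the Hessian $D^2 g_{\alpha,\beta}$ there and check it is negative definite. Both of these are best done first at $\alpha=\beta=\tfrac1d$, where all quantities become explicit rational functions of $d$ with a manifest signature, and then extended to $|\alpha-\tfrac1d|,|\beta-\tfrac1d|<\chi$ by continuity. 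This already gives that $(\gamma^*,\delta^*,\epsilon^*)$ is a strict local maximum with value $2\Phi_1(\alpha,\beta)$.

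The global statement is obtained by the variable-elimination scheme of \cite{MWW:09}. The equation $\partial_\epsilon f=0$ is, after exponentiating, a quadratic in $\epsilon$ with a single root $\epsilon=\epsilon(\gamma,\delta;\alpha,\beta)$ inside the admissible range; substituting it reduces the search for critical points to a function of $(\gamma,\delta)$ only, and one further stationarity equation can be solved in the same way, leaving a one-dimensional extremal problem (or an explicit scalar inequality) to be verified at $\alpha=\beta=\tfrac1d$. Separately one must exclude maximisers on the boundary of~\eqref{e:greekRegion}: on each face where one of $\alpha-\gamma-\epsilon$, $\beta-\delta$, $1-2\beta+\delta-\gamma-\epsilon$ vanishes, $f$ either diverges to $-\infty$ or degenerates to a lower-dimensional function of the same form that is handled by the same method, while the corners where individual arguments of $H_1$ tend to $0$ are controlled by the explicit behaviour of $H_1$ near the boundary of its domain.

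The hard part is this last step. The region~\eqref{e:greekRegion} fails to be compact once the logarithmic singularities of $H_1$ are accounted for, so the plain continuity/compactness argument transferring the picture from $\lambda=\lambda_c$ to a neighbourhood must be backed up by a bound, \emph{uniform} over $(\alpha,\beta)$ near $(\tfrac1d,\tfrac1d)$, keeping $f$ strictly below $2\Phi_1(\alpha,\beta)$ in a neighbourhood of the degenerate boundary. Establishing this uniform separation is exactly the delicate estimate that consumes a large fraction of \cite{MWW:09}; what makes it tractable near criticality is that the interior maximiser sits at distance $\Theta(1)$ from the bad boundary and the quadratic behaviour there dominates, so the whole argument closes for a suitable $\chi=\chi(d)>0$.
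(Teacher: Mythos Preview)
The paper does not actually prove this lemma: it is stated with the attribution \cite{MWW:09}*{Lemma 6.10, Lemma 5.1} and then used as a black box, so there is no ``paper's own proof'' to compare your sketch against. Your outline is a reasonable high-level summary of the strategy in \cite{MWW:09} --- identify $f$ as the second-moment rate function, verify that the product overlap $(\alpha^2,\beta^2,\alpha(1-\alpha-\beta))$ is a critical point with value $2\Phi_1$, eliminate $\epsilon$ explicitly via \eqref{e:eHatdefn}, and then establish negative-definiteness of the reduced Hessian together with boundary control --- and you are right that the last step (uniform separation from the boundary near criticality) is where the real work lies. But since the present paper simply imports the result, a proof here is not expected; if you want to include one, you should either cite \cite{MWW:09} directly or flesh out the sketch into a self-contained argument, being aware that the actual proof in \cite{MWW:09} is lengthy and the details (particularly the global analysis of the reduced two-variable function and the boundary cases) are not as routine as your sketch suggests.
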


In Section~\ref{s:computerAssit} we give a computer assisted proof which establishes Condition~\ref{cond:technical} in the special case of $\lambda=1$ and $d=6$.
Two other technical conditions we make use of in the proof are that
\begin{equation}\label{e:extraConditions}
q^+ q^-  (d-1)< 1 \quad \hbox{ and } q^+<\frac35.
\end{equation}
Both  conditions holds in the regions of interest as we have that $q^+,q^- \to \frac1{d-1}$ when $\lambda \downarrow \lambda_c$ and $q^+\approx 0.423 , q^- \approx 0.056$ when $\lambda=1$ and $d=6$.  The first can be shown to hold for all $\lambda>\lambda_c$ with a somewhat involved proof while the latter is unnecessary but somewhat simplifies the proof.

\subsection{Comments and Open Problems}
The main open problem, of course, is to remove the  $\lambda < \lambda_c(d) + \varepsilon(d)$ condition, ideally with a proof avoiding the second moment analysis.  Alternatively, one could try and establish the Condition~\ref{cond:technical} for all $\lambda > \lambda_c$ and $d\geq 3$.

Another natural problem  is to establish the correspondence between computational hardness and phase transitions in the anti-ferromagnetic Ising model. While calculations of the style of~\cite{MWW:09} are not available and are likely to be even more challenging, it may be possible to avoid them.  Indeed results of~\cite{MMS:09} already imply conditional local weak convergence of the configuration but not in a strong enough form to complete necessary reduction.

In Section~\ref{s:constructG} we detail the construction for $G$ and show how it. In Section~\ref{s:partition} we analyze the first and second moments of the partition functions $Z^\pm(\eta)$.  In Section~\ref{s:reconstruction} we analyse the reconstruction problem on the tree and establish the conditional distributions of $\sigma_V$.  Finally in Section~\ref{s:computerAssit} we sketch the computer assisted proof that Condition~\ref{cond:technical} holds when $d=6$ and $\lambda=1$.

\subsection{Acknowledgements}

A.S. would like to thank Elchanan Mossel for his generous encouragement, guidance, support and advice with this project and also Dror Weitz for helpful discussions.  The worked was initiated when the A.S. was a student at UC Berkeley where he was supported by NSF CAREER grant DMS-0548249 and by DOD ONR grant (N0014-07-1-05-06) 1300/08.

\section{Proof of Theorem~\ref{t:main} and~\ref{t:lambda1}}

In this section we first describe the construction of our base random graph $G$ which will be the basic gadget in our reduction.  We state a theorem describing the properties of the hardcore model on $G$ and then proceed to show how this establishes the reduction for Theorems~\ref{t:main} and~\ref{t:lambda1}.

\subsection{Construction of $G$}\label{s:constructG}

We begin by constructing a random bi-partite (multi)graph $\gt=\gt(n,\theta,\psi)$ where $n$ is a positive integer and $0<\theta,\psi<\frac18$ are positive constants which will be chosen to depend on $\lambda$ and $d$.  This graph will be the basis of our construction of $G$.
\begin{itemize}
\item The bipartite graph is constructed in two halves which we will call respectively the plus half and the minus half each with $n+m'$ vertices where $m'=(d-1)^{\lfloor\theta \log_{d-1} n\rfloor+2\lfloor\frac{\psi}{2}\log_{d-1}\rfloor n}$.
\item The vertices of each side are split into two sets $W^{\pm}$ and $U^{\pm}$ of size $n$ and $m'$ respectively.  We label the vertices of $U^{\pm}$ by $u_1^\pm,\ldots,u^\pm_{m'}$.
\item We connect $d-1$ edges to each vetex by taking $d-1$ random perfect matchings of  $W^+\cup U^+$ with $W^-\cup U^-$ and adding an edge between each pair of matched vertices.
\item We take one more perfect matching of $W^+$ with $W^-$ and add an edge between each pair of matched vertices.
\end{itemize}

In this construction the vertices in $W=W^{+}\cup W^-$ are of degree $d$ and the vertices in $U=U^+\cup W^-$ are of degree $d-1$.  Note that in this construction there will be multiple edges between vertices with asymptotically constant probability bounded away from 1.  However, in the hardcore model multiple edges are irrelevant and we simply treat them as single edges (some degrees will be decreased but this will not affect our proof).

We now complete our construction of $G=G(n,\theta,\psi)$ by adjoining trees onto $U^+$ and to $U^-$.
\begin{itemize}
\item Construct a collection of $m=(d-1)^{\lfloor\theta \log_{d-1}n\rfloor}$ disconnected $(d-1)$-ary trees of depth $2\lfloor\frac{\psi}{2} \log_{d-1}n\rfloor$ rooted at $v_1^+,\ldots,v^+_{m}$.  The total number of leaves of the trees is $m'$.
\item Adjoin this collection of trees to $U^+$ by identifying each vertex of $U^+$ with the leaf of one of the trees.  Denote the set of roots as $V^+$ which are vertices of degree $d-1$.
\item Perform the analogous construction on $U^-$ to complete $G$.
\end{itemize}
This construction yields a bi-partite graph of maximum degree $d$ with $m$ vertices of degree $d-1$ on each side.  We now consider a the Hardcore model $P_G(\sigma)$ on $G$.  Our construction is a modification of the model considered in \cite{MWW:09} where they showed that on a.a.a random bi-partite $d$-regular graphs the probability of ``balanced'' sets is exponentially small.  This is also the case for our construction and we define the phase of the configuration as
\[
Y=Y(\sigma):=\begin{cases}+1 &\hbox{if }\sum_{w\in W^+} \sigma_w \geq \sum_{w\in W^-} \sigma_w, \\
-1 &\hbox{if } \sum_{w\in W^+} \sigma_w < \sum_{w\in W^-} \sigma_w. \end{cases}
\]
We define the product measure $Q_V^+$ (respectively $Q^-$) on configurations on $V=V^+\cup V^-$ so that the spins are iid Bernoulli with probability $q^+$ (resp.~$q^-$) on $V^+$ and $q^-$ (resp.~$q^+$) on $V^-$, i.e.,
\[
Q_V^\pm(\sigma_V) := (q^\pm)^{\sum_{v\in V^+} \sigma_{v}} (1-q^\pm)^{m-\sum_{v\in V^+} \sigma_{v}} (q^\mp)^{\sum_{v\in V^-} \sigma_{v}} (1-q^\mp)^{m-\sum_{v\in V^-} \sigma_{v}}.
\]
We define $Q_U$ on $U=U^+\cup U^-$ similarly.  With these definitions we establish the following result about hardcore model on $G$.

\begin{theorem}\label{t:Gproperties}
For every $d\geq 3$ when $\lambda_c(d) < \lambda$ and when Condition~\ref{cond:technical} and equation \eqref{e:extraConditions} hold there exists constants $\theta(\lambda,d),\psi(\lambda,d)>0$ such that the graph $G(n,\theta,\psi)$ has $(2+o(1))n$ vertices and satisfies the following with high probability:
\begin{itemize}
\item The phases occur with roughly balanced probability so that
\begin{equation}\label{e:GpropA}
\P_G(Y=+) \geq \frac1n, \P_G(Y=-) \geq \frac1n.
\end{equation}
\item The conditional distribution of the configuration on $V$ satisfies
\begin{equation}\label{e:GpropB}
\max_{\sigma_V} \left| \frac{\P_G(\sigma_V|Y=\pm)}{Q_V^\pm(\sigma_V)} - 1\right| \leq n^{-2\theta}.
\end{equation}
\end{itemize}
\end{theorem}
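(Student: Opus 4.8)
The plan is to analyze the hardcore model on $G$ by exploiting the decomposition $G=\gt\cup(\text{forest of }(d-1)\text{-ary trees})$: a first/second moment analysis of the core $\gt$ controls the conditional law of $\sigma_U$ given the phase, and a reconstruction estimate on the attached trees boosts this to the strong $L^\infty$ statement for $\sigma_V$. Throughout, let $T_j$ denote the tree rooted at $v_j\in V$, whose leaves form a block of $\approx n^{\psi}$ vertices of $U$, and let $Z_{T_j}(\zeta)$ be the hardcore partition function of $T_j$ with the leaf spins fixed to $\zeta$. First I would compute $\E Z^\pm(\eta)$, the expectation over the random matchings defining $\gt$. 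Splitting according to the occupation numbers $a=\sum_{W^+}\sigma$, $b=\sum_{W^-}\sigma$ of the degree-$d$ vertices and applying Stirling, the number of admissible configurations times the probability that such a configuration is an independent set under the $d-1$ (and the one extra) random matchings has exponential rate $\Phi_1(\alpha,\beta)$, $\alpha=a/n$, $\beta=b/n$, multiplied by a factor coming from the pinning $\sigma_U=\eta$. Since $\Phi_1$ is maximized on $\cT$ only at $(p^+,p^-)$ and $(p^-,p^+)$, restricting to the phase-$\pm$ event localizes the sum at the phase-consistent optimum, where the effective field incident to a degree-$(d-1)$ vertex is the two-periodic fixed point of the tree recursion; by \eqref{e:qTreeRelation} its root occupation is $q^\pm$. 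This yields that $\E Z^\pm(\eta)$ is essentially proportional (with the same constant for all $\eta$) to $Q_U^\pm(\eta)$, and more generally that $\sum_{\eta|_A=\zeta_A}\E Z^\pm(\eta)$ is proportional to $Q^\pm(\zeta_A)$ uniformly over $A\subseteq U$ of size up to a small power of $n$: in expectation $\sigma_U$ is a product measure with the correct $(d-1)$-ary-tree marginals.

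To pass from expectations to a typical $\gt$, I would carry out the second moment argument of \cite{MWW:09}. The key simplification is that the second-moment sums for our pinned partition functions differ from those of \cite{MWW:09} only by explicit multiplicative corrections accounting for the degree-$(d-1)$ vertices and the pinning, so estimating these ratios reduces the problem to the optimization already performed there; the dominant rate of $\E[(Z^\pm)^2]$ and of the cross-moments is $f$ near $(\alpha,\beta)=(p^-,p^+)$, and Condition~\ref{cond:technical} --- valid in our range by Lemma~\ref{l:MWWfMax}, and by the computer-assisted argument of Section~\ref{s:computerAssit} when $d=6$, $\lambda=1$ --- places its maximum at the independent point $(\alpha^2,\beta^2,\alpha(1-\alpha-\beta))$, so the second moment matches the square of the first up to a constant. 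Running this through Wormald's small graph conditioning method \cite{Wormald:99} as in \cite{MWW:09} shows $Z^\pm/\E Z^\pm$ converges in law to a strictly positive limit; by symmetry $\E Z^+=\E Z^-$, so $Z^+/Z^-$ is tight and bounded away from $0$, which gives \eqref{e:GpropA}. The same conditioning argument applied to the pinned partition functions gives that, with high probability over $\gt$, the law $\P_{\gt}(\sigma_U\in\cdot\mid Y=\pm)$ is close enough to $Q_U^\pm$ that, for every $j$, the leaf configuration $\sigma_U|_{T_j}$ lies in the set of boundary conditions attracted to the $\pm$ stable orbit of the recursion except with probability $\exp(-\Omega(n^{\psi}))$.

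Appending the trees multiplies the weight of $\{\sigma_U=\eta\}$ by the deterministic factor $\prod_j Z_{T_j}(\eta|_{T_j})$ and makes $\sigma_V$ conditionally independent of $\sigma_W$, hence of $Y$, given $\sigma_U$; therefore $\P_G(\sigma_V=x\mid Y=\pm)=\E\big[\prod_j\P_{T_j}(\sigma_{v_j}=x_j\mid\sigma_U|_{T_j})\big]$, with $\sigma_U$ drawn from the reweighted conditional law $\P_G(\cdot\mid Y=\pm)$. The analytic heart, carried out in Section~\ref{s:reconstruction}, is a broadcasting-on-trees estimate: for every leaf boundary condition $\zeta$ in the $\pm$ basin, $\P_{T_j}(\sigma_{v_j}=\cdot\mid\zeta)$ differs from the $\hmu_\pm$ root marginal by at most $\varepsilon_k$, with $k\approx\frac{\psi}{2}\log_{d-1}n$ the half-depth and $\varepsilon_k$ doubly exponentially small in $k$ --- obtained by bounding the probability that the depth-$2k$ spins influence the root, using contraction of the two-step recursion near its stable orbit (where $q^+q^-(d-1)<1$ and $q^+<\frac35$ from \eqref{e:extraConditions} are used) together with the spatial Markov property. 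Averaging over $\sigma_U$, the $\eta$ for which some $\sigma_U|_{T_j}$ is outside its basin contribute at most $m\exp(-\Omega(n^{\psi}))$ and the rest contribute $Q_V^\pm(x)(1\pm O(m\varepsilon_k))$; choosing $0<\theta<\psi<\frac18$ with $\theta$ small enough relative to $\psi$ (depending only on $d$ and $\lambda$) makes both errors below $n^{-2\theta}\min_x Q_V^\pm(x)$, establishing \eqref{e:GpropB}.

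The principal difficulty lies in the second-moment/small-graph-conditioning step: the fact that the exponential rate of the second moment of the pinned partition functions is $f$ and is optimized at the independent point is precisely Condition~\ref{cond:technical}, which is why the result is confined to $\lambda$ near $\lambda_c$ (and needs a computer-assisted check for $d=6$, $\lambda=1$), and one must verify that Wormald's method, and the resulting distributional limits, survive both the pinning and the modification of the regular graph to include degree-$(d-1)$ vertices. The reconstruction estimate, though the genuinely new ingredient, is comparatively stable given the contraction guaranteed by \eqref{e:extraConditions}.
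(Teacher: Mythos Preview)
Your outline matches the paper's proof closely: first-moment localization giving $\E Z_{\gt}^\pm(\eta)\propto Q_U^\pm(\eta)$, the second-moment/small-graph-conditioning step borrowed from \cite{MWW:09} to get whp lower bounds on $Z_{\gt}^\pm(\eta)$, the reconstruction concentration on the appended trees (Lemma~\ref{l:reconstructionConcentration}), and the combination via conditioning on $\sigma_U$.

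One point should be sharpened. You write that small graph conditioning makes $\P_{\gt}(\sigma_U\in\cdot\mid Y=\pm)$ ``close enough to $Q_U^\pm$'' and then invoke the reconstruction bound for leaf configurations under that law. But small graph conditioning yields only the one-sided $Z_{\gt}^\pm(\eta)\geq n^{-1/2}\E Z_{\gt}^\pm(\eta)$, not two-sided control of $\qt_U^\pm(\eta)/Q_U^\pm(\eta)$; and the reconstruction estimate is proved under $\hmu_\pm$, not under the i.i.d.\ product $Q_U^\pm$ on the leaves of a single tree, which is a genuinely different measure. The paper handles this via the auxiliary measure $P^*$ of \eqref{e:pStarDefn}: the point of Lemma~\ref{e:GTreeMeasure} is that $Q_U^\pm$ \emph{reweighted by the tree partition functions} $\kappa(\eta)$ is, on each $T_v$, exactly $\hmu_\pm$, so Lemma~\ref{l:reconstructionConcentration} applies directly to give $P^*(\mathcal{B})$ doubly exponentially small. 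The transfer to $\P_G(\sigma_U\in\mathcal{B}\mid Y=\pm)$ then goes through the ratio $\sum_{\eta\in\mathcal{B}}Z_{\gt}^\pm(\eta)\kappa(\eta)\big/\sum_\eta Z_{\gt}^\pm(\eta)\kappa(\eta)$, bounding the numerator by Markov and the denominator from below by the small-graph lower bound, with the ratio of expectations equal to $(1+o(1))P^*(\mathcal{B})$. Your final paragraph has this logic implicitly, so this is an expositional gap rather than a mathematical one, but the role of $P^*$ is what makes the argument close.

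A smaller remark: ``$Z^+/Z^-$ tight and bounded away from $0$'' is more than you need and more than the method immediately gives (the two are not independent). The paper obtains \eqref{e:GpropA} by pairing the small-graph lower bound $Z_G^\pm\geq \tfrac{1}{2\sqrt n}\,\E Z_G^\pm$ with Markov's inequality on $Z_G^\mp$ and $\E Z_G^+=(1+o(1))\E Z_G^-$.
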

The proof of this theorem is deferred to Section~\ref{s:reconstruction}.

\subsection{Reduction to Max-Cut}

We now demonstrate how to use Theorem~\ref{t:Gproperties} to establish a reduction from sampling from the hardcore model to Max-Cut.  Let $H$ be a graph on up to $\frac{1}{d-1}n^{\theta/4}$ vertices.  With a random bi-partite graph $G=G(n,\theta,\psi)$ constructed as above we define $H^G$ as follows.
\begin{itemize}
\item Take the graph comprising $|H|$ disconnected copies of $G$ and identify each copy with with a vertex in $H$  labeling the copies $(G_x)_{x\in H}$.  Denote this graph by $\widehat{H}^G$.  We let $V^+_x$ and $V^-_x$ denote the vertices of $G_x$ corresponding to $V^+$ and $V^-$.
\item For every edge $(x,y)$ in the graph $H$ add $n^{3\theta/4}$ edges between $V^+_x$ and $V^+_y$ and similarly add $n^{3\theta/4}$ edges between and $V^-_x$ and $V^-_y$.  This can be done deterministically in such a way that no vertex in $\widehat{H}^G$ has its degree increased by more than 1.  Denote the resulting graph by $H^G$.
\end{itemize}
The resulting graph has maximum degree $d$.  For each $x\in H$ we let $Y_x=Y_x(\sigma)$ denote the phase of a configuration $\sigma$ on $G_x$.  Let $\mathcal{Y}=(Y_x)_{x\in H}\in\{0,1\}^H$ denote the vector of phases of the $G_x$. Denote the partition function given the phase $\mathcal{Y}$ by
\[
Z_{H^G}(\mathcal{Y}')=\sum_{\sigma\in I(H^G)} \lambda^{|\sigma|} \mathbbm{1}\left(\mathcal{Y}(\sigma)=\mathcal{Y}' \right).
\]
\begin{lemma}\label{l:HGcutProb}
Suppose that $G$ satisfies equations~\eqref{e:GpropA} and~\eqref{e:GpropB} of Theorem~\ref{t:Gproperties}.  Then
\begin{equation}\label{e:phaseProbs}
\frac{Z_{\widehat{H}^G}(\mathcal{Y}')}{Z_{\widehat{H}^G}} = \P_G(Y=+)^{\sum_{x\in H} \mathbbm{1}_{Y_x'=+}} \cdot \P_G(Y=-)^{\sum_{x\in H} \mathbbm{1}_{Y_x'=-}} \geq  n^{-n^{\theta/4}},
\end{equation}
and
\begin{equation}\label{e:cutProb}
\frac{Z_{H^G}(\mathcal{Y}')}{Z_{\widehat{H}^G}(\mathcal{Y}')} = (C_H+o(1))\left[\frac{(1-q^+ q^-)^2} {(1-(q^+)^2)(1-(q^-)^2)}\right]^{n^{3\theta/4}\mathrm{Cut}(\mathcal{Y}')}  ,
\end{equation}
where $C_H=\left[(1-(q^+)^2)(1-(q^-)^2)\right]^{n^{3\theta/4}E(H)}$ and where
$\mathrm{Cut}(\mathcal{Y}')=\#\{(x,y)\in E(H): \mathcal{Y}_x'\neq \mathcal{Y}_y'\}$
denotes the number of edges in cut of $H$ induced by $\mathcal{Y}'$.
\end{lemma}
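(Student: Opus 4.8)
The plan is to establish the two displayed identities separately: \eqref{e:phaseProbs} from the product structure of $\widehat{H}^G$ over the copies $(G_x)_{x\in H}$, and \eqref{e:cutProb} by combining that product structure with the near-independence of $\sigma_V$ supplied by \eqref{e:GpropB}. For \eqref{e:phaseProbs}, note that $\widehat{H}^G$ is the disjoint union of the copies $G_x\cong G$, and the phase $Y_x$ is a function of $\sigma$ restricted to $G_x$; hence partition functions and the hardcore measure factor over $x$, so
\[
\frac{Z_{\widehat{H}^G}(\mathcal{Y}')}{Z_{\widehat{H}^G}}=\prod_{x\in H}\frac{Z_{G_x}(Y'_x)}{Z_{G_x}}=\prod_{x\in H}\P_G(Y=Y'_x)=\P_G(Y=+)^{\sum_{x}\mathbbm{1}_{Y'_x=+}}\,\P_G(Y=-)^{\sum_{x}\mathbbm{1}_{Y'_x=-}} .
\]
By \eqref{e:GpropA} each of the $|H|\le\tfrac1{d-1}n^{\theta/4}\le n^{\theta/4}$ factors is at least $n^{-1}$, so the product is at least $n^{-n^{\theta/4}}$.

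For \eqref{e:cutProb}, set $\mathbf{V}:=\bigcup_{x\in H}(V^+_x\cup V^-_x)$. The graph $H^G$ has the same vertices as $\widehat{H}^G$ and its edges are those of $\widehat{H}^G$ together with the new matching edges, all of which lie inside $\mathbf{V}$; since neither $|\sigma|$ nor $\mathcal{Y}(\sigma)$ depends on the new edges, $\sigma\in I(H^G)$ iff $\sigma\in I(\widehat{H}^G)$ and $\sigma$ respects every new edge, whence
\[
\frac{Z_{H^G}(\mathcal{Y}')}{Z_{\widehat{H}^G}(\mathcal{Y}')}=\P_{\widehat{H}^G}(\mathcal{R}\mid\mathcal{Y}=\mathcal{Y}'),\qquad \mathcal{R}:=\{\sigma_u\sigma_w=0\ \text{for every new edge}\ \{u,w\}\},
\]
and $\mathcal{R}$ is measurable with respect to $\sigma_{\mathbf{V}}$. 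Using the product structure again, $\P_{\widehat{H}^G}(\sigma_{\mathbf{V}}\mid\mathcal{Y}=\mathcal{Y}')=\prod_{x\in H}\P_{G_x}(\sigma_{V_x}\mid Y_x=Y'_x)$, and since each $G_x$ is a copy of $G$, \eqref{e:GpropB} makes each factor lie within a multiplicative $(1\pm n^{-2\theta})$ of $Q_V^{Y'_x}(\sigma_{V_x})$; as $|H|\,n^{-2\theta}\le n^{-7\theta/4}=o(1)$, this gives, uniformly in $\sigma_{\mathbf{V}}$,
\[
\P_{\widehat{H}^G}(\sigma_{\mathbf{V}}\mid\mathcal{Y}=\mathcal{Y}')=(1+o(1))\,Q^{\mathcal{Y}'}(\sigma_{\mathbf{V}}),\qquad Q^{\mathcal{Y}'}:=\prod_{x\in H}Q_V^{Y'_x},
\]
and summing over the $\sigma_{\mathbf{V}}$ realizing $\mathcal{R}$ yields $\P_{\widehat{H}^G}(\mathcal{R}\mid\mathcal{Y}=\mathcal{Y}')=(1+o(1))\,Q^{\mathcal{Y}'}(\mathcal{R})$.

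It then remains to evaluate $Q^{\mathcal{Y}'}(\mathcal{R})$ exactly, which is elementary. Under $Q^{\mathcal{Y}'}$ the spins on $\mathbf{V}$ are independent, $\sigma_v$ having mean $q^+$ when $v\in V^+_x$ and $Y'_x=+$ or $v\in V^-_x$ and $Y'_x=-$, and mean $q^-$ otherwise; since the new edges form a matching, the events $\{\sigma_u\sigma_w=0\}$ over them are independent, so $Q^{\mathcal{Y}'}(\mathcal{R})=\prod_{\{u,w\}\ \text{new}}\bigl(1-\E_{Q^{\mathcal{Y}'}}[\sigma_u]\,\E_{Q^{\mathcal{Y}'}}[\sigma_w]\bigr)$. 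Grouping the $2n^{3\theta/4}$ new edges attached to a given $(x,y)\in E(H)$: if $Y'_x=Y'_y$, the $n^{3\theta/4}$ edges between $V^+_x,V^+_y$ and the $n^{3\theta/4}$ edges between $V^-_x,V^-_y$ contribute $\bigl[(1-(q^+)^2)(1-(q^-)^2)\bigr]^{n^{3\theta/4}}$ in total, one factor of each type regardless of the common sign; if $Y'_x\ne Y'_y$, every such new edge joins a mean-$q^+$ vertex to a mean-$q^-$ vertex, giving $(1-q^+q^-)^{2n^{3\theta/4}}$. Multiplying over $E(H)$,
\[
Q^{\mathcal{Y}'}(\mathcal{R})=\bigl[(1-(q^+)^2)(1-(q^-)^2)\bigr]^{n^{3\theta/4}(E(H)-\mathrm{Cut}(\mathcal{Y}'))}(1-q^+q^-)^{2n^{3\theta/4}\mathrm{Cut}(\mathcal{Y}')}=C_H\left[\frac{(1-q^+q^-)^2}{(1-(q^+)^2)(1-(q^-)^2)}\right]^{n^{3\theta/4}\mathrm{Cut}(\mathcal{Y}')},
\]
and combining with the previous display gives \eqref{e:cutProb} (with the $o(1)$ there understood multiplicatively).

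The only step that needs real care is the middle one — promoting the per-copy $L^\infty$ bound \eqref{e:GpropB} to a uniform multiplicative comparison of the joint law of $\sigma_{\mathbf{V}}$ with $Q^{\mathcal{Y}'}$, and then to $\P(\mathcal{R})$. This is not deep, as the copies are independent in $\widehat{H}^G$ (so the ratio errors simply multiply) and $\mathcal{R}$ depends on $\sigma_{\mathbf{V}}$ only; but it is exactly what forces $|H|$ to be taken as small as $\tfrac1{d-1}n^{\theta/4}$, so that $|H|\,n^{-2\theta}=o(1)$. I would also re-verify the compatibility claim in the construction of $H^G$: each $G_x$ has $\deg_H(x)\le|H|$ incident $H$-edges, each needing $n^{3\theta/4}$ fresh endpoints in $V^+_x$ and in $V^-_x$, and $|H|\,n^{3\theta/4}\le\tfrac1{d-1}n^{\theta}\le m=|V^\pm_x|$, so the new edges can indeed be realized as a matching without raising any degree above $d$.
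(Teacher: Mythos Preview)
Your proof is correct and follows essentially the same approach as the paper's: both use the product structure of $\widehat{H}^G$ for \eqref{e:phaseProbs}, interpret the ratio in \eqref{e:cutProb} as the conditional probability that $\sigma$ respects the new edges, and then use \eqref{e:GpropB} to replace the conditional law on $\mathbf{V}$ by the product measure $Q^{\mathcal{Y}'}$ before computing the edge contributions. If anything, your version is slightly more explicit about the error propagation (multiplying $|H|$ per-copy ratio errors of size $n^{-2\theta}$) and about the matching structure of the new edges, whereas the paper states the asymptotic factorization over edges more tersely.
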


\begin{proof}
Since the graph $\widehat{H}^G$ consists of a collection of disconnected copies of $G$, the distribution of a configuration on $\widehat{H}^G$ is given by the product measure of configurations on the $(G_x)_{x\in H}$.  In particular the phases are independent and so
\begin{align*}
\frac{Z_{\widehat{H}^G}(\mathcal{Y}')}{Z_{\widehat{H}^G}}
=\P_{\widehat{H}^G}\left(\mathcal{Y}(\sigma)=\mathcal{Y}'\right)= \P_G(Y=+)^{\sum_{x\in H} \mathbbm{1}_{Y_x'=+}} \cdot \P_G(Y=-)^{\sum_{x\in H} \mathbbm{1}_{Y_x'=-}} \geq  n^{-n^{\theta/4}},
\end{align*}
which establishes equation~\eqref{e:phaseProbs}.  Now the ratio of the partition functions in \eqref{e:cutProb} is exactly the probability that the configuration $\sigma$ sampled under $\P_{\widehat{H}^G}$ is also an independent set for $H^G$ after adding in the extra edges, that is
\begin{align*}
\frac{Z_{H^G}(\mathcal{Y}')}{Z_{\widehat{H}^G}(\mathcal{Y}')} &= \P_{\widehat{H}^G}\left(\sigma\in I(H^G)\mid \mathcal{Y}(\sigma)=\mathcal{Y}' \right)\\
&= \P_{\widehat{H}^G}\left(\forall(v,v')\in E(H^G)\setminus E(\widehat{H}^G), \sigma_{v}\sigma_{v'}\neq 1 \mid \mathcal{Y}(\sigma)=\mathcal{Y}' \right).
\end{align*}
Now by equation~\eqref{e:GpropB}, conditional on the phase $\mathcal{Y}'$ the spins of $\sigma_{\cup_{x\in H} V_x}$ are asymptotically conditionally independent with probabilities $q^+$ or $q^-$ depending on the phase.  It follows that
\begin{align*}
&\P_{\widehat{H}^G}\left(\forall(v,v')\in E(H^G)\setminus E(\widehat{H}^G), \sigma_{v}\sigma_{v'}\neq 1 \mid \mathcal{Y}(\sigma)=\mathcal{Y}' \right)\\
&\qquad= (1+o(1)) \prod_{(v,v')\in E(H^G)} \P_{\widehat{H}^G} (\sigma_{v}\sigma_{v'}\neq 1\mid \mathcal{Y}(\sigma)=\mathcal{Y}').
\end{align*}
If $(x,x')\in E(H)$ then by direction calculations and equation~\eqref{e:GpropB}
\begin{align*}
&\prod_{v\in G_x,v'\in G_{x'}:(v,v')\in E(H^G)} \P_{\widehat{H}^G} (\sigma_{v}\sigma_{v'}\neq 1 \mid \mathcal{Y}(\sigma)=\mathcal{Y}')\\
&\qquad=\begin{cases} (1+O(n^{-\theta})) \left((1-(q^+)^2)(1-(q^-)^2)\right)^{n^{3\theta/4}} & \hbox{if } Y_x=Y_{x'},\\
(1+O(n^{-\theta})) \left((1-q^+ q^-)^2\right)^{n^{3\theta/4}} & \hbox{if } Y_x \neq Y_{x'}.\end{cases}
\end{align*}
Combining the above estimates we have that
\begin{align*}
\frac{Z_{H^G}(\mathcal{Y}')}{Z_{\widehat{H}^G}(\mathcal{Y}')} = (C_H+o(1))\left[\frac{(1-q^+ q^-)^2} {(1-(q^+)^2)(1-(q^-)^2)}\right]^{n^{3\theta/4}\mathrm{Cut}(\mathcal{Y}')},
\end{align*}
which completes the proof.
\end{proof}
Given the previous lemma we now show how to produce the randomized reduction to Max-Cut establishing Theorems~\ref{t:main} and~\ref{t:lambda1}.
\begin{proof}[Theorem~\ref{t:main} and \ref{t:lambda1}]
Let $H$ be a graph on at most $\frac1{d-1} n^{\theta}$ vertices.  Take an instance of a random graph $G=G(n,\theta,\psi)$ according to the construction in Section~\ref{s:constructG}.  By Theorem~\ref{t:Gproperties} with probability tending to 1 the graph satisfies equations~\eqref{e:GpropA} and~\eqref{e:GpropB}. Assume that it does and construct the graph $H^G$ which has at most $O(n^{1+\theta})$ vertices and maximum degree $d$.

Now suppose there exists an FPRAS for the partition function for the hardcore model with fugacity $\lambda$ on graphs of maximum degree $d$.  We now use the equivalence of approximating the partition function and approximately sampling for the hardcore model described in the introduction.  In polynomial time we may approximately sample from the hardcore model on $H^G$ to within $\delta$ of the Gibbs distribution in total-variation distance for any $\delta>0$.  Let $\sigma'$ denote such an approximate sample.  We may couple $\sigma'$ and with $\sigma$ distributed according to the Gibbs measure so that $\P(\sigma'\neq\sigma)\leq \delta$.  We now consider the phase of $\sigma$.  Let $\mathcal{Y}',\mathcal{Y}''\in\{0,1\}^{H}$ such that
\[
\mathrm{Cut}(\mathcal{Y}') > \mathrm{Cut}(\mathcal{Y}'').
\]
Then by Lemma~\ref{l:HGcutProb} we have that
\begin{align}\label{e:phaseRatio}
\frac{\P(\mathcal{Y}(\sigma)=\mathcal{Y}')}{\P(\mathcal{Y}(\sigma)=\mathcal{Y}'')}
=\frac{Z_{H^G}(\mathcal{Y}')}{Z_{H^G}(\mathcal{Y}'')}&\geq \frac{(1+o(1))Z_{\widehat{H}^G}(\mathcal{Y}')}{Z_{\widehat{H}^G}(\mathcal{Y}'')} \left[\frac{(1-q^+ q^-)^2} {(1-(q^+)^2)(1-(q^-)^2)}\right]^{n^{3\theta/4}[\mathrm{Cut}(\mathcal{Y}')-\mathrm{Cut}(\mathcal{Y}'')]}\nonumber\\
&\geq (1+o(1))n^{-n^{\theta/4}} \left[\frac{(1-q^+ q^-)^2} {(1-(q^+)^2)(1-(q^-)^2)}\right]^{n^{3\theta/4}[\mathrm{Cut}(\mathcal{Y}')-\mathrm{Cut}(\mathcal{Y}'')]}.
\end{align}
As we have that $0<q^-<q^+<1$ if follows that $(1-q^+ q^-)^2 - (1-(q^+)^2)(1-(q^-)^2) = (q^+ - q^-)^2 > 0$
and hence
\[
\frac{(1-q^+ q^-)^2} {(1-(q^+)^2)(1-(q^-)^2)} > 1.
\]
Therefore, for large enough $n$ by equation~\eqref{e:phaseRatio} it follows that
\[
\frac{\P(\mathcal{Y}(\sigma)=\mathcal{Y}')}{\P(\mathcal{Y}(\sigma)=\mathcal{Y}'')} \geq \left[\frac{(1-q^+ q^-)^2} {(1-(q^+)^2)(1-(q^-)^2)}\right]^{\frac12 n^{3\theta/4}} \geq 4^{n^{\theta/4}}.
\]
Since the size of $\{0,1\}^{|H|}$ is only $2^{n^{\theta/4}}$ it follows that with  probability at least $1-2^{|H|}$ that $\mathrm{Cut}(\mathcal{Y}(\sigma))$ attains the maximum value.  Hence with probability at least $1-\delta-o(1)$ the phases $\mathcal{Y}(\sigma')$ of the approximate sample $\sigma'$ also attains a maximum cut in $H$.  As such we have constructed a randomized polynomial-time reduction from approximating the partition function of the hardcore model to constructing a maximum cut.  It follows that unless RP$=$NP there is no polynomial-time algorithm for approximating the partition function of the hardcore model for $\lambda_c(d)<\lambda<\lambda_c(d)+\varepsilon(d)$ on graphs of maximum degree $d$ or when $\lambda=1$ on graphs of maximum degree 6 or more.
\end{proof}
\pagebreak

\section{The partition function of $\tilde{G}$}\label{s:partition}
In this section we analyse the hardcore model on the random bi-partite graph $\gt$ and in particular consider the effect of conditioning on the spins in $U=U^+ \cup U^-$.  For $\eta\in\{0,1\}^U$ we define $Z_{\gt}(\eta)$ to be the partition function over configurations whose restriction to $U$ is $\eta$, that is
\[
Z_{\gt}(\eta)=\sum_{\sigma\in I(\gt):\sigma_U=\eta} \lambda^{|\sigma|}.
\]
Our analysis borrows heavily on hard computations carried out in~\cite{MWW:09}.  There they considered a random $d$-regular bipartite graph where each side has $n$ vertices and the edges are chosen according to $d$ independent perfect matchings of the vertices of the sides.  They denote $Z^{\alpha,\beta}$ to be the weighted sum over configurations of the graph with $\alpha n$ and $\beta n$ vertices on the plus and minus sides of the configuration (for $\alpha,\beta$ such that $\alpha n,\beta n$ are integers).  We will denote their quantity by $Z^{\alpha,\beta}_{\mathrm{MWW}}$.  In the same spirit define
\[
Z_{\gt}^{\alpha,\beta}(\eta)= \sum_{\sigma:\sigma_U=\eta,\sum_{w\in W^+} \sigma_{w}=\alpha n,\sum_{w\in W^-}  \sigma_{w}=\beta n}\lambda^{|\sigma|}.
\]
\begin{lemma}\label{l:ratio1Moment}
For any $(\alpha,\beta)$ in the interior of $\mathcal{T}$ and all $\eta\in\{0,1\}^U$ we have that:
\begin{equation}
\E Z^{\alpha,\beta}_{\gt}(\eta)= (1+O(n^{-1/2}))C^*\left(\lambda \left(\frac{1-\alpha-\beta}{1-\beta} \right)^{d-1}\right)^{\eta^-}
\left(\lambda \left(\frac{1-\alpha-\beta}{1-\alpha} \right)^{d-1}\right)^{\eta^+} \E Z^{\alpha,\beta}_{\mathrm{MWW}}
\end{equation}
where
\[
C^*=\left(\frac{(1-\alpha)(1-\beta)}{1-\alpha-\beta}\right)^{m'}
\]
and where $\eta^{\pm}$ denotes $\sum_{u\in U^\pm} \eta_{u}$.
\end{lemma}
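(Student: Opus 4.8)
The plan is a bare‑hands first‑moment computation: express $\E Z^{\alpha,\beta}_{\gt}(\eta)$ as an explicit product of factorials, do the same for $\E Z^{\alpha,\beta}_{\mathrm{MWW}}$ (quoting the formula from \cite{MWW:09}), and compare. I may assume $\alpha n,\beta n\in\Z$, since otherwise both sides vanish. A configuration counted in $Z^{\alpha,\beta}_{\gt}(\eta)$ has $\sigma_U=\eta$ and exactly $\alpha n$ (resp.\ $\beta n$) occupied vertices in $W^+$ (resp.\ $W^-$), so there are $\binom{n}{\alpha n}\binom{n}{\beta n}$ of them, each of weight $\lambda^{(\alpha+\beta)n+\eta^++\eta^-}$. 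Since $\gt$ is built from $d$ mutually independent random perfect matchings, for a fixed such $\sigma$ the probability $\P(\sigma\in I(\gt))$ factors as $p(\eta)^{d-1}q$: here $q$ is the probability that the one random matching of $W^+$ with $W^-$ (each of size $n$) creates no edge between two occupied vertices, and $p(\eta)$ is the analogous quantity for one of the $d-1$ random matchings of the enlarged side $W^+\cup U^+$ (size $N:=n+m'$) with $W^-\cup U^-$, which now carries $\alpha n+\eta^+$ and $\beta n+\eta^-$ occupied vertices. An elementary count gives $q=\binom{(1-\alpha)n}{\beta n}\big/\binom{n}{\beta n}$ and $p(\eta)=\binom{N-\alpha n-\eta^+}{\beta n+\eta^-}\big/\binom{N}{\beta n+\eta^-}$, hence
\[
\E Z^{\alpha,\beta}_{\gt}(\eta)=\binom{n}{\alpha n}\binom{n}{\beta n}\,\lambda^{(\alpha+\beta)n+\eta^++\eta^-}\,p(\eta)^{d-1}\,q ,
\]
whereas the same reasoning applied to MWW's $d$-matching graph gives $\E Z^{\alpha,\beta}_{\mathrm{MWW}}=\binom{n}{\alpha n}\binom{n}{\beta n}\,\lambda^{(\alpha+\beta)n}\,q^{d}$.

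Taking the quotient, the binomial coefficients and $\lambda^{(\alpha+\beta)n}$ cancel, so the lemma reduces to an asymptotic estimate of the single ratio $p(\eta)/q$, via
\[
\frac{\E Z^{\alpha,\beta}_{\gt}(\eta)}{\E Z^{\alpha,\beta}_{\mathrm{MWW}}}=\lambda^{\eta^++\eta^-}\left(\frac{p(\eta)}{q}\right)^{d-1}.
\]
I would split $p(\eta)/q=\bigl(p(\eta)/p(0)\bigr)\bigl(p(0)/q\bigr)$. The ground‑set‑enlargement factor $p(0)/q$ telescopes into a product of exactly $m'$ terms, the $i$-th being $\bigl((1-\alpha)n+i\bigr)\bigl((1-\beta)n+i\bigr)\big/\bigl[(n+i)\bigl((1-\alpha-\beta)n+i\bigr)\bigr]$, each within a factor $1+O(i/n)$ of $(1-\alpha)(1-\beta)/(1-\alpha-\beta)$; since $i\le m'\approx n^{\theta+\psi}$ with $\theta+\psi<1/4$, the accumulated error is $\exp(O((m')^2/n))=1+O(n^{-1/2})$, so $p(0)/q=(1+O(n^{-1/2}))\bigl((1-\alpha)(1-\beta)/(1-\alpha-\beta)\bigr)^{m'}$ — this is the origin of the constant $C^*$. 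The occupation‑shift factor $p(\eta)/p(0)$ is a ratio of products of at most $\eta^\pm\le m'$ consecutive integers clustered near $(1-\alpha)n,(1-\beta)n,(1-\alpha-\beta)n$; the same estimate shows it equals $(1+O(n^{-1/2}))\bigl((1-\alpha-\beta)/(1-\beta)\bigr)^{\eta^-}\bigl((1-\alpha-\beta)/(1-\alpha)\bigr)^{\eta^+}$, uniformly in $\eta$. Feeding these back and absorbing $\lambda^{\eta^++\eta^-}$ into the bracketed factors collects the $\eta$-dependence into the asserted form $\bigl(\lambda((1-\alpha-\beta)/(1-\beta))^{d-1}\bigr)^{\eta^-}\bigl(\lambda((1-\alpha-\beta)/(1-\alpha))^{d-1}\bigr)^{\eta^+}$, giving the stated identity.

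Everything above is routine; the one delicate point, and the main obstacle, is the uniformity of the error estimate — every accumulated multiplicative error must be $O(n^{-1/2})$ and uniform over all $2^{|U|}$ choices of $\eta$. This is exactly what forces $m'$ to be only polynomially small, $m'\approx n^{\theta+\psi}$ with $\theta,\psi<1/8$, so that $(m')^2/n=o(n^{-1/2})$; for a larger $m'$ both the ground‑set‑enlargement and the occupation‑shift factors would contribute errors worse than $O(n^{-1/2})$. The hypothesis $(\alpha,\beta)$ in the interior of $\mathcal{T}$ is needed only so that $\alpha n,\beta n,(1-\alpha-\beta)n$ all grow linearly in $n$, which validates the per‑term bounds above; lower‑order Stirling corrections never enter because the binomial coefficients cancel exactly and only ratios of factorials with arguments of size $\Theta(n)$ survive, whose subleading corrections agree up to $O(m'/n)=o(n^{-1/2})$.
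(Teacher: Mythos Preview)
Your approach is correct and essentially identical to the paper's: both write out the explicit first-moment formula, quote the corresponding MWW formula, and estimate the resulting ratio of binomial coefficients. The only cosmetic difference is that the paper packages your telescoping-product estimate into a reusable binomial-perturbation lemma (Lemma~\ref{l:bionmialPerturb}), whereas you carry it out by hand via the splitting $p(\eta)/q=(p(\eta)/p(0))(p(0)/q)$.
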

\begin{proof}
We follow the approach of~\cites{MWW:09} in estimating these quantities.  In total there are $\binom{n}{\alpha n} \binom{n}{\beta n}$ choices of configurations on $W$ with $\alpha n$ sites on the top and $\beta n$ sites on the bottom.  Then by calculating the probability that a perfect matching does not connect two 1's of the configuration we have that
\begin{equation}\label{e:gt1moment}
\E Z_{\gt}^{\alpha,\beta}(\eta) = \lambda^{\alpha n+\beta n + \eta^+ + \eta^-}  \binom{n}{\alpha n} \binom{n}{\beta n}
\left[ \frac{\binom{n+m'-\beta n -\eta^-}{\alpha n + \eta^+}}{\binom{n+m'}{\alpha n+\eta^+}} \right]^{d-1}
\frac{\binom{n-\beta n}{\alpha n}}{\binom{n}{\alpha n}},
\end{equation}
while by~\cite{MWW:09} we have that
\[
Z^{\alpha,\beta}_{\mathrm{MWW}}= \lambda^{\alpha n+\beta n}  \binom{n}{\alpha n} \binom{n}{\beta n}
\left[ \frac{\binom{n-\beta n}{\alpha n}}{\binom{n}{\alpha n}} \right]^{d}.
\]
Now since $|U| = O(n^{1/4})$ it follows from Lemma~\ref{l:bionmialPerturb} below that
\begin{align*}
\frac{\E Z_{\gt}^{\alpha,\beta}(\eta)}{\E Z^{\alpha,\beta}_{\mathrm{MWW}}}=(1+O(n^{-1/2})) C^*
\left(\lambda \left(\frac{1-\alpha-\beta}{1-\beta} \right)^{d-1}\right)^{\eta^-}
\left(\lambda \left(\frac{1-\alpha-\beta}{1-\alpha} \right)^{d-1}\right)^{\eta^+}.
\end{align*}
\end{proof}
To complete Lemma~\ref{l:ratio1Moment} we give the following lemma which a simple expansion of factorials which will use repeatedly  throughout this section.
\begin{lemma}\label{l:bionmialPerturb}
When $0<b<a$ are integers and $x^2+y^2 \leq \min\{b,a-b\}$ then
\begin{align}\label{e:bionmialPerturb}
\frac{\binom{a+x}{b+y}}{\binom{a}{b}} = \left(1+O\left(\frac{x^2+y^2}{\min\{b,a-b\}}\right)\right)
\left(\frac{a}{a-b}\right)^x \left(\frac{a-b}{b}\right)^y.
\end{align}
\end{lemma}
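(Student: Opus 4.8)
The plan is to expand the ratio of binomial coefficients as a product of three elementary factorial ratios and to estimate each factor by a first-order expansion of the logarithm.

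First I would write, for any admissible $(a,b,x,y)$,
\[
\frac{\binom{a+x}{b+y}}{\binom{a}{b}} = \frac{(a+x)!}{a!}\cdot\frac{b!}{(b+y)!}\cdot\frac{(a-b)!}{(a-b+x-y)!}.
\]
This identity is valid as soon as $b+y\ge 0$, $a-b+x-y\ge 0$ and $a+x-b-y\ge 0$; under the hypothesis $x^2+y^2\le\min\{b,a-b\}$ these hold once $\min\{b,a-b\}$ exceeds an absolute constant (since then $|x|,|y|,|x-y|$ are much smaller than $b$ and than $a-b$), and for the finitely many remaining values with $\min\{b,a-b\}$ bounded both $x^2+y^2$ and the ratio are bounded, so the asserted estimate holds trivially after enlarging the implicit constant.

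Next, for an integer $z$ with $|z|\le c/2$ I would use
\[
\frac{(c+z)!}{c!} = c^{z}\exp\!\Big(\textstyle\sum_{i} \pm\log(1\pm i/c)\Big),
\]
where the sum has $|z|$ terms and the signs are chosen according to whether $z\ge 0$ or $z<0$. Since $|\log(1+t)|\le 2|t|$ for $|t|\le\tfrac12$, the exponent is at most $2\sum_{i=1}^{|z|} i/c\le 2z^2/c$ in absolute value, and because $z^2/c$ is bounded in each application, $e^{O(z^2/c)}=1+O(z^2/c)$; hence $\frac{(c+z)!}{c!}=c^{z}\,(1+O(z^2/c))$. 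Applying this with $(c,z)$ equal to $(a,x)$, $(b,y)$ and $(a-b,\,x-y)$, and using that $x^2/a$, $y^2/b$ and $(x-y)^2/(a-b)$ are each $O\big((x^2+y^2)/\min\{b,a-b\}\big)$ — here one uses $a,b,a-b\ge\min\{b,a-b\}$ and $(x-y)^2\le 2(x^2+y^2)$ — I would obtain
\[
\frac{\binom{a+x}{b+y}}{\binom{a}{b}} = \Big(1+O\big(\tfrac{x^2+y^2}{\min\{b,a-b\}}\big)\Big)\,a^{x}\,b^{-y}\,(a-b)^{-(x-y)},
\]
and regrouping $a^{x}(a-b)^{-(x-y)}b^{-y}=\big(\tfrac{a}{a-b}\big)^{x}\big(\tfrac{a-b}{b}\big)^{y}$ gives exactly \eqref{e:bionmialPerturb}.

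The computation is routine; the only points that need attention are keeping the sign conventions straight when $x$ or $y$ is negative, and verifying that every relative error is $O\big((x^2+y^2)/\min\{b,a-b\}\big)$ rather than merely $O\big((x^2+y^2)/a\big)$ — which is precisely why the hypothesis is stated with $\min\{b,a-b\}$ in the denominator. I do not anticipate any genuine obstacle; the lemma's purpose is simply to make the repeated factorial manipulations in the rest of Section~\ref{s:partition} clean, so the only real care is in stating the error bound uniformly over all admissible $(a,b,x,y)$.
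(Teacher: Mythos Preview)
Your proposal is correct and follows essentially the same approach as the paper: both decompose the ratio as $\frac{(a+x)!}{a!}\cdot\frac{b!}{(b+y)!}\cdot\frac{(a-b)!}{(a-b+x-y)!}$ and estimate each factorial ratio $(c+z)!/c!$ as $c^{z}(1+O(z^2/c))$. The paper writes the correction directly as a product $\prod_i(1+i/c)$ while you pass through the logarithm, but this is a cosmetic difference; if anything, your treatment of the sign conventions and the justification that each relative error is controlled by $(x^2+y^2)/\min\{b,a-b\}$ is more carefully spelled out than in the paper's terse three-line version.
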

\begin{proof}
By expanding out factorials we have that
\begin{align*}
\frac{\binom{a+x}{b+y}}{\binom{a}{b}} &= \frac{(a+x)!}{a!}\frac{b!}{(b+y)!}\frac{(a-b)!}{(a-b+x-y)!}\nonumber\\
&= \left( a^x \prod_{i=1}^x (1+\frac{i}{a})\right)\left( b^{-y} \prod_{i=1}^y (1-\frac{i}{b}) \right)\left((a-b)^{x-y} \prod_{i=0}^{x-y} (1+\frac{i}{a-b}) \right) \\
&= \left(1+O\left(\frac{x^2+y^2}{\min\{b,a-b\}}\right)\right)
\left(\frac{a}{a-b}\right)^x \left(\frac{a-b}{b}\right)^y.
\end{align*}
\end{proof}

We now sum over $(\alpha,\beta)$ and define the conditional partition functions as
\begin{align*}
Z_{\gt}^{+}(\eta) &= \sum_{\alpha\geq \beta} Z_{\gt}^{\alpha,\beta}(\eta)= \sum_{\sigma:\sigma_U=\eta,\sum_{w\in W^+} \sigma_{w}\geq \sum_{w\in W^-} \sigma_{w}}\lambda^{|\sigma|}\\
Z_{\gt}^{-}(\eta) &= \sum_{\alpha < \beta} Z_{\gt}^{\alpha,\beta}(\eta)= \sum_{\sigma:\sigma_U=\eta,\sum_{w\in W^+} \sigma_{w}< \sum_{w\in W^-} \sigma_{w}}\lambda^{|\sigma|}
\end{align*}
and $Z^\pm_{\gt}=\sum_\eta Z^\pm_{\gt}(\eta)$.
\begin{lemma}\label{l:gtExpectedPartition}
For every $d\geq 3$ there exists  constants $\theta^*(\lambda,d),\psi^*(\lambda,d)>0$ such that when $\lambda_c(d) < \lambda$ and $0<\theta(\lambda,d)<\theta^*(\lambda,d),0<\psi(\lambda,d)<\psi^*(\lambda,d)$ then the expected partition functions satisfy
\begin{equation}\label{e:gtEZa}
\sup_\eta \left| \frac{\E Z^\pm_{\gt}(\eta)}{\E Z^\pm_{\gt}} - Q_U^\pm(\eta) \right| = o(1)
\end{equation}
and
\begin{equation}\label{e:gtEZb}
\E Z^+_{\gt}=(1+o(1)) \E Z^-_{\gt}.
\end{equation}
\end{lemma}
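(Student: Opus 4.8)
\emph{Proof plan.}
The plan is to use Lemma~\ref{l:ratio1Moment} to transfer both claims to the first--moment asymptotics of $Z^{\alpha,\beta}_{\mathrm{MWW}}$ from \cite{MWW:09}. Abbreviate by $\Psi(\alpha,\beta;\eta)$ the $(\alpha,\beta,\eta)$--dependent prefactor appearing in Lemma~\ref{l:ratio1Moment}, so that $\E Z^{\alpha,\beta}_{\gt}(\eta)=(1+o(1))\,\Psi(\alpha,\beta;\eta)\,\E Z^{\alpha,\beta}_{\mathrm{MWW}}$, with the $o(1)$ uniform over $\eta$ and over $(\alpha,\beta)$ in a fixed compact subset of the interior of $\mathcal{T}$. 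Summing over $\alpha\ge\beta$ produces $\E Z^+_{\gt}(\eta)$, and the whole point is that this sum concentrates near the dominant point $(p^+,p^-)$, where $\Psi(\alpha,\beta;\eta)$ may be replaced by $\Psi(p^+,p^-;\eta)$ and pulled outside the sum.

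I would first recall from \cite{MWW:09} --- its Laplace analysis of the first moment, together with fact~(2) of Section~\ref{s:preliminaries}, that $\Phi_1$ attains its maximum over $\mathcal{T}$ exactly at $(p^+,p^-)$ and $(p^-,p^+)$, nondegenerately for fixed $\lambda>\lambda_c$ --- that $\E Z^{\alpha,\beta}_{\mathrm{MWW}}=e^{\,n\Phi_1(\alpha,\beta)+O(\log n)}$ and that, restricted to $\alpha\ge\beta$ (where the unique maximiser is $(p^+,p^-)$, as $p^+>p^-$), the sum is, up to $(1+o(1))$, the part with $\|(\alpha,\beta)-(p^+,p^-)\|\le n^{-1/2+\delta}$, the remainder losing a factor $e^{-\Omega(n^{2\delta})}$ in the quadratic regime and $e^{-\Omega(n)}$ outside a fixed neighbourhood. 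On the near window, since $\Psi(\,\cdot\,;\eta)$ is continuous and positive near $(p^+,p^-)$ and is raised only to powers $\eta^\pm\le m'=O(n^{\theta+\psi})$, one gets $\Psi(\alpha,\beta;\eta)/\Psi(p^+,p^-;\eta)=(1+O(n^{-1/2+\delta}))^{O(m')}=1+o(1)$ uniformly in $\eta$ provided $\theta+\psi+\delta<\tfrac12$. On the far region I would use crude bounds from \eqref{e:gt1moment}: for $(\alpha,\beta)$ bounded away from $\partial\mathcal{T}$, Lemma~\ref{l:ratio1Moment} (with $m'=o(\sqrt n)$) gives $\E Z^{\alpha,\beta}_{\gt}(\eta)\le e^{o(n)}\E Z^{\alpha,\beta}_{\mathrm{MWW}}$, while in a shrinking neighbourhood of $\partial\mathcal{T}$ one reads off from \eqref{e:gt1moment} directly that $\E Z^{\alpha,\beta}_{\gt}(\eta)\le e^{\,n\log(1+\lambda)+o(n)}$; since $\log(1+\lambda)=\max_{\partial\mathcal{T}}\Phi_1<\Phi_1(p^+,p^-)$ (the maximum on $\partial\mathcal{T}$ is at $(0,\tfrac{\lambda}{1+\lambda})$, which is not a global maximiser), both bounds make the far contribution $e^{-\Omega(n)}$ times the near one, provided $\theta+\psi<2\delta$. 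A valid $\delta$ exists whenever $\theta+\psi<\tfrac13$, and this fixes $\theta^*,\psi^*$. Hence $\E Z^+_{\gt}(\eta)=(1+o(1))\,\Psi(p^+,p^-;\eta)\,\E Z^+_{\mathrm{MWW}}$ uniformly in $\eta$. Now \eqref{e:qTreeRelation} gives $\lambda\big((1-p^+-p^-)/(1-p^\pm)\big)^{d-1}=q^\pm/(1-q^\pm)$, so $\Psi(p^+,p^-;\eta)$ is proportional to $(q^+/(1-q^+))^{\eta^+}(q^-/(1-q^-))^{\eta^-}$; summing over $\eta$ contributes the normalisation $(1-q^+)^{-m'}(1-q^-)^{-m'}$ (times an $\eta$--free factor), and dividing yields $\E Z^+_{\gt}(\eta)/\E Z^+_{\gt}=(1+o(1))(q^+)^{\eta^+}(1-q^+)^{m'-\eta^+}(q^-)^{\eta^-}(1-q^-)^{m'-\eta^-}=(1+o(1))\,Q_U^+(\eta)$, which is \eqref{e:gtEZa} for the plus phase; the minus phase is the same computation localised at $(p^-,p^+)$, where $\Psi$ becomes proportional to $(q^+/(1-q^+))^{\eta^-}(q^-/(1-q^-))^{\eta^+}$, matching $Q_U^-$. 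Finally \eqref{e:gtEZb} follows because $\E Z^\pm_{\gt}=(1+o(1))\left(\frac{(1-p^+)(1-p^-)}{1-p^+-p^-}\right)^{m'}(1-q^+)^{-m'}(1-q^-)^{-m'}\,\E Z^\pm_{\mathrm{MWW}}$ with a prefactor invariant under $(p^+,p^-)\leftrightarrow(p^-,p^+)$, while $\E Z^+_{\mathrm{MWW}}=(1+o(1))\E Z^-_{\mathrm{MWW}}$ since $Z^{\alpha,\beta}_{\mathrm{MWW}}$ is symmetric in $\alpha,\beta$ and the diagonal $\alpha=\beta$ is negligible.

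The main obstacle I anticipate is exactly the control of $\Psi$ on the near window: its fluctuations get amplified to the $m'$--th power, so it does not suffice to know the exponential order of $\E Z^{\alpha,\beta}_{\mathrm{MWW}}$ --- one genuinely needs the sharp, Gaussian--width concentration $\|(\alpha,\beta)-(p^+,p^-)\|\lesssim n^{-1/2+\delta}$ coming from the nondegeneracy of the Hessian of $\Phi_1$ at $(p^+,p^-)$, and one must also check that the $o(1)$ error in Lemma~\ref{l:ratio1Moment} stays $o(1)$ uniformly on this shrinking window. This interplay, together with the crude boundary bounds needing $m'=o(\sqrt n)$, is precisely what constrains $\theta,\psi$ to a small range and is why the statement only asserts the existence of $\theta^*,\psi^*$.
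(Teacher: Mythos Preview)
Your proposal is correct and follows essentially the same route as the paper's own proof: use Lemma~\ref{l:ratio1Moment} to write $\E Z^{\alpha,\beta}_{\gt}(\eta)$ as a smooth prefactor times $\E Z^{\alpha,\beta}_{\mathrm{MWW}}$, localise the sum over $\alpha\ge\beta$ near $(p^+,p^-)$ using the first--moment Laplace analysis, replace the prefactor by its value at $(p^+,p^-)$ on the near window, invoke~\eqref{e:qTreeRelation} to identify that value with $Q_U^+$, and deduce~\eqref{e:gtEZb} from the $\alpha\leftrightarrow\beta$ symmetry of $Z^{\alpha,\beta}_{\mathrm{MWW}}$ and negligibility of the diagonal.

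The one genuine difference is how you localise. You invoke the nondegenerate Hessian of $\Phi_1$ at $(p^+,p^-)$ (which is indeed established in~\cite{MWW:09}) to get a Gaussian window of width $n^{-1/2+\delta}$, and then require $\theta+\psi<\tfrac12$ so that the $O(m')$--fold amplification of the prefactor error stays $o(1)$. The paper instead argues more conservatively from analyticity of $\Phi_1$ along the maximising curves $\alpha\mapsto(\alpha,h(\alpha))$ and $\beta\mapsto(h(\beta),\beta)$, obtaining only $\Phi_1(p^+,p^-)-\Phi_1(\alpha,\beta)\ge C\|(\alpha,\beta)-(p^+,p^-)\|_\infty^\ell$ for some unspecified $\ell\ge2$, hence a window $n^{-1/(2\ell)}$ and the choice $\theta^*=\psi^*=\tfrac{1}{5\ell}$. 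Your version is sharper and gives a cleaner constraint on $\theta,\psi$; the paper's version avoids having to cite the Hessian computation at the cost of a possibly smaller $\theta^*,\psi^*$. Your explicit treatment of the contribution near $\partial\mathcal{T}$ (where Lemma~\ref{l:ratio1Moment} is not directly available) is also a point the paper glosses over.
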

\begin{proof}
Recall from Lemma~\ref{l:ratio1Moment} that
\begin{equation}\label{e:gtEZ1}
\E Z^{\alpha,\beta}_{\gt}(\eta)= (1+O(n^{-1/2})) C^* \left(\lambda \left(\frac{1-\alpha-\beta}{1-\beta} \right)^{d-1}\right)^{\eta^-}
\left(\lambda \left(\frac{1-\alpha-\beta}{1-\alpha} \right)^{d-1}\right)^{\eta^+} \E Z^{\alpha,\beta}_{\mathrm{MWW}}
\end{equation}
and that by \cite{MWW:09}*{Proposition 3.1},
\[
\E Z^{\alpha,\beta}_{\mathrm{MWW}} \approx \exp(\Phi_1(\alpha,\beta) n)
\]
where the approximation holds up to a polynomial factor in $n$.  In the proof of \cite{DFJ:02}*{Claim 2.2} it is shown that for fixed $\alpha$ (resp. $\beta$) $\Phi_1$ is maximized by setting $\beta=h(\alpha)$ (resp. $\alpha=h(\beta)$) where $h(x)=(1-x)[1-(x /(\lambda(1-x)))^{1/d}]$ was defined in the Section~\ref{s:preliminaries}.  Recall that in $\{(\alpha,\beta)\in \mathcal{T}: \alpha \geq \beta \}$ the function $\Phi_1$ is maximized at $(p^+,p^-)$.  Clearly we have that the functions $\Phi_1(\alpha,h(\alpha))$ (resp. $\Phi_1(h(\beta),\beta)$) are analytic in $\alpha$ (resp. $\beta$) when in a neighbourhood of $p^+$ (resp. $p^-$).  It follows by expanding as a Taylor series and noting that $(p^+,p^-)$ is a local maxima that for some integer $\ell \geq 2$ and constants $C,\epsilon>0 $ we have that
\[
\left|\Phi_1(\alpha,h(\alpha)) - \Phi_1(p^+,p^-)\right|\geq C|\alpha - p^+|^\ell
\]
when $|\alpha - p^+|\leq \epsilon$ and similarly for $\Phi_1(h(\beta),\beta)$.  This of course implies that when $\|(\alpha,\beta)-(p^+,p^-)\|_\infty \leq \epsilon$ then
\[
\left|\Phi_1(\alpha,\beta) - \Phi_1(p^+,p^-)\right|\geq C\|(\alpha,\beta) - (p^+,p^-)\|_\infty^\ell.
\]
Hence it follows that for large $n$,
\begin{equation}\label{e:gtEZ2}
\sum_{\alpha\geq \beta,\|(\alpha,\beta) - (p^+,p^-)\|_\infty > n^{-\frac1{2\ell}}} Z_{\gt}^{\alpha,\beta}(\eta) \leq \exp\left(-\frac{C}{2}n^{1/2}\right)\E Z^{+}_{\gt}(\eta).
\end{equation}
Setting $\theta^*(\lambda,d)=\psi^*(\lambda,d)=\frac1{5\ell}$ we have that $|U|\leq n^{\frac2{5\ell}}$ and hence
\begin{align}\label{e:gtEZ3}
\left(\lambda \left(\frac{1-\alpha-\beta}{1-\alpha} \right)^{d-1}\right)^{\eta^+}&= (1+o(1))\left(\lambda \left(\frac{1-p^+ -p^-}{1-p^+} \right)^{d-1}\right)^{\eta^+},\nonumber\\
\left(\lambda \left(\frac{1-\alpha-\beta}{1-\beta} \right)^{d-1}\right)^{\eta^-}&= (1+o(1))\left(\lambda \left(\frac{1-p^+ -p^-}{1-p^-} \right)^{d-1}\right)^{\eta^-}.
\end{align}
Combining equations~\eqref{e:gtEZ1}, \eqref{e:gtEZ2} and~\eqref{e:gtEZ3} we establish that
\begin{align}\label{e:gtEZ4}
\E Z^+_{\gt}(\eta) &= (1+o(1)) C^*\left(\lambda \left(\frac{1-p^+ -p^-}{1-p^-} \right)^{d-1}\right)^{\eta^-} \left(\lambda \left(\frac{1-p^+ -p^-}{1-p^+} \right)^{d-1}\right)^{\eta^+} \E Z^{+}_{\mathrm{MWW}}\nonumber\\
&=(1+o(1))  C^* \left(\frac{q^-}{1-q^-}\right)^{\eta^-} \left(\frac{q^+}{1-q^+} \right)^{\eta^+} \E Z^{+}_{\mathrm{MWW}}\nonumber\\
&=\frac{(1+o(1))C^*}{(1-q^+)^{m'}(1-q^-)^{m'}}  Q^+_U(\eta) \E Z^{+}_{\mathrm{MWW}}
\end{align}
where $\E Z^{+}_{\mathrm{MWW}}$ denotes $\sum_{\alpha\geq \beta} \E Z^{\alpha,\beta}_{\mathrm{MWW}}$, where the second line follows by equation~\eqref{e:pqRelation} and the final equality holds by the definition of $Q^+_U$.  Hence we have that
\[
\E Z^+_{\gt}=\frac{(1+o(1))C^*}{(1-q^+)^{m'}(1-q^-)^{m'}}\E Z^{+}_{\mathrm{MWW}}
\]
and that
\[
\sup_\eta \left| \frac{\E Z^+_{\gt}(\eta)}{\E Z^\pm_{\gt}} - Q_U^\pm(\eta) \right| = o(1)
\]
The analogous bound holds for $Z^-_{\gt}(\eta)$ which establishes equation~\eqref{e:gtEZa}.

Note that because of the slight asymmetry in the definition of $Z^+_{\gt}$ and $Z^-_{\gt}$ equation~\eqref{e:gtEZb} is not immediate by symmetry.  It follows from the fact that when $\lambda>\lambda_c$  by symmetry we have that
\[
\sum_{\alpha} \E Z^{\alpha,\alpha}_{\gt} =  \E Z^+_{\gt} - \E Z^-_{\gt}
\]
and
\[
\sum_{\alpha} \E Z^{\alpha,\alpha}_{\gt} \leq \exp(-\Omega(n)) \E Z^+_{\gt}
\]
since the maxima of $\Phi_1(\alpha,\beta)$ is not achieved with $\alpha=\beta$.  This completes the lemma.
\end{proof}

\subsection{Second Moment Analysis}
We now proceed to analyze the second moment of the partition function.
In~\cites{MWW:09} they showed that the second moment is given by
\begin{align}\label{e:MWW2Moment}
\E \left[Z_{\mathrm{MWW}}^{\alpha,\beta}(\eta)\right]^2 &= \lambda^{2(\alpha+\beta)} \binom{n}{\alpha n} \binom{n}{\beta n}
\sum_{\gamma,\delta} \binom{\alpha n}{\gamma n} \binom{(1-\alpha)n}{(\alpha-\gamma) n} \binom{\beta n}{\delta n} \binom{(1-\beta)n}{(\beta-\delta) n}\nonumber\\
&\quad \cdot  \left[ \frac{\binom{(1-2\beta+\delta)n}{\gamma n}}{\binom{n}{\gamma n}}
\sum_{\epsilon} \frac{\binom{(1-2\beta+\delta-\gamma)n}{\epsilon n}\binom{(\beta-\delta)n}{(\alpha-\gamma-\epsilon)n}}{\binom{(1-\gamma)n}{(\alpha-\gamma)n}}
\frac{\binom{(1-\beta-\gamma-\epsilon)n}{(\alpha-\gamma)n}}{\binom{(1-\alpha)n}{(\alpha-\gamma)n}}
\right]^d
\end{align}
where the sums run over $\gamma,\delta,\epsilon$ such that $\gamma n,\delta n,\epsilon n$ and equation~\eqref{e:greekRegion} holds.
Equation~\eqref{e:MWW2Moment} should be interpreted as follows:  The first line represents the number of ways of choosing a pair of configurations, both with size $\alpha$ on the  plus side and $\beta$ on the minus side with overlaps of $\gamma$ on the plus side and $\delta$ on the minus side.  The second line gives the probability that the pair of configurations are both independent sets in the random graph (see~\cite{MWW:09} for the interpretation of the sum).  Here the role of Condition~\ref{cond:technical} comes into play.  A simple approximation gives that
\begin{align*}
\exp(nf(\alpha,\beta,\gamma,\delta,\epsilon)) &\approx \lambda^{2(\alpha+\beta)} \binom{n}{\alpha n} \binom{n}{\beta n}
\binom{\alpha n}{\gamma n} \binom{(1-\alpha)n}{(\alpha-\gamma) n} \binom{\beta n}{\delta n} \binom{(1-\beta)n}{(\beta-\delta) n}\nonumber\\
&\quad \cdot  \left[ \frac{\binom{(1-2\beta+\delta)n}{\gamma n}}{\binom{n}{\gamma n}}
\frac{\binom{(1-2\beta+\delta-\gamma)n}{\epsilon n}\binom{(\beta-\delta)n}{(\alpha-\gamma-\epsilon)n}}{\binom{(1-\gamma)n}{(\alpha-\gamma)n}}
\frac{\binom{(1-\beta-\gamma-\epsilon)n}{(\alpha-\gamma)n}}{\binom{(1-\alpha)n}{(\alpha-\gamma)n}}
\right]^d
\end{align*}
up to polynomial terms in $n$.  As such the maximum of $f$ plays a crucial role in the second moment analysis.
The following result is by~\cite{MWW:09}*{Lemma 3.3}. While they only stated their result for $(\alpha,\beta)$ close to $(1/d,1/d)$ it is easy to verify that their proof holds in a neighbourhood of $(p^-,p^+)$ whenever Condition~\ref{cond:technical} holds.

\begin{lemma}[\cite{MWW:09}*{Lemma 3.3}]\label{l:MWWvarRatio}
For each $d\geq3$ suppose that Condition~\ref{cond:technical} holds.  Then there exists some $\chi>0$ such that when $|\alpha-p^-|,|\beta-p^+| < \chi$ we have that,
\[
\frac{\E \left(Z^{\alpha,\beta}_{\mathrm{MWW}}\right)^2}{\left( \E Z^{\alpha,\beta}_{\mathrm{MWW}}\right)^2} \to \tau^{\alpha,\beta}
\]
where
\[
\tau^{\alpha,\beta}=\frac{(1-\alpha-\beta-\alpha\beta)^d}{[(1-\alpha-\beta+2\alpha\beta)(1-\alpha-\beta)]^{\frac{d-1}{2}}
[(1-\alpha-\beta+d\alpha\beta)(1-\alpha-\beta-(d-2)\alpha\beta)]^{\tfrac12}}.
\]
\end{lemma}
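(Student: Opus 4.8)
The plan is to reduce the claim to \cite{MWW:09}*{Lemma 3.3} by isolating the one place in that proof where the hypothesis $(\alpha,\beta)\approx(1/d,1/d)$ is actually used, namely to locate (and establish the non-degeneracy of) the maximizer of $f$, and to observe that Condition~\ref{cond:technical} supplies exactly the same input on a neighbourhood of $(p^-,p^+)$. Concretely, MWW's argument is a three-dimensional Laplace (saddle-point) analysis of the sum in \eqref{e:MWW2Moment}. First I would recall that, up to polynomial-in-$n$ factors, the $(\gamma,\delta,\epsilon)$-summand of $\E(Z^{\alpha,\beta}_{\mathrm{MWW}})^2$ is $\exp(n f(\alpha,\beta,\gamma,\delta,\epsilon))$ on the lattice $\{\gamma n,\delta n,\epsilon n\in\Z\}$ intersected with the region \eqref{e:greekRegion}, while $\E Z^{\alpha,\beta}_{\mathrm{MWW}}\approx\exp(n\Phi_1(\alpha,\beta))$ in the same sense. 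A direct (and purely algebraic) computation shows that $f(\alpha,\beta,\gamma^*,\delta^*,\epsilon^*)=2\Phi_1(\alpha,\beta)$ at the ``product'' point $(\gamma^*,\delta^*,\epsilon^*)=(\alpha^2,\beta^2,\alpha(1-\alpha-\beta))$ — this identity is precisely why the second moment method has a chance of succeeding — so if $f$ attains its maximum over \eqref{e:greekRegion} exactly there, the exponential orders of $\E(Z^{\alpha,\beta}_{\mathrm{MWW}})^2$ and $(\E Z^{\alpha,\beta}_{\mathrm{MWW}})^2$ coincide and the ratio is governed by the Gaussian fluctuations around the maximizer.

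\textbf{Key steps.} In order: (i) invoke Condition~\ref{cond:technical} to produce $\chi>0$ so that for $|\alpha-p^-|,|\beta-p^+|<\chi$ the function $g_{\alpha,\beta}=f(\alpha,\beta,\cdot,\cdot,\cdot)$ has its \emph{unique} maximum on \eqref{e:greekRegion} at $(\gamma^*,\delta^*,\epsilon^*)$, noting that this point lies in the interior of \eqref{e:greekRegion} whenever $\alpha,\beta>0$ and $\alpha+\beta<1$, which holds near $(p^-,p^+)$ by item (1) of Section~\ref{s:preliminaries}; (ii) verify that the Hessian $\nabla^2 g_{\alpha,\beta}$ at the maximizer is negative definite — an explicit $3\times3$ computation — and observe that, being continuous in $(\alpha,\beta)$, it suffices to check this in the limit $\lambda\downarrow\lambda_c$ where $(\alpha,\beta)\to(1/d,1/d)$, which is part of \cite{MWW:09}; (iii) split the lattice sum into the $n^{-1/2+\epsilon_0}$-neighbourhood of the maximizer and its complement, bounding the complement by $e^{-\Omega(n^{2\epsilon_0})}$ times the main term via the strict separation of the maximum value from step (i) together with compactness of \eqref{e:greekRegion}; (iv) on the neighbourhood, Taylor-expand $n g_{\alpha,\beta}$ to second order and replace the lattice sum by the Gaussian integral $(2\pi)^{3/2}\bigl(\det(-\nabla^2 g_{\alpha,\beta})\bigr)^{-1/2}$, carrying along the polynomial prefactors of \eqref{e:MWW2Moment} and the $\binom{n}{\alpha n}\binom{n}{\beta n}$ factors; (v) perform the parallel (standard) Laplace analysis for $(\E Z^{\alpha,\beta}_{\mathrm{MWW}})^2$ and form the ratio: the exponential factors cancel by the identity $f(\alpha,\beta,\gamma^*,\delta^*,\epsilon^*)=2\Phi_1(\alpha,\beta)$, the $\binom{n}{\alpha n}\binom{n}{\beta n}$ factors cancel, and the surviving ratio of Gaussian normalizations and combinatorial prefactors simplifies to the closed form $\tau^{\alpha,\beta}$.

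\textbf{Main obstacle.} The genuinely laborious part is step (v): tracking every polynomial-in-$n$ correction — from Stirling's formula, from the lattice-to-integral passage, from the inner $\sum_\epsilon$ sitting inside a $d$-th power in \eqref{e:MWW2Moment}, and from the shared binomial factors — and confirming that they cancel exactly to leave the clean expression $\tau^{\alpha,\beta}$. Fortunately this computation is \emph{identical} to the one in \cite{MWW:09}, because $\tau^{\alpha,\beta}$ depends on $(\alpha,\beta)$ only through the same analytic expressions that appear there; so the substantive work is the verification — which I would record as a short auxiliary lemma — that MWW's derivation of the limiting constant uses nothing about $(\alpha,\beta)=(1/d,1/d)$ beyond the location and non-degeneracy of the maximizer established in steps (i)–(ii). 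Since Condition~\ref{cond:technical} guarantees both of these on a neighbourhood of $(p^-,p^+)$, the conclusion follows with the same constant $\tau^{\alpha,\beta}$.
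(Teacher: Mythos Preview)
Your proposal is correct and matches the paper's approach exactly: the paper does not give an independent proof of this lemma but simply cites \cite{MWW:09}*{Lemma 3.3} and remarks that ``while they only stated their result for $(\alpha,\beta)$ close to $(1/d,1/d)$ it is easy to verify that their proof holds in a neighbourhood of $(p^-,p^+)$ whenever Condition~\ref{cond:technical} holds,'' which is precisely the reduction you spell out in steps (i)--(v). One small caveat: your step~(ii) justifies non-degeneracy of the Hessian by continuity from the limit $\lambda\downarrow\lambda_c$, which only covers $\lambda$ near $\lambda_c$; for fixed $\lambda$ (e.g.\ $\lambda=1$, $d=6$) you should instead invoke MWW's explicit Hessian formulas at $(\gamma^*,\delta^*,\epsilon^*)$, which are valid for general $(\alpha,\beta)$ in the interior of $\mathcal{T}$, or the direct check in Section~\ref{s:computerAssit}.
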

Next we show  the analogous result for $\gt$ conditioned on $\sigma_U$ by estimating the ratio of the second moments of the partition functions of the graphs.
\begin{lemma}\label{l:gt2Moment}
For each $d\geq3$ suppose that Condition~\ref{cond:technical} holds.  Then there exists some $\chi>0$ such that when $|\alpha-p^-|,|\beta-p^+| < \chi$ we have that for all $\eta\in\{0,1\}^U$,
\begin{equation}\label{e:ratio2Moment}
\frac{\E \left(Z^{\alpha,\beta}_{\gt}(\eta)\right)^2}{\E \left(Z^{\alpha,\beta}_{\mathrm{MWW}}\right)^2} =  (1 + o(1))(C^*)^2\left(\lambda \left(\frac{1-\alpha-\beta}{1-\beta} \right)^{d-1}\right)^{2\eta^-}
\left(\lambda \left(\frac{1-\alpha-\beta}{1-\alpha} \right)^{d-1}\right)^{2\eta^+},
\end{equation}
and hence
\[
\frac{\E\left(Z^{\alpha,\beta}_{\gt}(\eta)\right)^2}{\left(\E Z^{\alpha,\beta}_{\gt}(\eta) \right)^2} \to \tau^{\alpha,\beta}.
\]
\end{lemma}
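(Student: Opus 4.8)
The plan is to reduce the second-moment estimate for $\gt$ conditioned on $\sigma_U=\eta$ to the corresponding MWW quantity, exactly as Lemma~\ref{l:ratio1Moment} did for the first moment. First I would write down the explicit formula for $\E(Z^{\alpha,\beta}_{\gt}(\eta))^2$ in the style of equation~\eqref{e:MWW2Moment}: the count of ordered pairs of configurations on $W^+\cup W^-$ of sizes $\alpha n, \beta n$ with overlaps $\gamma n,\delta n$ is unchanged, but the probability that both are simultaneously independent sets now involves the $n+m'$ vertices on each side (rather than $n$), the deterministic edge $W^+\leftrightarrow W^-$, and the constraint that both configurations agree with $\eta$ on $U^+\cup U^-$. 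Concretely, the $d-1$ random matchings of $W^+\cup U^+$ with $W^-\cup U^-$ contribute binomial ratios of the form $\binom{n+m'-(\dots)}{(\dots)}/\binom{n+m'}{(\dots)}$ with the sizes shifted by $\eta^\pm$, while the single $W^+\leftrightarrow W^-$ matching contributes the same factor as in MWW since $U$ is not involved there. Taking the ratio of this formula to~\eqref{e:MWW2Moment} term-by-term in $\gamma,\delta,\epsilon$, all the ``MWW part'' cancels and one is left with a product of ratios $\binom{a+x}{b+y}/\binom{a}{b}$ with $|x|,|y|\le m' = O(n^{1/4})$.

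The second step is to control these ratios uniformly. By Lemma~\ref{l:bionmialPerturb}, each such ratio is $(1+O((x^2+y^2)/\min\{b,a-b\})) (a/(a-b))^x((a-b)/b)^y$; since $\min\{b,a-b\}=\Theta(n)$ for $(\alpha,\beta)$ in the interior of $\mathcal T$ and $x^2+y^2 = O(n^{1/2})$, the multiplicative error is $1+O(n^{-1/2})=1+o(1)$, and the leading factors multiply out to exactly the claimed expression $(C^*)^2\big(\lambda((1-\alpha-\beta)/(1-\beta))^{d-1}\big)^{2\eta^-}\big(\lambda((1-\alpha-\beta)/(1-\alpha))^{d-1}\big)^{2\eta^+}$ — note this is precisely the square of the first-moment correction factor from Lemma~\ref{l:ratio1Moment}, which is the algebraic reason the final ratio $\E(Z^{\alpha,\beta}_{\gt}(\eta))^2/(\E Z^{\alpha,\beta}_{\gt}(\eta))^2$ is unaffected by conditioning on $\eta$. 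One subtlety: the error bound must be uniform over the summation indices $\gamma,\delta,\epsilon$ and over $\eta$, but since the exponents appearing are all at most $m'$ in absolute value and the denominators are all linear in $n$ throughout the relevant region~\eqref{e:greekRegion}, uniformity is automatic. Here I would invoke the same observation used in~\cite{MWW:09}*{Lemma 3.3} that the sum over $\gamma,\delta,\epsilon$ concentrates (up to polynomial factors) near the maximizer of $f$, which by Condition~\ref{cond:technical} is $(\alpha^2,\beta^2,\alpha(1-\alpha-\beta))$ — but in fact for the \emph{ratio} statement~\eqref{e:ratio2Moment} we do not even need this concentration, only the termwise bound, so the ratio identity is somewhat robust.

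Finally, combining~\eqref{e:ratio2Moment} with the first-moment estimate of Lemma~\ref{l:ratio1Moment} gives
\[
\frac{\E(Z^{\alpha,\beta}_{\gt}(\eta))^2}{(\E Z^{\alpha,\beta}_{\gt}(\eta))^2}
= (1+o(1))\,\frac{\E(Z^{\alpha,\beta}_{\mathrm{MWW}})^2}{(\E Z^{\alpha,\beta}_{\mathrm{MWW}})^2} \longrightarrow \tau^{\alpha,\beta}
\]
by Lemma~\ref{l:MWWvarRatio}, since the $\eta$-dependent factors and the $C^*$-factors appear squared in the numerator and the square of their first-moment appearance in the denominator, hence cancel exactly. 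The main obstacle I anticipate is purely bookkeeping: writing the second-moment formula for $\gt(\eta)$ correctly, in particular getting the shifts by $\eta^+$ and $\eta^-$ in the right binomial coefficients across the $d-1$ matchings versus the single extra matching, and then verifying that every factor not involving $\eta$ cancels against~\eqref{e:MWW2Moment}. There is no genuinely hard estimate here beyond Lemma~\ref{l:bionmialPerturb}; the content is entirely in the cancellation, and the reason to trust it is the parallel with Lemma~\ref{l:ratio1Moment}, where the same correction factor (to the first power) already appeared.
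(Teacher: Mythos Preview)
Your overall plan---write the explicit second-moment formula for $\gt$ conditioned on $\eta$, compare termwise to equation~\eqref{e:MWW2Moment} via Lemma~\ref{l:bionmialPerturb}, then combine with Lemma~\ref{l:ratio1Moment} and Lemma~\ref{l:MWWvarRatio}---is exactly the paper's route. However, one assertion is wrong and it is the one place where Condition~\ref{cond:technical} actually enters.

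You claim that after applying Lemma~\ref{l:bionmialPerturb} the leading factors ``multiply out to exactly the claimed expression'' and that therefore ``for the \emph{ratio} statement~\eqref{e:ratio2Moment} we do not even need this concentration, only the termwise bound.'' This is false: the termwise ratio of integrands genuinely depends on $(\gamma,\delta,\epsilon)$. For instance the factor coming from $\binom{(1-2\beta+\delta)n+m'-\eta^-}{\gamma n+\eta^+}\big/\binom{(1-2\beta+\delta)n}{\gamma n}$ contributes $\big(\tfrac{1-2\beta+\delta}{1-2\beta+\delta-\gamma}\big)^{m'-\eta^-}\big(\tfrac{1-2\beta+\delta-\gamma}{\gamma}\big)^{\eta^+}$, and the remaining factors do not cancel this $\gamma,\delta$-dependence. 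The simplification to $(C^*)^2$ times the $\eta$-factors only occurs at the special point $(\gamma^*,\delta^*,\epsilon^*)=(\alpha^2,\beta^2,\alpha(1-\alpha-\beta))$, where e.g.\ $1-2\beta+\delta^*=(1-\beta)^2$, $1-\beta-\gamma^*-\epsilon^*=(1-\alpha)(1-\beta)$, and so on. The paper therefore first restricts to $\mathcal{A}=\{(\gamma,\delta,\epsilon):\|(\gamma,\delta,\epsilon)-(\gamma^*,\delta^*,\epsilon^*)\|_\infty\le n^{-1/4}\}$, using Condition~\ref{cond:technical} (uniqueness of the maximizer of $g_{\alpha,\beta}$ and its quadratic decay) to bound the contribution outside $\mathcal{A}$ by $\exp(-\Omega(n^{1/2}))$, and only then replaces $(\gamma,\delta,\epsilon)$ by $(\gamma^*,\delta^*,\epsilon^*)$ in the ratio. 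Since the bases are raised to powers of order $m'\asymp n^{\theta+\psi}$, an $O(n^{-1/4})$ perturbation of $(\gamma,\delta,\epsilon)$ still gives a $(1+o(1))$ change, so this step is legitimate.

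A secondary issue: your uniformity claim ``the denominators are all linear in $n$ throughout the relevant region~\eqref{e:greekRegion}'' is also not true near the boundary (e.g.\ as $\gamma\to 0$ or $\alpha-\gamma-\epsilon\to 0$), so Lemma~\ref{l:bionmialPerturb} does not apply uniformly there either. Restricting to $\mathcal{A}$ fixes this simultaneously. In short: keep your outline, but you must use the concentration at $(\gamma^*,\delta^*,\epsilon^*)$---this is not optional, and it is precisely how Condition~\ref{cond:technical} is used in the lemma.
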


\begin{proof}
Repeating the analysis of~\cite{MWW:09} the analogous
\begin{align}\label{e:gt2Moment}
\E \left(Z^{\alpha,\beta}_{\gt}(\eta)\right)^2 = \lambda^{2(\alpha+\beta)} \binom{n}{\alpha n} \binom{n}{\beta n}
\sum_{\gamma,\delta} \binom{\alpha n}{\gamma n} \binom{(1-\alpha)n}{(\alpha-\gamma) n} \binom{\beta n}{\delta n} \binom{(1-\beta)n}{(\beta-\delta) n}\nonumber\\
 \cdot  \left[ \frac{\binom{(1-2\beta+\delta)n+m-\eta^-}{\gamma n+\eta^+}}{\binom{n+m}{\gamma n+\eta^+}}
\sum_{\epsilon} \frac{\binom{(1-2\beta+\delta-\gamma)n+m - \eta^+ -\eta^-}{\epsilon n} \binom{(\beta-\delta)n}{(\alpha-\gamma-\epsilon)n}}{\binom{(1-\gamma)n +m - \eta^+ -\eta^-}{(\alpha-\gamma)n}}
\frac{\binom{(1-\beta-\gamma-\epsilon)n+m-\eta^+}{(\alpha-\gamma)n}}{\binom{(1-\alpha)n+m - \eta^-}{(\alpha-\gamma)n}}
\right]^d
\end{align}
By Lemma~\ref{l:MWWfMax} the unique maxima of $g_{\alpha,\beta}(\gamma,\delta,\epsilon)$ is $(\gamma^*,\delta^*,\epsilon^*)$ and it was shown in~\cite{MWW:09} that $g_{\alpha,\beta}$ decays quadratically from this point.  Consequently, with
\[
\mathcal{A}=\{(\gamma,\delta,\epsilon):|\gamma-\gamma^*|,|\delta-\delta^*|,|\epsilon-\epsilon^*|\leq n^{-1/4}\}
\]
the contribution from terms with $(\gamma,\delta,\epsilon)\not\in \mathcal{A}$ is  $\exp(-\Omega(n^{1/2}))$ and so can be omitted.  Setting
\begin{align*}
\kappa^{\alpha,\beta}_{\gt}(\eta)=\lambda^{2(\alpha+\beta)} \frac{\binom{(1-2\beta+\delta)n+m-\eta^-}{\gamma n+\eta^+}}{\binom{n+m}{\gamma n+\eta^+}}
\frac{\binom{(1-2\beta+\delta-\gamma)n+m - \eta^+ -\eta^-}{\epsilon n} \binom{(\beta-\delta)n}{(\alpha-\gamma-\epsilon)n}}{\binom{(1-\gamma)n +m - \eta^+ -\eta^-}{(\alpha-\gamma)n}}
\frac{\binom{(1-\beta-\gamma-\epsilon)n+m-\eta^+}{(\alpha-\gamma)n}}{\binom{(1-\alpha)n+m - \eta^-}{(\alpha-\gamma)n}}\\
\kappa^{\alpha,\beta}_{\mathrm{MWW}}=\frac{\binom{(1-2\beta+\delta)n}{\gamma n}}{\binom{n}{\gamma n}}
\frac{\binom{(1-2\beta+\delta-\gamma)n}{\epsilon n}\binom{(\beta-\delta)n}{(\alpha-\gamma-\epsilon)n}}{\binom{(1-\gamma)n}{(\alpha-\gamma)n}}
\frac{\binom{(1-\beta-\gamma-\epsilon)n}{(\alpha-\gamma)n}}{\binom{(1-\alpha)n}{(\alpha-\gamma)n}}
\end{align*}
and recalling that $\gamma^*=\alpha^2,\delta^*=\beta^2,\epsilon^*=\alpha(1-\alpha-\beta)$ we have that
for $(\gamma,\delta,\epsilon)\in \mathcal{A}$
\begin{align}\label{e:kappaComparison}
\frac{\kappa^{\alpha,\beta}_{\gt}(\eta)}{\kappa^{\alpha,\beta}_{\mathrm{MWW}}} &= (1+o(1)) \lambda^{2(\eta^+ + \eta^-)} \left(\frac{1-2\beta+\delta}{1-2\beta+\delta-\gamma} \right)^{m-\eta^-}\nonumber\\
& \quad \cdot
\left(\frac{1-2\beta+\delta-\gamma}{\gamma} \right)^{\eta^+}
\left(1-\gamma\right)^{m} \left(\frac{\gamma}{1-\gamma} \right)^{\eta^+}
\left(\frac{1-2\beta+\delta-\gamma}{1-2\beta+\delta-\gamma-\epsilon} \right)^{m-\eta^+\eta^-}\nonumber\\
&\quad \cdot \left(\frac{1-\beta-\gamma-\epsilon}{1-\beta-\alpha-\epsilon} \right)^{m-\eta^+-\eta^-}
\left(\frac{1-\alpha}{1-\gamma} \right)^{m-\eta^+}
\left(\frac{1-2\alpha+\gamma}{1-\alpha} \right)^{m-\eta^+}\nonumber\\
&=(1+o(1)) (C^*)^2 \left(\lambda \left(\frac{1-\alpha-\beta}{1-\beta} \right)^{d-1}\right)^{2\eta^-}
\left(\lambda \left(\frac{1-\alpha-\beta}{1-\alpha} \right)^{d-1}\right)^{2\eta^+}
\end{align}
where the first line follows by equation~\eqref{e:bionmialPerturb} and the second follows by approximating $(\gamma,\delta,\epsilon)$ with $(\gamma^*,\delta^*,\epsilon^*)$ and simplifying.  Now comparing equations~\eqref{e:MWW2Moment} and~\eqref{e:gt2Moment} (noting here that we can neglect terms with $(\gamma,\delta,\epsilon)\not\in \mathcal{A}$) we have that
\[
\frac{\E \left(Z^{\alpha,\beta}_{\gt}(\eta)\right)^2}{\E \left(Z^{\alpha,\beta}_{\mathrm{MWW}}\right)^2} = (1+o(1)) (C^*)^2 \left(\lambda \left(\frac{1-\alpha-\beta}{1-\beta} \right)^{d-1}\right)^{2\eta^-}
\left(\lambda \left(\frac{1-\alpha-\beta}{1-\alpha} \right)^{d-1}\right)^{2\eta^+},
\]
which establishes equation~\eqref{e:ratio2Moment}.  Equation~\eqref{e:gt2Moment} then follows from Lemmas~\ref{l:ratio1Moment} and~\ref{l:MWWvarRatio}.
\end{proof}

\subsection{Small Graph Conditioning Method}

As we showed in the previous section the ratio of second moment and the first moment squared of $Z^{\alpha,\beta}_{\gt}(\eta)$ converges to a constant $\tau^{\alpha,\beta}$.  Unfortunately, since $\tau^{\alpha,\beta}>1$ we can not directly apply the second moment method to get high probability bounds. Instead we follow the approach of~\cite{MWW:09} and use the small graph conditioning method.  Our proofs differ very minimally from theirs and as such we only comment on the necessary changes.  At a high level, the method says that small cycles in the graph ``explain'' the variance which provides good lower bounds on $Z^{\alpha,\beta}_{\gt}(\eta)$.  The following is taken from~\cite{MWW:09} which itself is presented as a special case of a  results of \cite{Wormald:99} and \cite{JLR:00}.
\begin{theorem}[\cite{MWW:09}*{Theorem 7.1}]\label{t:generalSmallGraph}
Let $\lambda_i>0$ and $\delta_i >0$ be real numbers for $i = 1, 2, \ldots$ Let $\omega(n) \to 0$ and suppose that for
each $n$ there are random variables $X_i = X_i(n), i = 1, 2,\ldots$ and $Y = Y (n)$, all defined on the same probability
space $G = G_n$ such that $X_i$ are nonnegative integer valued, Y is nonnegative and $EY > 0$ (for n sufficiently large).
Suppose furthermore that
\begin{enumerate}
\item For each $k \geq 1$, the variables $X_1, \ldots ,X_k$ are asymptotically independent Poisson random variables with $\E X_i\to\lambda_i$
\item For every finite sequence $m_1, \ldots ,m_k$ of nonnegative integers,
\begin{equation}\label{e:cycleRatio}
\frac{\E\left(Y\prod_{i=1}^k[X_i]_{m_k}\right)}{\E Y}\to \prod_{i=1}^k (\lambda_i (1+\delta_i))^{m_i}
\end{equation}
where $[X]_m=\prod_{i=0}^{m-1} (X-i)$, denotes the falling factorial.
\item That $\sum \lambda_i \delta_i^2 < \infty$,
\item That $\E Y^2/(\E Y)^2 \leq \exp(\sum \lambda_i\delta_i^2)+o(1)$ as $n\to\infty$.
\end{enumerate}
Then $\P(Y > \omega(n) \E Y)\to 1$.
\end{theorem}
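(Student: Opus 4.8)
Since this is the classical small graph conditioning method of Robinson and Wormald, the plan is not to give a self-contained argument but to recall the structure of the proof, deferring the delicate points to~\cite{Wormald:99} and~\cite{JLR:00}. The strategy would be to produce an explicit nonnegative random variable $W$ with $\E W=1$ and $\P(W>0)=1$, to show that $Y/\E Y$ converges in distribution to $W$, and then to read off the conclusion: for any $\omega(n)\to0$, $\P(Y>\omega(n)\E Y)=\P(Y/\E Y>\omega(n))\to\P(W>0)=1$. The candidate is $W=\prod_{i\geq1}(1+\delta_i)^{P_i}e^{-\lambda_i\delta_i}$, where $P_1,P_2,\dots$ are genuinely independent Poissons with $\E P_i=\lambda_i$; writing $W_k$ for the partial product over $i\leq k$ and using $\E[s^{P_i}]=e^{\lambda_i(s-1)}$, one checks $\E W_k=1$, $\E[W_k^2]=\prod_{i\leq k}e^{\lambda_i\delta_i^2}$, and $\E\big[W_k\,[P_1]_{m_1}\cdots[P_k]_{m_k}\big]=\prod_{i\leq k}(\lambda_i(1+\delta_i))^{m_i}$.

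The first step conditions on the short-cycle counts. Let $\mathcal F_k$ be the $\sigma$-field generated by $X_1,\dots,X_k$ on $G_n$ and set $\hat Y_k=\E[Y\mid\mathcal F_k]/\E Y$. Because the falling-factorial monomials span the polynomials in $X_1,\dots,X_k$, hypothesis (2) is exactly the statement that $\E\big[(Y/\E Y)\,\phi(X_1,\dots,X_k)\big]\to\E\big[W_k\,\phi(P_1,\dots,P_k)\big]$ for every polynomial $\phi$; feeding in hypothesis (1), so that $(X_1,\dots,X_k)$ converges in law to $(P_1,\dots,P_k)$, together with the uniform $L^2$-bound on $\hat Y_k$ from hypothesis (4), the Charlier-polynomial machinery of the cited references gives $\hat Y_k\Rightarrow W_k$ as $n\to\infty$ and, reading off the sum of squared expansion coefficients, $\E[\hat Y_k^2]\to\E[W_k^2]=\prod_{i\leq k}e^{\lambda_i\delta_i^2}$.

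Next I would let $k\to\infty$. By hypothesis (3) the variances $\lambda_i(\log(1+\delta_i))^2$ are summable, so $\log W_k=\sum_{i\leq k}\big(P_i\log(1+\delta_i)-\lambda_i\delta_i\big)$ converges almost surely by Kolmogorov's three-series theorem; hence $\P(W>0)=1$, $W_k\to W$ in $L^2$, $\E W=1$, and $\E[W^2]=\exp(\sum_i\lambda_i\delta_i^2)$. Since $\hat Y_k=\E[\,Y/\E Y\mid\mathcal F_k\,]$, the Pythagorean identity gives $\E\big[(Y/\E Y-\hat Y_k)^2\big]=\E[(Y/\E Y)^2]-\E[\hat Y_k^2]$; bounding the first term by $\exp(\sum_i\lambda_i\delta_i^2)+o(1)$ via hypothesis (4) and the second by $\prod_{i\leq k}e^{\lambda_i\delta_i^2}+o(1)$ via the previous step, we get
\[
\limsup_{n\to\infty}\E\big[(Y/\E Y-\hat Y_k)^2\big]\;\leq\;\exp\Big(\sum_{i\geq1}\lambda_i\delta_i^2\Big)-\prod_{i=1}^k e^{\lambda_i\delta_i^2}\;\xrightarrow{k\to\infty}\;0.
\]
A diagonalization over $k$ and $n$ then combines this with $\hat Y_k\Rightarrow W_k$ and $W_k\to W$ to give $Y/\E Y\Rightarrow W$, and the conclusion follows as above.

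The hard part is the conditioning step: identifying $W_k$ as the \emph{exact} distributional limit of $\hat Y_k$ and obtaining genuine convergence $\E[\hat Y_k^2]\to\E[W_k^2]$ (not merely $\liminf\geq\E[W_k^2]$), which requires care since $W_k$ is not compactly supported and one must control an infinite sum of squared expansion coefficients. The remaining delicate points are the interchange of the limits $n\to\infty$ and $k\to\infty$ in the diagonalization, and the almost-sure positivity $\P(W>0)=1$, which is exactly what the summability in hypothesis (3) buys. As all of this is done in~\cite{Wormald:99} and~\cite{JLR:00} (and recalled in~\cite{MWW:09}), in the body of the paper the statement is invoked as a black box.
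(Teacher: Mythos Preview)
Your proposal is correct and matches the paper's treatment: the paper does not prove this theorem at all but simply quotes it from~\cite{MWW:09} (which in turn defers to~\cite{Wormald:99} and~\cite{JLR:00}) and applies it as a black box, exactly as you note in your final sentence. Your sketch of the Robinson--Wormald argument---defining the limiting variable $W=\prod_i(1+\delta_i)^{P_i}e^{-\lambda_i\delta_i}$, showing $Y/\E Y\Rightarrow W$ via the conditioning/Pythagorean decomposition, and reading off $\P(W>0)=1$ from hypothesis~(3)---is accurate and goes beyond what the paper itself provides.
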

We set $Y=\lambda^{-(\alpha+\beta)n} Z^{\alpha,\beta}_{\gt}(\eta)$ and let $X_i$ be the number of cycles of length $i$ whose vertices lie in $W$ (which is of course 0 when $i$ is odd).  The following lemma has an essentially identical proof to the proof of~\cite{MWW:09}*{Lemma 7.3} and follows from standard methods.
\begin{lemma}\label{l:cycleCount}
For even $i$ the number of cycles  are asymptotically Poisson with means $\lambda_i=r(d,i)/i$ where $r(d,i)$ counts the number of proper $d$-colourings of a cycle of size $i$.
\end{lemma}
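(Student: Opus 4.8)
The plan is to reduce the statement to the standard Poisson paradigm for short cycles in random regular (bipartite) (multi)graphs, as formalized e.g.\ in \cite{Wormald:99,JLR:00} and as used in \cite{MWW:09}*{Lemma 7.3}. First I would set up the bookkeeping: for even $i$, a cycle of length $i$ whose vertices all lie in $W=W^+\cup W^-$ alternates between the two sides, so it visits $i/2$ vertices of $W^+$ and $i/2$ vertices of $W^-$. Each such cycle is built from $i$ edges drawn from the $d-1$ random perfect matchings between $W^+\cup U^+$ and $W^-\cup U^-$ together with the one extra matching of $W^+$ with $W^-$; a cycle cannot use the same matching twice consecutively (that would create a multi-edge rather than a genuine cycle traversal), so the sequence of matchings used around the cycle is exactly a proper coloring of the cycle $C_i$ by the $d$ available matchings, which is where the factor $r(d,i)$ enters.

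The key computation is the first moment $\E X_i$. For a fixed cyclically ordered choice of $i/2$ vertices in $W^+$, $i/2$ vertices in $W^-$, and a proper $d$-coloring of the edges of $C_i$ by matchings, the probability that all $i$ prescribed edges are simultaneously present is, to leading order, $\prod (\text{matching sizes})^{-1}$ over the $i$ edges, and each matching has size $n+m'=(1+o(1))n$ since $m'=n^{o(1)}$. The number of ways to choose and order the vertices is $(1+o(1)) n^{i}/(\text{automorphisms of the cycle})$, and the automorphism group of $C_i$ has order $2i$; combining the $n^i$ from vertex choices against the $n^{-i}$ from edge probabilities, the $n$'s cancel and one is left with $\E X_i \to r(d,i)/(2i)\cdot(\text{symmetry correction})$. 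Here one must be careful with the factor of $2$: since the cycle lives in a bipartite graph, reflections that swap the two sides are not automorphisms of the rooted-and-sided object, so the effective symmetry factor is $i$ rather than $2i$, yielding $\lambda_i=r(d,i)/i$ as stated. I would double-check this constant against the $d$-regular case in \cite{MWW:09}*{Lemma 7.3}, where the analogous constant appears, and note that our gadget differs only by the $n^{o(1)}$ vertices of $U$, which affect matching sizes only by a $1+o(1)$ factor and hence do not change the limit.

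For asymptotic independence and the Poisson limit, I would invoke the method of factorial moments: show that for any fixed $k$ and nonnegative integers $m_1,\dots,m_k$, $\E\big(\prod_{i}[X_i]_{m_i}\big)\to \prod_i \lambda_i^{m_i}$. This is the standard inclusion–exclusion / configuration-model argument — count ordered tuples of $m_i$ edge-disjoint (hence, to leading order, vertex-disjoint) cycles of each length $i$, using that collisions among $O(1)$ cycles occur with probability $O(1/n)$ — and it is essentially verbatim that of \cite{MWW:09}*{Lemma 7.3}; the only modification is replacing $n$ by $n+m'=(1+o(1))n$ in every matching-size denominator, which is harmless. The theorem of Wormald then gives joint convergence to independent Poissons.

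The main obstacle, such as it is, is purely bookkeeping: getting the symmetry factor in $\lambda_i$ exactly right in the bipartite multigraph setting (the distinction between $r(d,i)/i$ and $r(d,i)/(2i)$), and verifying that the ``no two consecutive matchings equal'' constraint is precisely captured by proper $d$-colorings of $C_i$ rather than of some other auxiliary graph. Once those combinatorial identifications are pinned down, every estimate is a routine $1+o(1)$ perturbation of the corresponding step in \cite{MWW:09}, so I would present the proof by stating the cycle–coloring correspondence, computing $\E X_i$, and then citing \cite{MWW:09}*{Lemma 7.3} together with \cite{Wormald:99} for the joint Poisson convergence, flagging only the places where the presence of $U$ (and the asymmetric extra matching of $W^+$ with $W^-$) requires a word of comment.
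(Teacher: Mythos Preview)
Your proposal is correct and matches the paper's approach: the paper itself does not give a proof but simply states that the lemma ``has an essentially identical proof to the proof of~\cite{MWW:09}*{Lemma 7.3} and follows from standard methods,'' which is precisely the reduction you carry out. One small correction: $m'$ is not $n^{o(1)}$ but rather $O(n^{\theta+\psi})=O(n^{1/4})$; however, your needed conclusion $n+m'=(1+o(1))n$ still holds, so this does not affect the argument.
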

Next, the main step is to determine the limit of equation~\eqref{e:cycleRatio}.  The proof of the following lemma follows with only very minor modifications form that of \cite{MWW:09}*{Lemma 7.4 and 7.5}.
\begin{lemma}\label{l:smallGraphCovariance}
For all $(\alpha,\beta)$ in the interior of $\mathcal{T}$ and all $\eta\in\{0,1\}^U$ we have that:
\begin{equation*}
\frac{\E\left(Y\prod_{i=1}^k[X_i]_{m_k}\right)}{\E Y}\to \prod_{i=1}^k (\lambda_i (1+\delta_i))^{m_i}
\end{equation*}
where $\lambda_i=r(d,i)/i$ and
\[
\delta_i=\left(\frac{\alpha\beta}{(1-\alpha)(1-\beta)}\right)^{i/2}
\]
for even $i$.
\end{lemma}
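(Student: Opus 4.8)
The plan is to reproduce the computation of \cite{MWW:09}*{Lemmas 7.4 and 7.5}, keeping track of the two features that distinguish $\gt$ from the random $d$-regular bipartite graph treated there: the frozen boundary $\sigma_U=\eta$ on the small set $U$ of degree-$(d-1)$ vertices, and the fact that the degree-$d$ structure on $W$ is now supplied by the restrictions to $W$ of the $d-1$ matchings of $W^\pm\cup U^\pm$ together with the extra perfect matching of $W^+$ with $W^-$.

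First I would expand the left-hand side as
\[
\E\Big(Y\prod_{i=1}^k[X_i]_{m_i}\Big)
=\sum_{\mathcal{C}}\ \sum_{\sigma}\lambda^{|\sigma|-(\alpha+\beta)n}\,
\P\big(\sigma\in I(\gt),\ \mathcal{C}\subseteq\gt\big),
\]
where $\mathcal{C}$ ranges over ordered tuples that, for each $i$, list $m_i$ distinct $i$-cycles contained in $W$, and $\sigma$ ranges over configurations with $\sigma_U=\eta$, $\sum_{W^+}\sigma_w=\alpha n$, $\sum_{W^-}\sigma_w=\beta n$. Each such $\mathcal{C}$ meets only $O(1)$ vertices of $W$. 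Fixing $\mathcal{C}$, the number of its embeddings into $W$ and the probability that its edges are realised contribute, after dividing by cyclic symmetries and using $|U|=o(n)$ so that the restriction to $W$ of each of the $d-1$ matchings behaves to leading order like a perfect matching of $W$, the factor $\prod_i\lambda_i^{m_i}$ with $\lambda_i=r(d,i)/i$, exactly as in \cite{MWW:09}: the constraint that consecutive cycle edges at a common $W$-vertex lie in distinct matchings is governed by $d$ available matchings and produces proper $d$-colourings of $C_i$. The inner sum over $\sigma$, conditioned on the cycle edges being present, then reproduces $\lambda^{-(\alpha+\beta)n}\E Z^{\alpha,\beta}_{\gt}(\eta)$ up to the factor $\prod_i(1+\delta_i)^{m_i}(1+o(1))$, where $\delta_i=\big(\tfrac{\alpha\beta}{(1-\alpha)(1-\beta)}\big)^{i/2}$ encodes, exactly as in \cite{MWW:09}, the bias that forcing the $i/2$ plus- and $i/2$ minus-vertices of each cycle into not-both-occupied pairs induces on the occupation densities.

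The key point to check is that the $\eta$-dependence decouples: the normalising factor $C^*$ and the powers of $\lambda\big(\tfrac{1-\alpha-\beta}{1-\alpha}\big)^{d-1}$ and $\lambda\big(\tfrac{1-\alpha-\beta}{1-\beta}\big)^{d-1}$ carrying the $\eta^+$ and $\eta^-$ in $\E Z^{\alpha,\beta}_{\gt}(\eta)$ (cf. Lemma~\ref{l:ratio1Moment}) arise identically whether or not the $O(1)$-many cycle edges in $W$ are forced present, so they cancel when one divides by $\E Y$. Concretely, I would re-run the factorial expansions of Lemma~\ref{l:bionmialPerturb} on the binomials in \eqref{e:gt1moment} with the additional shifted arguments coming from the removed cycle vertices; as in the proofs of Lemmas~\ref{l:ratio1Moment} and~\ref{l:gt2Moment}, the offsets $+m'$ and $-\eta^\pm$ are $o(n)$ and produce only $(1+o(1))$ corrections common to numerator and denominator. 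I expect this cancellation to be the only genuinely laborious step, though it differs only cosmetically from \cite{MWW:09}. Combining the two factors and invoking the asymptotic independence and Poisson convergence of the $X_i$ from Lemma~\ref{l:cycleCount}, one obtains \eqref{e:cycleRatio} with the stated $\lambda_i$ and $\delta_i$, which completes the proof.
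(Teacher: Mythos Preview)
Your proposal is correct and follows essentially the same route as the paper. The paper organizes the single-cycle case via the decomposition $\frac{\E YX_i}{\E Y}=\frac1i\sum_\xi\sum_\zeta \frac{P1\cdot P2}{P3}$ (with $\xi$ the edge- and vertex-coloured cycle type, $\zeta$ its placement in $W$, and $P1,P2,P3$ the probabilities that the cycle is present, that $\sigma$ is independent given the cycle, and that $\sigma$ is independent unconditionally), and then checks explicitly via Lemma~\ref{l:bionmialPerturb} that $P1=(1+o(1))P1^{\mathrm{MWW}}$ and $P2/P3=(1+o(1))P2^{\mathrm{MWW}}/P3^{\mathrm{MWW}}$, so the whole ratio reduces to the MWW computation; this is exactly your ``$\eta$-dependence decouples'' observation made precise, and the extension to general $m_i$ is deferred to \cite{MWW:09}*{Lemma 7.5} just as you suggest.
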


\begin{proof}
We follow as far as possible the proof of \cite{MWW:09}*{Lemma 7.4}.  We consider just the case where a single $m_i=1$ and the others are all as the extension to general $m_i$'s is exactly as in \cite{MWW:09}*{Lemma 7.5}.  We follow the notation is from \cite{MWW:09} with only slight modifications to our setting
\begin{itemize}
\item Let $\Upsilon\in \{0,1 \}^W$  with $\sum_{w\in W^+} \Upsilon_w=\alpha n$ and with $\sum_{w\in W^-} \Upsilon_w=\beta n$.
\item Denote by $\xi$ a proper $d$-edge-coloured rooted, oriented $i$-cycle (from amongst the $r(d, i)$ possibilities), in which the vertices are
2-coloured, black and white, with no two black vertices adjacent. The color of the edges will prescribe
which of the $d$ perfect matchings an edge of a (potential) cycle will belong to. The black vertices will
prescribe which of the cycle vertices are members of $\{w\in W:\Upsilon_w = 1\}$.
\item We let $\zeta$ denotes a position that an $i$-cycle can be in (i.e. the exact vertices of $W$ it traverses, in order) such that the prescription
of the vertex colors of $\xi$ is satisfied.  (Note this was denoted as $\eta$ in~\cite{MWW:09}).
\item Denote by $P1$ is the probability that a random graph $\gt$ contains a cycle $C$ in the given position $\zeta$ with the edge colors
prescribed by $\xi$ in accordance with which matchings contain the edges of $C$.
\item We denote by $P2$  the conditional probability that in the random graph $\gt$ that the set $\{w\in W:\Upsilon_w = 1\}\cup \{u\in U:\eta_u = 1\}$ is an independent set, given that it
contains $C$ as in the definition of $P1$.
\item Denote by $P3$ the probability that in the random graph $\gt$ the set $\{w\in W:\Upsilon_w = 1\}\cup \{u\in U:\eta_u = 1\}$ is an independent set.
\end{itemize}
Analogously to equation (18) of \cite{MWW:09} we have that
\begin{equation}
\frac{\E Y X_i}{\E Y} = \frac1i \sum_{\xi} \sum_{\zeta} \frac{P1 \ P2}{P3}
\end{equation}
as the probabilities are independent of $\Upsilon$.  It is immediate from the definition that $P1=(1+o(1))n^{-i}$.

Now closely following the notation of \cite{MWW:09}  for $k=1,\ldots,d$ let $e(k)$ denote the number of edges of colour $k$ in $\xi$. Denote by $f_\pm(k)$ the number of black vertices adjacent to edges of colour $k$ in the sets $\{w\in W^\pm:\Upsilon_w = 1\}$.  Assuming that $\xi$ is compatible with $\{w\in W^\pm:\Upsilon_w = 1\}$ then the probability that the remaining edges also respect the independents sets  is given by,
\begin{align}\label{e:P2expression}
P2 &= \frac{\left(\prod_{k=1}^{d-1} \binom{n+m'-\beta n -\eta^- -e(k) + f_-(k)}{\alpha n + \eta^+ - f_+(k)}\right) \binom{n-\beta n-e(d) + f_-(d)}{\alpha n - f_+(d)} }
{\left(\prod_{k=1}^{d-1} \binom{n+m' -e(k) }{\alpha n + \eta^+ - f_+(k)}\right) \binom{n -e(d) }{\alpha n - f_+(d)}}.
\end{align}
Now by Lemma~\ref{l:bionmialPerturb} we have that
\begin{align}\label{e:P2componentA}
\frac{\binom{n+m'-\beta n -\eta^- -e(k) + f_-(k)}{\alpha n + \eta^+ - f_+(k)}}{\binom{n+m'-\beta n -\eta^- }{\alpha n + \eta^+ }}
&= (1+o(1))\left(\frac{n+m'-\beta n -\eta^-}{n+m'-\beta n -\eta^- - \alpha n - \eta^+}\right)^{-e(k) + f_-(k)}\nonumber\\
&\qquad \cdot \left(\frac{n+m'-\beta n -\eta^- - \alpha n - \eta^+}{\alpha n + \eta^+}\right)^{- f_+(k)}\nonumber\\
&= (1+o(1)) \left(\frac{1-\beta }{1-\beta -\alpha }\right)^{-e(k) + f_-(k)}
\left(\frac{1-\beta-\alpha}{\alpha}\right)^{- f_+(k)}
\end{align}
where we used the fact that $m,\eta^+,\eta^-=O(n^{-1/4})$ and $e(k), f_-(k),f_+(k)=O(1)$.  Similarly we have that
\begin{align}\label{e:P2componentB}
\frac{\binom{n-\beta n-e(d) + f_-(d)}{\alpha n - f_+(d)}}{\binom{n-\beta n}{\alpha n}}
&= (1+o(1))  \left(\frac{1-\beta }{1-\beta -\alpha }\right)^{-e(d) + f_-(d)}
\left(\frac{1-\beta-\alpha}{\alpha}\right)^{- f_+(d)}\nonumber\\
\frac{\binom{n+m' -e(k) }{\alpha n + \eta^+ - f_+(k)}}{\binom{n+m'  }{\alpha n + \eta^+ }}
&= (1+o(1)) \left(\frac1{1-\alpha}\right)^{-e(k)}\left( \frac{1-\alpha}{\alpha} \right)^{-f_+(k)}\nonumber\\
\frac{\binom{n -e(d) }{\alpha n + - f_+(d)}}{\binom{n  }{\alpha n  }}
&= (1+o(1)) \left(\frac1{1-\alpha}\right)^{-e(d)}\left( \frac{1-\alpha}{\alpha} \right)^{-f_+(d)}
\end{align}
By equation~\eqref{e:gt1moment} we have that
\begin{align}\label{e:P3expression}
P3=\left[ \frac{\binom{n+m'-\beta n -\eta^-}{\alpha n + \eta^+}}{\binom{n+m'}{\alpha n+\eta^+}} \right]^{d-1}
\frac{\binom{n-\beta n}{\alpha n}}{\binom{n}{\alpha n}}
\end{align}
Now let $j_\pm(\xi)=\frac12\sum_{k=1}^d f_\pm(k)$ denote the number of black vertices in $V^\pm$ according to $\xi$ and recall that $i=\sum_{k=1}^d e(k)$.  Combining equations \eqref{e:P2expression},\eqref{e:P2componentA},\eqref{e:P2componentB} and~\eqref{e:P3expression} we have that
\[
\frac{P2}{P3} = (1+o(1))\frac{(1-\alpha-\beta)^{i-2j_- -2_+}}{(1-\alpha)^{i-2j_+}(1-\beta)^{i-2j_-}}.
\]
Now letting $Pk^{\mathrm{MWW}}$ denote the corresponding probabilities in Lemma~7.4 of \cite{MWW:09} we note that
\[
P1=(1+o(1)) P1^{\mathrm{MWW}}, \qquad \frac{P2}{P3} = (1+o(1)) \frac{P2^{\mathrm{MWW}}}{P3^{\mathrm{MWW}}}
\]
Hence
\begin{align*}
\frac{\E Y X_i}{\E Y} &= (1+o(1))\frac1i \sum_{\xi} \sum_{\zeta} \frac{P1 \ P2}{P3}\\
&= (1+o(1)) \frac1i \sum_{\xi} \sum_{\zeta} \frac{P1^{\mathrm{MWW}} \ P2^{\mathrm{MWW}}}{P3^{\mathrm{MWW}}}\\
&= (1+o(1)) \lambda_i (1+\delta_i),
\end{align*}
where the final term is the main result of \cite{MWW:09}*{Lemma 7.4}.  The complete result for general $m_i$'s follows similarly to \cite{MWW:09}*{Lemma 7.5} which completes the lemma.
\end{proof}

\begin{lemma}\label{l:whpSmallGraph}
If $d\geq 3$ and $(\alpha,\beta)$ is in the interior of $\mathcal{T}$ and the function $g_{\alpha,\beta}$ achieves its unique maxima in~\eqref{e:greekRegion} at $(\alpha^2,\beta^2,\alpha(1-\alpha-\beta))$ then for all $\eta \in \{0,1\}^U$,
\begin{equation}\label{e:whpBound}
\sup_\eta \P\left(Z^{\alpha,\beta}_{\gt}(\eta) < \frac2{\sqrt{n}} \E Z^{\alpha,\beta}_{\gt}(\eta) \right) \to 0.
\end{equation}
\end{lemma}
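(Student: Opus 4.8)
The plan is to apply the small graph conditioning method, Theorem~\ref{t:generalSmallGraph}, with $Y=\lambda^{-(\alpha+\beta)n}Z^{\alpha,\beta}_{\gt}(\eta)$, with $X_i$ the number of cycles of length $i$ all of whose vertices lie in $W$ (so $X_i\equiv0$ for odd $i$), and with $\omega(n)=2/\sqrt n$. Hypotheses (1) and (2) of that theorem are exactly Lemmas~\ref{l:cycleCount} and~\ref{l:smallGraphCovariance}, which give $\lambda_i=r(d,i)/i$ and $\delta_i=\big(\alpha\beta/((1-\alpha)(1-\beta))\big)^{i/2}$ for even $i$; neither depends on $\eta$, and the convergences there are uniform over $\eta\in\{0,1\}^U$ since all error terms are controlled by $\eta^\pm=O(n^{1/4})$.

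For hypothesis (3), set $t=\alpha\beta/((1-\alpha)(1-\beta))$, so $\lambda_i\delta_i^2=r(d,i)\,t^i/i$ with $r(d,i)=(d-1)^i+(d-1)$ for even $i$; summing term by term gives
\[
\sum_i\lambda_i\delta_i^2=-\tfrac12\log\!\big(1-(d-1)^2t^2\big)-\tfrac{d-1}{2}\log(1-t^2),
\]
which is finite precisely when $(d-1)t<1$. In the regime of interest $(\alpha,\beta)$ lies near $(p^-,p^+)$ (or $(p^+,p^-)$), where $t$ is close to $p^+p^-/((1-p^+)(1-p^-))=q^+q^-$; by the first inequality of~\eqref{e:extraConditions}, $(d-1)q^+q^-<1$, so $(d-1)t<1$ by continuity.

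Hypothesis (4) is where the second moment enters. Lemma~\ref{l:gt2Moment} gives $\E Y^2/(\E Y)^2=\E\big(Z^{\alpha,\beta}_{\gt}(\eta)\big)^2/\big(\E Z^{\alpha,\beta}_{\gt}(\eta)\big)^2\to\tau^{\alpha,\beta}$, uniformly in $\eta$, so it suffices to check $\tau^{\alpha,\beta}\le\exp\!\big(\sum_i\lambda_i\delta_i^2\big)$. Writing $(1-\alpha)(1-\beta)=(1-\alpha-\beta)+\alpha\beta$ and factoring $1-t^2$ and $1-(d-1)^2t^2$, the right-hand side equals
\[
\exp\!\Big(\sum_i\lambda_i\delta_i^2\Big)=\frac{\big(1-\alpha-\beta+\alpha\beta\big)^d}{\big[(1-\alpha-\beta)(1-\alpha-\beta+2\alpha\beta)\big]^{(d-1)/2}\big[(1-\alpha-\beta+d\alpha\beta)(1-\alpha-\beta-(d-2)\alpha\beta)\big]^{1/2}},
\]
which shares the denominator of $\tau^{\alpha,\beta}$ and has the larger numerator $(1-\alpha-\beta+\alpha\beta)^d\ge(1-\alpha-\beta-\alpha\beta)^d$. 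This is the identity already carried out in~\cite{MWW:09}, and since our $\lambda_i$, $\delta_i$, $\tau^{\alpha,\beta}$ agree with theirs it transfers without change.

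With the four hypotheses in place, Theorem~\ref{t:generalSmallGraph} gives $\P\big(Y>\tfrac{2}{\sqrt n}\E Y\big)\to1$, i.e.\ $\P\big(Z^{\alpha,\beta}_{\gt}(\eta)<\tfrac{2}{\sqrt n}\E Z^{\alpha,\beta}_{\gt}(\eta)\big)\to0$, for each fixed $\eta$. The only point needing care beyond quoting the earlier lemmas — and the step I would expect to be the main obstacle — is the \emph{uniformity} in $\eta$: the proof of Theorem~\ref{t:generalSmallGraph} yields a rate of convergence depending only on the rates in hypotheses (1)--(4), each of which we have argued is uniform over $\{0,1\}^U$, so~\eqref{e:whpBound} follows with the supremum over $\eta$. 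Note that we do not union bound over $\eta$ (which would be fatal, since $|\{0,1\}^U|$ is super-polynomial); the assertion is only about $\sup_\eta$ of the individual probabilities.
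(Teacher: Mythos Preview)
Your proof is correct and follows the same approach as the paper's: apply Theorem~\ref{t:generalSmallGraph} with the same choices of $Y$, $X_i$, $\lambda_i$, $\delta_i$, and $\omega(n)=2/\sqrt{n}$, verifying conditions (1)--(2) via Lemmas~\ref{l:cycleCount} and~\ref{l:smallGraphCovariance} and conditions (3)--(4) via Lemma~\ref{l:gt2Moment} together with the MWW identity. The paper simply cites \cite{MWW:09}*{Lemma 7.6} for (3) and (4) where you compute the series and compare numerators explicitly, and it does not spell out the uniformity in $\eta$ that you (correctly) flag.
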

\begin{proof}
The result follows from an application of Theorem~\ref{t:generalSmallGraph}, taking the $i$ to be even, $\lambda_i=r(d,i)/i$ and
$\delta_i=\left(\frac{\alpha\beta}{(1-\alpha)(1-\beta)}\right)^{i/2}$.  Condition (1) of the theorem holds by Lemma~\ref{l:cycleCount}.  Condition (2) holds by Lemma~\ref{l:smallGraphCovariance}.  Conditions (3) and (4) hold as a consequence of Lemma~\ref{l:gt2Moment} and \cite{MWW:09}*{Lemma 7.6}.  Taking $\omega(n)=\frac2{\sqrt{n}}$, the result follows.
\end{proof}

In Lemma~\ref{l:gtExpectedPartition} we gave estimates of the expected conditional partition functions.  In this subsection we of the small graph conditioning method results and give with high probability type estimates for the conditional partition functions.
\begin{theorem}\label{t:gtWHP}
For every $d\geq 3$ and $\lambda>\lambda_c$ such that Condition~\ref{cond:technical} holds there exists a positive constant $\varepsilon(d)>0$ and constants $\theta^*(\lambda,d),\psi^*(\lambda,d)>0$ such that the partition functions satisfy the following asymptotic almost sure statements,
\begin{equation}\label{e:gtWHPa}
\sup_{\eta \in \{0,1\}^U} \P\left(Z^\pm_{\gt}(\eta) < \frac1{\sqrt{n}} \E Z^{\pm}_{\gt}(\eta) \right) \to 0.
\end{equation}
\end{theorem}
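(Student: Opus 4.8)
The plan is to deduce~\eqref{e:gtWHPa} from the small graph conditioning method, Theorem~\ref{t:generalSmallGraph}, exactly as Lemma~\ref{l:whpSmallGraph} does, except that the variable $Y$ of that theorem will be taken to be a \emph{sum} of the terms $Z^{\alpha,\beta}_{\gt}(\eta)$ over a shrinking neighbourhood of the dominant point rather than a single term. I will treat $Z^+_{\gt}$; the argument for $Z^-_{\gt}$ is identical after interchanging the two sides of $\gt$, and this symmetry also shows that Lemmas~\ref{l:gt2Moment},~\ref{l:smallGraphCovariance} and~\ref{l:whpSmallGraph} are valid near $(p^+,p^-)$ and not merely near $(p^-,p^+)$, since $\tau^{\alpha,\beta}$, $\delta_i^{\alpha,\beta}$ and $\E Z^{\alpha,\beta}_{\gt}(\eta)$ are unchanged under $(\alpha,\beta)\mapsto(\beta,\alpha)$ together with the corresponding relabelling of $\eta$. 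Letting $\ell\ge2$ be the order of vanishing of $\Phi_1(\alpha,\beta)-\Phi_1(p^+,p^-)$ at its maximum, as in the proof of Lemma~\ref{l:gtExpectedPartition}, I would set
\[
N=\big\{(\alpha,\beta):\alpha n,\beta n\in\Z,\ \alpha\ge\beta,\ \|(\alpha,\beta)-(p^+,p^-)\|_\infty\le n^{-1/(2\ell)}\big\},\qquad Y(\eta)=\sum_{(\alpha,\beta)\in N}Z^{\alpha,\beta}_{\gt}(\eta).
\]
Then $Z^+_{\gt}(\eta)\ge Y(\eta)$ trivially, while inequality~\eqref{e:gtEZ2} gives $\E Y(\eta)=(1-o(1))\E Z^+_{\gt}(\eta)$ uniformly in $\eta$, and for $n$ large every $(\alpha,\beta)\in N$ lies in the $\chi$-neighbourhood of $(p^+,p^-)$ on which Lemmas~\ref{l:gt2Moment} and~\ref{l:smallGraphCovariance} apply. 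No normalisation of $Y(\eta)$ is needed, since hypotheses~(2) and~(4) of Theorem~\ref{t:generalSmallGraph} and its conclusion are invariant under multiplying $Y$ by a deterministic constant.

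I would then verify hypotheses~(1)--(3) of Theorem~\ref{t:generalSmallGraph} for $Y=Y(\eta)$. Hypothesis~(1) is Lemma~\ref{l:cycleCount}. For hypothesis~(2), writing $\E\big(Y(\eta)\prod_i[X_i]_{m_i}\big)=\sum_{(\alpha,\beta)\in N}\E\big(Z^{\alpha,\beta}_{\gt}(\eta)\prod_i[X_i]_{m_i}\big)$ and applying Lemma~\ref{l:smallGraphCovariance} termwise (its error being uniform over the shrinking set $N$ and over $\eta$) expresses the ratio $\E\big(Y(\eta)\prod_i[X_i]_{m_i}\big)/\E Y(\eta)$ as a convex combination, with weights proportional to $\E Z^{\alpha,\beta}_{\gt}(\eta)$, of the numbers $\prod_i\big(\lambda_i(1+\delta_i^{\alpha,\beta})\big)^{m_i}$ with $\delta_i^{\alpha,\beta}=\big(\alpha\beta/((1-\alpha)(1-\beta))\big)^{i/2}$; as this is continuous in $(\alpha,\beta)$ and $N$ collapses onto $(p^+,p^-)$, the ratio tends to $\prod_i\big(\lambda_i(1+\delta_i)\big)^{m_i}$ with $\delta_i=(q^+q^-)^{i/2}$, using $q^+q^-=p^+p^-/\big((1-p^+)(1-p^-)\big)$. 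Hypothesis~(3), $\sum_i\lambda_i\delta_i^2<\infty$, follows from $\lambda_i=r(d,i)/i\le2(d-1)^i/i$ and $(d-1)q^+q^-<1$ from~\eqref{e:extraConditions}, as in~\cite{MWW:09}*{Lemma 7.6}.

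The heart of the argument is hypothesis~(4), namely $\E[Y(\eta)^2]/(\E Y(\eta))^2\le\exp\big(\sum_i\lambda_i\delta_i^2\big)+o(1)$. I would bound the second moment of the sum termwise by Cauchy--Schwarz,
\[
\E\big[Y(\eta)^2\big]=\sum_{(\alpha,\beta),(\alpha',\beta')\in N}\E\big[Z^{\alpha,\beta}_{\gt}(\eta)\,Z^{\alpha',\beta'}_{\gt}(\eta)\big]\le\Big(\sum_{(\alpha,\beta)\in N}\sqrt{\E\big[Z^{\alpha,\beta}_{\gt}(\eta)^2\big]}\,\Big)^{2},
\]
then apply Lemma~\ref{l:gt2Moment}, which gives $\E\big[Z^{\alpha,\beta}_{\gt}(\eta)^2\big]=(1+o(1))\tau^{\alpha,\beta}\big(\E Z^{\alpha,\beta}_{\gt}(\eta)\big)^2$ uniformly over $(\alpha,\beta)\in N$ and $\eta$, and use that $\tau^{\alpha,\beta}$ is continuous, so $\sup_N\tau^{\alpha,\beta}=(1+o(1))\tau^{p^+,p^-}$; this yields $\E[Y(\eta)^2]\le(1+o(1))\tau^{p^+,p^-}(\E Y(\eta))^2$, and $\tau^{p^+,p^-}\le\exp\big(\sum_i\lambda_i\delta_i^2\big)$ by~\cite{MWW:09}*{Lemma 7.6}. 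Theorem~\ref{t:generalSmallGraph} with $\omega(n)=2/\sqrt n$ would then give $\P\big(Y(\eta)>\tfrac{2}{\sqrt n}\E Y(\eta)\big)\to1$ uniformly in $\eta$, and since $Z^+_{\gt}(\eta)\ge Y(\eta)$ and $\E Y(\eta)\ge\tfrac12\E Z^+_{\gt}(\eta)$ for large $n$, this is precisely~\eqref{e:gtWHPa}. The hard part will be the cross terms in hypothesis~(4): the window $N$ must be small enough that $\tau^{\alpha,\beta}$ and $\delta_i^{\alpha,\beta}$ stay uniformly within $o(1)$ of their values at $(p^+,p^-)$, so that Cauchy--Schwarz leaks no spurious polynomial factor, yet large enough that $\E Y(\eta)=(1-o(1))\E Z^+_{\gt}(\eta)$ by~\eqref{e:gtEZ2}; the radius $n^{-1/(2\ell)}$ threads this needle. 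A final routine check is that every $o(1)$ above is uniform over $\eta\in\{0,1\}^U$, which holds because $|U|$ is only a small power of $n$ and the perturbations from conditioning on $\eta$ are $O(|U|^2/n)=o(1)$ uniformly.
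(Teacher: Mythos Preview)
Your proposal is correct, and in fact it is arguably tighter than the paper's own argument. Both proofs localise to a neighbourhood of $(p^+,p^-)$, invoke the small graph conditioning method, and use \eqref{e:gtEZ2} to control the tail outside the window; but the paper applies Lemma~\ref{l:whpSmallGraph} \emph{termwise}, obtaining $\sup_{\eta}\sup_{(\alpha,\beta)\in\mathcal{S}}\P\big(Z^{\alpha,\beta}_{\gt}(\eta)<\tfrac{2}{\sqrt n}\E Z^{\alpha,\beta}_{\gt}(\eta)\big)\to0$ over a fixed $\delta$-ball $\mathcal{S}$, and then passes ``and hence'' to the same bound for the sum $\sum_{\mathcal{S}}Z^{\alpha,\beta}_{\gt}(\eta)$. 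You instead apply Theorem~\ref{t:generalSmallGraph} directly to the aggregate $Y(\eta)=\sum_{N}Z^{\alpha,\beta}_{\gt}(\eta)$ over a shrinking window, and the only new work is verifying hypothesis~(4) for the sum; your Cauchy--Schwarz manoeuvre $\E[Y^2]\le\big(\sum\sqrt{\E[(Z^{\alpha,\beta})^2]}\big)^2$ together with the uniform $\E[(Z^{\alpha,\beta})^2]=(1+o(1))\tau^{\alpha,\beta}(\E Z^{\alpha,\beta})^2$ from Lemma~\ref{l:gt2Moment} and the collapse $\sup_N\tau^{\alpha,\beta}\to\tau^{p^+,p^-}=\exp(\sum_i\lambda_i\delta_i^2)$ handles this cleanly.

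What your route buys is that the passage from individual terms to the sum is entirely explicit: the paper's ``and hence'' step tacitly relies on the fact that the small graph conditioning limit $Z^{\alpha,\beta}_{\gt}(\eta)/\E Z^{\alpha,\beta}_{\gt}(\eta)\Rightarrow W^{\alpha,\beta}$ is a function of the same cycle counts $X_i$ for every $(\alpha,\beta)$ (so the lower bounds are essentially simultaneous), whereas you sidestep this by running the conditioning method once on the whole sum. What the paper's route buys is that it recycles Lemma~\ref{l:whpSmallGraph} verbatim and needs no new second-moment computation. Your choice of window radius $n^{-1/(2\ell)}$ is exactly right: large enough that \eqref{e:gtEZ2} gives $\E Y(\eta)=(1-o(1))\E Z^+_{\gt}(\eta)$, small enough that $\tau^{\alpha,\beta}$ and each $\delta_i^{\alpha,\beta}$ are within $o(1)$ of their values at $(p^+,p^-)$, so Cauchy--Schwarz does not leak a polynomial factor. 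The uniformity in $\eta$ you flag at the end is indeed routine, since every error term in Lemmas~\ref{l:ratio1Moment}, \ref{l:gt2Moment} and~\ref{l:smallGraphCovariance} depends on $\eta$ only through $\eta^\pm\le|U|=O(n^{\theta+\psi})$.
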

\begin{proof}
Condition~\ref{cond:technical} guarantees that for $(\alpha,\beta)$ in a neighborhood of $(p^-,p^+)$ that $g_{\alpha,\beta}$ achieves its unique maxima in~\eqref{e:greekRegion} at $(\alpha^2,\beta^2,\alpha(1-\alpha-\beta))$.  For sufficiently small $\delta>0$ then by Lemma~\ref{l:whpSmallGraph} we have that
\[
\sup_\eta \sup_{(\alpha,\beta)\in\mathcal{S}} \P\left(Z^{\alpha,\beta}_{\gt}(\eta) < \frac2{\sqrt{n}} \E Z^{\alpha,\beta}_{\gt}(\eta) \right) \to 0
\]
where $\mathcal{S}=\{(\alpha,\beta):\|(\alpha,\beta)-(p^+,p^-)\|_\infty<\delta\}$ and hence
\[
\sup_\eta  \P\left( \sum_{(\alpha,\beta)\in\mathcal{S}} Z^{\alpha,\beta}_{\gt}(\eta) < \frac3{2\sqrt{n}}
\sum_{(\alpha,\beta)\in\mathcal{S}} \E Z^{\alpha,\beta}_{\gt}(\eta) \right) \to 0.
\]
By equation~\eqref{e:gtEZ2} we have that for all $\eta\in\{0,1\}^U$,
\[
\sum_{\alpha\geq \beta,(\alpha,\beta)\not\in\mathcal{S}} Z_{\gt}^{\alpha,\beta}(\eta) \leq \exp\left(-\frac{C}{2}n^{1/2}\right)\E Z^{+}_{\gt}(\eta).
\]
and hence we have that
\[
\sup_\eta \P\left(Z^+_{\gt}(\eta) < \frac1{\sqrt{n}} \E Z^{+}_{\gt}(\eta) \right) \to 0.
\]
The analogous statement for $Z^-_{\gt}(\eta)$ holds similarly which completes the lemma.
\end{proof}

\section{Reconstruction on the tree}\label{s:reconstruction}
Our proof now takes a detour through the reconstruction problem on the tree.  This problem concerns determining which Gibbs measures on the tree are extremal, or equivalently when the tail $\sigma$-algebra is trivial or when point-to-set correlations converge to 0 in the distance of the point to the set \cite{MosPer:03}.  In our setting the measures $\hmu_\pm$ are extremal so we automatically have that non-reconstruction holds.  We will use facts about the rate of decay of point-to-set correlations to establish that $\sigma_V$ is essentially independent of $\sigma_U$ conditioned on the phase.
In most cases the reconstruction problem has been considered in the case of the translation invariant free measure (see~\cite{BST:10} for recent progress on the hardcore model) but we will be interested in the case of the semi-translation invariant measures $\hat{\mu}_\pm$ on $\dtree$  and as such results from the literature do not directly apply here.

The reconstruction problem has for the most part been studied in the case of Markov models on trees with a single transition kernel $M$.
In this theory the key role is played by the $\lambda_*$ the second eigenvalue of the transition matrix.  The famous Kesten-Stigum bound \cite{KesSti:66,MosPer:03} states that there is reconstruction when $\lambda_*^2(d-1) > 1$ while results of~\cite{JanMos:04} show that if non-reconstruction holds and $\lambda_*^2(d-1) < 1$ then point to set correlations decay exponentially quickly.  In our setting, however, the Gibbs measure is semi-translation invariant and the Markov model is given by a pair of alternating Markov transition kernels, $M^\pm$ defined below.

With minor modifications the proof of \cite{BCMR:06} (or also \cite{Sly:09} or \cite{JanMos:04}) can be adapted to the semi-translation invariant setting.  Here the role of $\lambda_*$ is played by the second eigenvalue of $M_1 M_2$ and there is reconstruction when $\lambda^2_* (d-1)^2 > 1$ and exponential decay of correlations when there is non-reconstruction and $\lambda^2_* (d-1)^2 < 1$.  The term $(d-1)^2$ is explained by the fact that this this the branching from two levels of the tree.  Using the methods of~\cite{Sly:09} which build on the work of~\cite{BCMR:06} we establish the necessary decay of correlations result.

While we will be interested in the measure $\hmu$ it will be most convenient to work first on an adjusted Markov model $\tilde{\xi}^\pm$ on the tree $\dtree$ taking values in $\{0,1 \}^{\dtree}$ and then transfer results to $\hmu$.  The spin $\tilde{\xi}^\pm_\rho$ is chosen according to
\[
\P\left[ \tilde{\xi}^\pm_\rho = 1 \right] = 1 - \P\left[ \tilde{\xi}^\pm_\rho = 1 \right] = p^\pm.
\]
For the other vertices of the graph the values of $\tilde{\xi}^\pm$ will be propagated along the tree given though Markov transitions given by alternating transition kernels.  Specifically if vertex $u$ is the parent of $v$ in the tree then the spin at $v$ is defined according to the probabilities
\[
P(\tilde{\xi}^s_v = j|\sigma_u=i)=M^{s(-1)^{|v|}}_{i+1,j+1}.
\]
for $s\in\{-1,+1\}$ and $i,j\in{0,1}$ and where $|v|=d(\rho,v)$ and
\[
M^{+1}=\left(
      \begin{array}{cc}
        1-q^+ & q^+ \\
        1 & 0 \\
      \end{array}
    \right),
\qquad
M^{-1}=\left(
      \begin{array}{cc}
        1-q^- & q^- \\
        1 & 0 \\
      \end{array}
    \right).
\]
Viewing $\dtree\subset\mathbbm{T}_d$ we have that the measure of $\tilde{\xi}^\pm$ is simply the projection of $\mu^\pm$ to $\dtree$ (had we instead chosen $\P\left[ \tilde{\xi}^\pm_\rho = 1 \right]=q^\pm$ the $\tilde{\xi}^\pm$ would be given by $\hmu^\pm$).  It follows that
\[
\P\left[ \tilde{\xi}^s_v = 1 \right]=\begin{cases}
p^+ &\hbox{if }s(-1)^{|v|}=+1\\
p^- &\hbox{if }s(-1)^{|v|}=-1
\end{cases}.
\]
For a vertex $v\in T$ let $\dtree^v$ denote the subtree of descendants
of $v$ (including $v$).  Observe that the measure $\tilde{\xi}^s$ restricted to $\dtree^v$ is equal in distribution to $\tilde{\xi}^{s(-1)^{|v|}}$ on $\dtree$ appropriately shifted.  Now let $S_{v,\ell}$ denote the set of vertices in $\dtree$ which are $\ell$ levels below $v$ and let $\xi^\pm_{v,\ell}$ denote the configuration on $S_{v,\ell}$.  For a configuration $A$ on $S_{v,\ell}$ and $s\in\{-1,+1\}$ define the posterior function $\tilde{h}^s_{v,\ell}$ as
\[
\tilde{h}^s_{v,\ell}(A) = \P(\tilde{\xi}^s_v= 1 | \tilde{\xi}^s_{v,\ell} = A),
\]
We set
\[
\tilde{X}_{v,\ell,s} = \tilde{h}^s_{v,\ell}(\tilde{\xi}^s_{v,\ell})
\]
for $s\in\{+,-\}$.  Now since the measures $\mu^\pm$ are extremal it follows (see e.g. \cite{Mossel:04}) that
\begin{equation}\label{e:extremalConvergence}
\tilde{X}_{v,\ell,s} \stackrel{a.s}{\to} p^{s(-1)^{|v|}}.
\end{equation}
Moreover, if $u_1,\ldots,u_{d-1}$ are the children of $\rho$ then by standard tree recursions for Gibbs measures,
\begin{align}\label{e:treeRecursion}
\tilde{X}_{\rho,\ell,s} &= \frac{p^{s}\prod_{i=1}^d \frac1{1-p^{-s}}[1-\tilde{X}_{u_i,\ell-1,s}]}{p^{s}\prod_{i=1}^d (1-p^{-s})^{-1}[1-\tilde{X}_{u_i,\ell-1,s}]
+ (1-p^{s})\prod_{i=1}^d \left[\frac{q^{-s}}{p^{-s}}\tilde{X}_{u_i,\ell-1,s} + \frac{1-q^{-s}}{1-p^{-s}}[1-\tilde{X}_{u_i,\ell-1,s}]\right]}\nonumber\\
&= \frac{p^{s}\prod_{i=1}^d [1-\frac{\tilde{X}_{u_i,\ell-1,s}-p^{-s}}{1-p^{-s}}]}{p^{s}\prod_{i=1}^d [1-\frac{\tilde{X}_{u_i,\ell-1,s}-p^{-s}}{1-p^{-s}}]
+ (1-p^{s})\prod_{i=1}^d \left[1+[\tilde{X}_{u_i,\ell-1,s}-p^{-s}]\frac{p^s}{(1-p^{-s})(1-p^{s})}\right]} =:\frac{\mathcal{A}}{\mathcal{B}},
\end{align}
where the second inequality follows from equation~\eqref{e:pqRelation}.  Next, similarly to \cite{Sly:09}, we set
\[
x_{\ell,s} = \E^1 \tilde{X}_{\rho,\ell,s}-p^s.
\]
We will let $\E^1$ (resp. $\E^0$) denote the expectation conditional on the $\tilde{\xi}^s_\rho=1$ (resp. $\tilde{\xi}^s_\rho=0$).  With $u_1,\ldots,u_{d-1}$ the children of $\rho$ we have the following relationships of the $\tilde{X}_{u,\ell,s}$.
\begin{lemma}\label{l:reconProperties}
The following hold:
\begin{itemize}
\item Conditional on $\tilde{\xi}_\rho^s$ the $\tilde{X}_{u_i,\ell,s}$ are conditionally independent.
\item Also $\E^1(\tilde{X}_{u_i,\ell,s}-p^{-s}) = \frac{-p^{-s}}{1-p^{-s}} x_{\ell,-s}$.
\item\label{lab:varEquality} We have that $x_{\ell,s} = (p^s)^{-1} \E( \tilde{X}_{\rho,\ell,s}-p^{s} )^2$.
\item For all integers $k\geq 1$ we have that
\[\E^1( \tilde{X}_{u_i,\ell,s}-p^{-s} )^k = O(x_{\ell,-s}).\]
\end{itemize}
\end{lemma}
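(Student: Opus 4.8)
The plan is to derive all four items from the tree Markov property together with two elementary identities for the posterior‑mean variable. Throughout, for a vertex $v$ I would write $X=\tilde X_{v,\ell,s}$ and $p=\P(\tilde\xi^s_v=1)=p^{s(-1)^{|v|}}$, and use that $X=\P(\tilde\xi^s_v=1\mid\tilde\xi^s_{v,\ell})$ is a deterministic function of $\tilde\xi^s_{v,\ell}$.

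The first item is immediate: $\tilde X_{u_i,\ell,s}=\tilde h^s_{u_i,\ell}(\tilde\xi^s_{u_i,\ell})$ is a deterministic function of the restriction of $\tilde\xi^s$ to the subtree $\dtree^{u_i}$, and for a Markov model on a tree these restrictions are mutually independent once $\tilde\xi^s_\rho$ is fixed.

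Next I would record the computational core. Since $X$ is a function of $\tilde\xi^s_{v,\ell}$ and $\E[\mathbf 1_{\tilde\xi^s_v=1}\mid\tilde\xi^s_{v,\ell}]=X$, the tower rule gives $\E[X\mathbf 1_{\tilde\xi^s_v=1}]=\E[X^2]$, $\E[X\mathbf 1_{\tilde\xi^s_v=0}]=\E[X(1-X)]$, and more generally $\E[Y\mid\tilde\xi^s_v=0]=\E[Y(1-X)]/(1-p)\le\E[Y]/(1-p)$ for any nonnegative function $Y$ of $\tilde\xi^s_{v,\ell}$; with $\E X=p$ these yield
\begin{equation*}
\E[X\mid\tilde\xi^s_v=1]-p=\frac{\var(X)}{p},\qquad\E[X\mid\tilde\xi^s_v=0]-p=-\frac{\var(X)}{1-p}.
\end{equation*}
Applying the first identity at $v=\rho$ with $p=p^s$ gives the third item, $x_{\ell,s}=(p^s)^{-1}\E(\tilde X_{\rho,\ell,s}-p^s)^2$, and in particular $x_{\ell,s}\ge0$. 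For the second item I would observe that conditioning on $\tilde\xi^s_\rho=1$ forces $\tilde\xi^s_{u_i}=0$ since $M^{+1}_{2,2}=M^{-1}_{2,2}=0$; hence, by the tree Markov property, the conditional law of the restriction of $\tilde\xi^s$ to $\dtree^{u_i}$ given $\tilde\xi^s_\rho=1$ coincides with its conditional law given $\tilde\xi^s_{u_i}=0$. As this restriction has the law of $\tilde\xi^{-s}$, we have $\tilde X_{u_i,\ell,s}\stackrel{d}{=}\tilde X_{\rho,\ell,-s}$ and $\P(\tilde\xi^s_{u_i}=1)=p^{-s}$, so the second identity at $v=u_i$, combined with the third item applied to $-s$, gives $\E^1(\tilde X_{u_i,\ell,s}-p^{-s})=-\var(\tilde X_{\rho,\ell,-s})/(1-p^{-s})=-p^{-s}x_{\ell,-s}/(1-p^{-s})$.

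Finally, the fourth item: for $k=1$ it is the second item, while for $k\ge2$ I would use $|\tilde X_{u_i,\ell,s}-p^{-s}|\le1$ to get $|\tilde X_{u_i,\ell,s}-p^{-s}|^k\le(\tilde X_{u_i,\ell,s}-p^{-s})^2$ and then apply the conditional‑expectation inequality above with $Y=(\tilde X_{u_i,\ell,s}-p^{-s})^2$ to obtain $\E^1(\tilde X_{u_i,\ell,s}-p^{-s})^2\le\var(\tilde X_{u_i,\ell,s})/(1-p^{-s})=p^{-s}x_{\ell,-s}/(1-p^{-s})=O(x_{\ell,-s})$. The one step that needs genuine care is the identification of conditional laws in the second item: conditioning on $\tilde\xi^s_\rho=1$ could a priori reweight the entire subtree, and it is precisely the combination of the tree Markov property with the degeneracy $M^\pm_{2,2}=0$ (an occupied vertex forces its children to be vacant) that reduces it to conditioning the subtree's own root spin. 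Everything else is bookkeeping with the two posterior identities.
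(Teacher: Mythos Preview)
Your proof is correct and follows essentially the same route as the paper: the Markov property for item~1, the identification of conditioning on $\tilde\xi^s_\rho=1$ with conditioning on $\tilde\xi^s_{u_i}=0$ via the hardcore constraint, and the bound on $k$th moments by the second moment for item~4. The one difference is that you supply an explicit tower-property argument for the identity $x_{\ell,s}=(p^s)^{-1}\E(\tilde X_{\rho,\ell,s}-p^s)^2$, whereas the paper simply cites \cite{Sly:09}*{Lemma~2.2}; your organization around the pair of posterior identities $\E[X\mid\tilde\xi^s_v=1]-p=\var(X)/p$ and $\E[X\mid\tilde\xi^s_v=0]-p=-\var(X)/(1-p)$ is cleaner and makes the proof self-contained.
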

\begin{proof}
The first part follows from the Markov property of $\tilde{\xi}$.  The second follows from the fact that $\E \tilde{X}_{u_i,\ell-1,s}-p^{-s}=0$.  The third part follows from the proof of Lemma 2.2 of \cite{Sly:09}. Finally for the forth part we have that
\begin{align*}
\E^1( \tilde{X}_{u_i,\ell,s}-p^{-s} )^k &= \E\left[( \tilde{X}_{u_i,\ell,s}-p^{-s} )^k \mid \tilde{\xi}_{v_i}=0\right]\\
&= \E^0 \left[( \tilde{X}_{\rho,\ell,-s}-p^{-s} )^k \right].
\end{align*}
Now since $|\tilde{X}_{\rho,\ell,-s}-p^{-s}|\leq 1$ we have that
\[
\E^0 \left[( \tilde{X}_{\rho,\ell,-s}-p^{-s} )^k \right] \leq \E^0 \left[( \tilde{X}_{\rho,\ell,-s}-p^{-s} )^2 \right]
\]
for $k\geq 3$.  When $k=1$ we have
\[
\E^0 \left[( \tilde{X}_{\rho,\ell,-s}-p^{-s} ) \right] = -\frac{p^{-s}}{1-p^{-s}} \E^1 \left[( \tilde{X}_{\rho,\ell,-s}-p^{-s} ) \right] = O(x_{\ell,-s})
\]
while when $k=2$ we have that
\[
\E^0 \left[( \tilde{X}_{\rho,\ell,-s}-p^{-s} )^2 \right] \leq (1-p^{-s})^{-1}\E \left[( \tilde{X}_{\rho,\ell,-s}-p^{-s} )^2 \right] = O(x_{\ell,-s})
\]
which completes the proof.
\end{proof}
We now expand out equation~\eqref{e:treeRecursion} as
\begin{equation}\label{e:recursionExpansion}
\frac{\mathcal{A}}{\mathcal{B}} = \mathcal{A} - \mathcal{A}(\mathcal{B}-1) + (\mathcal{B}-1)^2\frac{\mathcal{A}}{\mathcal{B}} \leq \mathcal{A} - \mathcal{A}(\mathcal{B}-1) + (\mathcal{B}-1)^2
\end{equation}
since $\mathcal{A}\leq \mathcal{B}$.  We may expand out $\mathcal{B}-1$ and can express in the form
\[
\mathcal{B}-1 = \sum_{\alpha\in \{0,1\}^{d-1}} c_\alpha \prod_{i=1}^{d-1} (\tilde{X}_{u_i,\ell-1,s}-p^{-s})^{\alpha_i}
\]
where for some constants $c_\alpha$.  Moreover, $c_\alpha=0$ if $|\alpha|\in\{0,1\}$ where $|\alpha|=\sum_{i=1}^{d-1}\alpha_i$.  Since
\[
\E^1 \prod_{i=1}^{d-1} (\tilde{X}_{u_i,\ell-1,s}-p^{-s})^{\alpha_i} = O(x^{|\alpha|}_{\ell-1,-s})
\]
it follows from Lemma~\ref{l:reconProperties} that
\begin{equation}\label{e:calAExpansion}
\E^1 \left[ -\mathcal{A}(\mathcal{B}-1) + (\mathcal{B}-1)^2 \right] = O(x^2_{\ell-1,-s}).
\end{equation}
Similarly we have that
\begin{align}\label{e:calBExpansion}
\E^1 \mathcal{A} &= p^s + \sum_{i=1}^d \E^1 \frac{p^s}{1-p^{-s}} [\tilde{X}_{u_i,\ell-1,s}-p^{-s}] +  O(x^2_{\ell-1,-s}) \nonumber\\
&= p^s + \frac{(d-1)p^s p^{-s}}{(1-p^{-s})^2} x_{\ell-1,-s} +  O(x^2_{\ell-1,-s}).
\end{align}
Combining equations \eqref{e:recursionExpansion},\eqref{e:calAExpansion} and~\eqref{e:calBExpansion} we have that
\[
x_{\ell,s} = \E^1 \tilde{X}_{\rho,\ell,s} - p^s = \frac{(d-1)p^s p^{-s}}{(1-p^{-s})^2} x_{\ell-1,-s} +  O(x^2_{\ell-1,-s}).
\]
and after iterating we have that
\begin{align}\label{e:eRecursiveXn}
x_{\ell,s} = \E^1 \tilde{X}_{\rho,\ell,s} - p^s &= \frac{(d-1)^2 (p^s p^{-s})^2}{(1-p^s)^2(1-p^{-s})^2} x_{\ell-2,s} +  O(x^2_{\ell-2,s})\nonumber\\
 &= (d-1)^2 (q^+ q^{-})^2 x_{\ell-2,s} +  O(x^2_{\ell-2,s}) .
\end{align}
Now by equation~\eqref{e:extremalConvergence} we have that $x_{\ell,s} \to 0$ as $\ell \to 0$ and hence by equation~\eqref{e:eRecursiveXn} it converges exponentially fast to 0 as one of our initial assumptions in equation~\eqref{e:extraConditions} was that $q^+ q^- (d-1) < 1$.  It follows that by the second part of Lemma~\ref{l:reconProperties}  that there exist constants $\tilde{C}_1(\lambda,d),\tilde{C}_2(\lambda,d)>0$ such that,
\begin{equation}\label{e:treeReconDecay}
\E| \tilde{X}_{v,\ell,s} -p^{s(-1)^{|v|}}|^2 \leq \tilde{C}_1 \exp(-\tilde{C}_2 \ell).
\end{equation}
We now define $\xi^{s,v}$ for $s\in\{-1,+1\}$ and $v\in\dtree$ as the Markov model on the subtree $\dtree^v$ with the same transition matrices but so that the initial distribution at $v$ is given by
\[
\P\left[ \xi^{s,v}_v = 1 \right] = 1 - \P\left[ \xi^{s,v}_v = 1 \right] = q^{s(-1)^{|v|}}.
\]
With this initial distribution $\xi^{s,v}$ is distributed according to the extremal hardcore measure $\hmu^{s(-1)^{|v|}}$ on $\dtree^v$.
Analogously to $\tilde{\xi}$, for a configuration $A$ on $S_{v,\ell}$ and $s\in\{-1,+1\}$ we define the posterior function $h^s_{v,\ell}$ as
\[
h^s_{v,\ell}(A) = \P(\xi^{s,v}_v= 1 | \xi^{s,v}_{v,\ell} = A),
\]
for $s\in\{+,-\}$.  By the definition of conditional probability and the Markov property of $\xi$ and $\tilde{\xi}$,
\begin{align*}
\frac{\P(\xi^{s,v}_v=1 | \xi^{s,v}_{v,\ell} = A)}{\P(\xi^{s,v}_v=0 | \xi^{s,v}_{v,\ell} = A)} &= \frac{\P(\xi^{s,v}_{v,\ell} = A | \xi^{s,v}_v=1)}{ \P(\xi^{s,v}_{v,\ell} = A| \xi^{s,v}_v=0)}\cdot \frac{\P(\xi^{s,v}_v=1)}{\P(\xi^{s,v}_v=0)}\\
&= \frac{\P(\tilde{\xi}^s_{v,\ell} = A | \tilde{\xi}^s_v=1)}{ \P(\tilde{\xi}^s_{v,\ell} = A| \tilde{\xi}^s_v=0)}\cdot \frac{\P(\xi^{s,v}_v=1)}{\P(\xi^{s,v}_v=0)}\\
&= \frac{\P(\tilde{\xi}^s_v=1 | \tilde{\xi}^s_{v,\ell} = A)}{\P(\tilde{\xi}^s_v=0 | \tilde{\xi}^s_{v,\ell} = A)} \cdot \frac{\P(\xi^{s,v}_v=1)\P(\tilde{\xi}^s_v=0)}{\P(\xi^{s,v}_v=0)\P(\tilde{\xi}^s_v=1)}.
\end{align*}
It follows that for some constant $C$ and any configuration $A$ on $S_{v,\ell}$  that
\begin{equation}\label{e:hTildeHComparison}
|h^s_{v,\ell}(A)-q^{s(-1)^{|v|}}|\leq C|\tilde{h}^s_{v,\ell}(A)-p^{s(-1)^{|v|}}|.
\end{equation}
We define
\[
X_{v,\ell,s} = h^s_{v,\ell}(\xi^{s,\rho}_{v,\ell}).
\]
Note that we are taking the posterior function for $v$ but the Markov model rooted at $\rho$ which is a standard object in the recursive analysis of Gibbs measures on trees.  By the Markov property and equation~\eqref{e:hTildeHComparison} we have that
\begin{align*}
\E| X_{v,\ell,s} -q^{s(-1)^{|v|}}|^2 & \leq C\E\left[| \tilde{X}_{v,\ell,s} -p^{s(-1)^{|v|}}|^2 \mid \xi^{s,\rho}_{v}=1\right] + C\E\left[| \tilde{X}_{v,\ell,s} -p^{s(-1)^{|v|}}|^2 \mid \xi^{s,\rho}_{v}=0\right]\\
&\leq 2C \left(\min\{p^{s(-1)^{|v|}},1-p^{s(-1)^{|v|}}\}\right)^{-1} \E | \tilde{X}_{v,\ell,s} -p^{s(-1)^{|v|}}|^2
\end{align*}
and hence we may conclude that there exists constants, $C_1,C_2>0$ such that for all $v\in\dtree$ and $s\in\{-1,+1\}$ we have that
\begin{equation}\label{e:treeReconDecay2}
\E| X_{v,\ell,s} -q^{s(-1)^{|v|}}| \leq C_1 \exp(-C_2 \ell).
\end{equation}

With this result we prove the following stronger version with strong concentration of $X_{\rho,\ell,s}$ around  $q^{s}$.  Similar bounds on this quantity had previously been developed in the colouring model~\cite{BVVW:08} to establish fast mixing of the block dynamics on tree and our proof is partially adapted from theirs.
\begin{lemma}\label{l:reconstructionConcentration}
When $\lambda>\lambda_c$, $q^+ \leq \tfrac35$ and $q^+ q^- < 1/(d-1)$  there exist constants $\zeta_1(\lambda,d),\zeta_2(\lambda,d)>0$ such that for $s\in\{+,-\}$ and for large $\ell$,
\[
\P\left(| X_{\rho,\ell,s} -q^{s}| \geq \exp(-\zeta_1 \ell)\right) \leq \exp\left( - \exp(\zeta_2\ell)\right).
\]
\end{lemma}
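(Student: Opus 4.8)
The plan is a multiscale argument exploiting the self-similar structure of the Markov model on $\dtree$. The basic observation is that, conditioned on the spin $\xi^{s,\rho}_\rho$, the $d-1$ subtrees rooted at the children of $\rho$ are independent, so the posteriors $X_{u_i,\ell-1,s}$ are conditionally i.i.d., each distributed (up to the parity flip) as $X_{\rho,\ell-1,-s}$ conditioned on its own root spin; iterating, conditioning on the configuration $\xi^{s,\rho}_{\rho,h}$ at level $h$ makes the $(d-1)^h$ subtree posteriors at level $h$ conditionally independent. Together with the tree recursion \eqref{e:treeRecursion}, which expresses $X_{\rho,\ell,s}$ through the level-$h$ posteriors, this supplies both the recursion and the independence the argument needs; the two parities $s=\pm$ are treated simultaneously.

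The deterministic ingredient is a robustness statement for \eqref{e:treeRecursion}. If every vertex at level $h$ has posterior within a small constant $\delta_0$ of the relevant target ($q^+$ or $q^-$, according to parity), then pushing \eqref{e:treeRecursion} up to $\rho$ contracts these deviations: at the fixed point the derivative of $X_v$ in a child posterior equals $-q^{s(-1)^{|v|}}$ by \eqref{e:pqRelation}, so the path derivative from a level-$h$ vertex to $\rho$ has magnitude $(q^+q^-)^{h/2}$ and, as there are $(d-1)^h$ such vertices, $|X_{\rho,\ell,s}-q^s|\le C\rho_0^{h}\delta_0$ where $\rho_0=((d-1)^2q^+q^-)^{1/2}<1$ in the parameter range under consideration (the nonlinear remainder is negligible since the deviations stay $\le\delta_0$ along the iteration, and one tracks the odd intermediate levels where a single step of \eqref{e:treeRecursion} need not contract). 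The second, more delicate, half is to show that this conclusion survives when a small fraction of the level-$h$ posteriors are allowed to be arbitrary in $[0,1]$ (``atypical''): an atypical subtree at distance $k$ from $\rho$, with its off-path neighbours near the fixed point, perturbs $X_\rho$ by a factor damped geometrically in $k$, so such subtrees are harmless unless they cluster. The hypotheses $q^+q^-(d-1)<1$ and $q^+\le\tfrac35$ enter precisely here, controlling the contraction rate, the nonlinear terms, and the worst-case damping of atypical subtrees.

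For the probabilistic amplification I would fix a threshold level $h=\zeta\ell$ for a small constant $\zeta>0$ and condition on $\xi^{s,\rho}_{\rho,h}$. By \eqref{e:treeReconDecay2} and Markov's inequality each of the $(d-1)^h$ level-$h$ subtree posteriors is within $\delta_0$ of its target except with probability $\gamma\le C_1\delta_0^{-1}e^{-C_2(\ell-h)}$, and these events are conditionally independent, so a Chernoff bound gives that more than a fixed small fraction $\gamma_0(d-1)^h$ of the subtrees are atypical only with probability $\exp(-\Omega(\gamma_0(d-1)^h))$; comparable union bounds rule out, with the same order of probability, those arrangements of atypical subtrees (clusters near $\rho$, and more generally any configuration dense enough to defeat the robustness lemma). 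Since $h=\zeta\ell$, all of these are $\exp(-\exp(\Omega(\ell)))$, and on the complementary event the robustness lemma yields $|X_{\rho,\ell,s}-q^s|\le C\rho_0^{\zeta\ell}\delta_0\le\exp(-\zeta_1\ell)$ with $\zeta_1$ proportional to $\zeta\log(1/\rho_0)$, which is the assertion of the lemma.

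The hard part will be the deterministic robustness lemma, and above all the estimate on how a sparse set of atypical subtree posteriors propagates up through the nonlinear recursion: the geometric damping in the distance to $\rho$ has to be played off against the $(d-1)$-fold branching, which only closes because, with overwhelming conditional probability, the atypical set is spread out rather than concentrated, so the deterministic damping estimate and the probabilistic spreadness estimate must be calibrated against each other. Carrying the alternation of $q^+$ and $q^-$ along every root-to-leaf path through all of this, and reducing the posteriors $X_{v,\ell-j,s}$ for $v$ at level $j$ back to root posteriors via the self-similarity, accounts for most of the remaining bookkeeping.
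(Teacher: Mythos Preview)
Your overall architecture---condition on the configuration at an intermediate level $h=\Theta(\ell)$, use conditional independence of the subtree posteriors there, and bootstrap the $L^1$ decay \eqref{e:treeReconDecay2} into doubly-exponential concentration---is exactly the paper's. The execution, however, diverges in a way that creates work you do not need to do, and contains a gap.

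Your deterministic robustness step linearises at the fixed point and sums path derivatives over all $(d-1)^h$ leaves to get $|X_{\rho,\ell,s}-q^s|\le C\rho_0^h\delta_0$ with $\rho_0=((d-1)^2q^+q^-)^{1/2}$. Two issues. First, the one-step derivative at the fixed point is $-q^s(1-q^s)/(1-q^{-s})$, not $-q^{s(-1)^{|v|}}$; the two-step product still collapses to $q^+q^-$, so your path derivative $(q^+q^-)^{h/2}$ survives for even $h$, but the single-step claim is wrong. Second, and more seriously, $\rho_0<1$ means $(d-1)^2q^+q^-<1$, which is \emph{strictly stronger} than the stated hypothesis $(d-1)q^+q^-<1$; at $\lambda=\lambda_c$ one has $\rho_0=1$ exactly, and you have not argued that $q^+q^-$ decreases past criticality. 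Without $\rho_0<1$ the ``all posteriors within $\delta_0$'' case does not contract, and the whole machinery of atypical subtrees and their spreading---which you correctly identify as the hard part---becomes load-bearing with no proof supplied.

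The paper avoids this entirely. The hypothesis $q^+\le\tfrac35$ is used \emph{uniformly}, not just at the fixed point: it guarantees $1-X_{u_i,\ell-1,s}\ge\tfrac13$ deterministically once $\ell$ is large, whence a one-line calculus estimate on $x\mapsto 1/(1+\alpha x)$ for $x\ge\tfrac13$ gives $|\partial X_v/\partial X_{v_j}|\le\tfrac34$ everywhere. Iterating, changing one level-$L$ posterior by $\Delta$ moves $X_\rho$ by at most $(3/4)^L\Delta$, so by the triangle inequality
\[
|X_{\rho,\ell,s}-q^s|\le(3/4)^L\sum_{u\in S_{\rho,L}}|X_{u,\ell-L,s}-q^{s(-1)^L}|.
\]
There is no typical/atypical split: one bounds this $\ell^1$-sum directly by an exponential-moment calculation using the conditional independence and \eqref{e:treeReconDecay2}, obtaining $\sum_u|X_u-q|\le(5/4)^L$ except with probability $\exp(-\exp(\Omega(L)))$, provided $L=r\ell$ with $r$ small enough that $(d-1)^Le^{-C_2(\ell-L)}\ll(5/4)^L$. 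On that event $|X_\rho-q^s|\le(15/16)^L$, done. The point is that controlling $\sum_u|X_u-q|$ rather than the fraction of bad $u$ reduces the deterministic step to a triangle inequality, and the hypothesis $(d-1)q^+q^-<1$ is used only to feed in \eqref{e:treeReconDecay2}, not for contraction of the recursion itself.
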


Note that the condition $q^+ \leq \tfrac35$ is not necessary but simplifies the proof and holds in the regions of interest.
\begin{proof}
We first observe that the $X_{\rho,\ell,s}$ also satisfy a standard recursive relationship.  If $v\in\dtree$  and $v_1,\ldots,v_{d-1}$ are its children then the standard tree recursion for Gibbs measures of the hardcore model gives,
\begin{equation}\label{e:treeRecursionNoTilde}
X_{v,\ell,s} = \frac{\lambda\prod_{i=1}^{d-1}(1-X_{v_i,\ell,s})}{1+\lambda\prod_{i=1}^{d-1}(1-X_{v_i,\ell,s})}.
\end{equation}
Now note that for any $\delta>0$ there exists $\ell'(d,\lambda,\delta)$ such that for $\ell>\ell'$ we have that $X_{v,\ell,s} < q^+ + \delta$ since this is the case for even conditioning $S_{v,\ell}$ to be all 0 or 1. Now for $0<L<\ell$ write $\mathcal{X}_{L,\ell,s}$ for the vector $\{X_{v,\ell-L,s}:v\in S_{\rho,L}\}$.  Observe that by recursively applying equation~\eqref{e:treeRecursionNoTilde} we can write
\[
X_{\rho,\ell,s} = g(\mathcal{X}_{L,\ell,s}).
\]
Suppose that $\mathcal{X}_{L,\ell,s},\mathcal{X}_{L,\ell,s}'$ are two vectors which are equal except at some $u\in S_{\rho,L}$.  We will now estimate $|g(\mathcal{X}_{L,\ell,s})-g(\mathcal{X}_{L,\ell,s}')|$.  First consider one step of the recursion~\eqref{e:treeRecursionNoTilde} (i.e. the case $L=1$).  Let $u_1,\ldots,u_{d-1}$ be the children of $\rho$ and suppose that $u=u_1$. This implies that
\begin{align*}
&|g(\mathcal{X}_{1,\ell,s})-g(\mathcal{X}_{1,\ell,s}')|
=\left|\frac{1}{1+\lambda\prod_{i=1}^{d-1}(1-X_{u_i,\ell-1,s})}-\frac{1}{1+\lambda\prod_{i=1}^{d-1}(1-X_{u_i,\ell-1,s}')} \right|.
\end{align*}
Now for $\alpha>0$ we have that $\left|\frac{d}{dx} \frac1{1+\alpha x}\right| =\frac{\alpha}{(1+\alpha x)^2}$.  If $x \geq \frac13$ then by a simple optimization we have that $\frac{\alpha}{(1+\frac13\alpha)^2} \leq\frac34$.  Hence if $\ell-L > \ell'(d,\lambda,1/15)$ then
\[
\min\{1-X_{u_i,\ell-1,s},1-X_{u_i,\ell-1,s}'\} > 1-(q^+ + \frac1{15})\geq \frac13.
\]
It follows that
\begin{align*}
|g(\mathcal{X}_{1,\ell,s})-g(\mathcal{X}_{1,\ell,s}')|
&=\left|\int_{1-X_{u,\ell-1,s}}^{1-X_{u,\ell-1,s}'} \frac{\lambda\prod_{i=2}^{d-1}(1-X_{u_i,\ell-1,s})}{(1+ x \lambda\prod_{i=2}^{d-1}(1-X_{u_i,\ell-1,s}))^2}  dx \right|\\
&\leq\left|\int_{1-X_{u,\ell-1,s}}^{1-X_{u,\ell-1,s}'}\frac34 dx \right|= \frac34 \left|X_{u,\ell-1,s} -X_{u,\ell-1,s}' \right|.
\end{align*}
Recursively applying this relation implies that for all $L$ such that  $\ell-L > \ell'(d,\lambda,1/15)$,
\[
|g(\mathcal{X}_{L,\ell,s})-g(\mathcal{X}_{L,\ell,s}')| \leq \left(\frac34\right)^L \left|X_{u,\ell-L,s} -X_{u,\ell-L,s}' \right|.
\]
By the Markov property of the configuration we have that the elements of $\mathcal{X}_{L,\ell,s}$ are conditionally independent given $\xi_{\rho,L}^{s,\rho}$.  Moreover, for $u\in S_{\rho,L}$ the Markov property also implies that $X_{u,\ell,s}$ depends on $\xi_{\rho,L}^{s,\rho}$ only through $\xi^{s,\rho}_{u}$. Since $\P(\xi^{s,\rho}_{u}=1)$ and $\P(\xi^{s,\rho}_{u}=0)$ are strictly bounded away from 0 independent of $L$ we have that by equation~\eqref{e:treeReconDecay2},
\begin{equation}\label{e:onePointCorrelations}
\E\left[| X_{u,\ell,s} -q^{s(-1)^{|u|}} \mid \xi_{\rho,L}^{s,\rho} \right] \leq C_1' \exp(-C_2 \ell).
\end{equation}
Now choose some constant $0<r<1$ such that $r\log(d-1) - C_2(1-r) < r\log(5/4)$ and set $L=\lfloor r\ell\rfloor$.
By Markov's inequality,
\begin{align*}
\P\left( \sum_{u\in S_{\rho,L}} \left| X_{u,\ell-L,s} - q^{s(-1)^{L}}\right| > (\frac54)^L \mid \xi_{\rho,L}^{s,\rho}\right)
&\leq \frac{\E\left( \exp\left(\sum_{u\in S_{\rho,L}} \left| X_{u,\ell-L,s} - q^{s(-1)^{L}}\right|\right) \mid \xi_{\rho,L}^{s,\rho}\right)}{\exp((\frac54)^L)}\\
&\leq \frac{\prod_{u\in S_{\rho,L}} \E \left(\exp\left( \left| X_{u,\ell-L,s} - q^{s(-1)^{L}}\right|\right) \mid \xi_{\rho,L}^{s,\rho}\right)}{\exp((\frac54)^L)}\\
&\leq \frac{\prod_{u\in S_{\rho,L}} \left(1+ eC_1' \exp(-C_2 (\ell-L))\right)}{\exp((\frac54)^L)}\\
&\leq \exp\left((d-1)^L eC_1' \exp(-C_2 (\ell-L)) - (\frac54)^L\right)\\
&\leq \exp\left(- \exp(\zeta_2\ell) \right)
\end{align*}
where the last inequality holds for large $\ell$ when $0<\zeta_2 <r \log\frac54$.  Now if $X_{u,\ell-L,s} = q^{s(-1)^{L}}$ for all $u\in S_{\rho,L}$ then $g(\mathcal{X}_{L,\ell,s})=q^s$ by the standard tree recursions.  By equation~\eqref{e:onePointCorrelations} we have that if
\[
\sum_{u\in S_{\rho,L}} \left| X_{u,\ell-L,s} - q^{s(-1)^{L}}\right| \leq (\frac54)^L
\]
then
\[
| X_{\rho,\ell,s} -q^{s}| \leq O\left((\frac34 \cdot\frac54)^{r\ell}\right)
\]
and so the lemma holds taking $\zeta_1 < r\log \frac{16}{15}$.
\end{proof}

\subsection{The measure on $\sigma_V$}
For compactness of notation we will write the results of this section in terms of the plus phase but the analogous results will hold equally for the minus phase.  Let $\qt_U^+$ denote measure on $\{0,1\}^U$ given by
\[
\qt_U^+(\eta) = \P_{\gt}\left(\sigma_U=\eta \mid Y(\sigma)= + \right).
\]
Lemma~\ref{l:gtExpectedPartition} shows that in expectation at least $\qt_U^+(\eta)$ behaves like $Q_U^+(\eta)$.

The graph $G$ consists of $\gt$ together with a collection of $(d-1)$-ary trees attached to $U$.  Let $I_\eta(G\setminus\gt)$ denote the independent sets on $G\setminus\gt$ which are compatible with the boundary condition $\eta$.  Then the measure on the new part of  $G$ is given by
\[
\P_{G}\left(\sigma_{(G\setminus \gt)\cup U} \mid Y(\sigma)= + \right)  = \frac{\qt_U^+(\sigma_U) \mathbbm{1}_{\{\sigma_{G\setminus \gt}\in I_{\sigma_U}(G\setminus\gt)\}}\lambda^{|\sigma_{G\setminus \gt}|}} {\sum_{\sigma'\in\{0,1\}^{(G\setminus \gt)\cup U}} \qt_U^+(\sigma_U') \mathbbm{1}_{\{\sigma_{G\setminus \gt}'\in I_{\sigma_U'}(G\setminus\gt)\}}\lambda^{|\sigma_{G\setminus \gt}'|}}.
\]
Now since $\qt_U^+(\eta) \approx Q_U^+(\eta)$, at least in expectation, it will also be of interest to consider the measure
\[
P^*(\sigma_{(G\setminus \gt)\cup U}) = \frac{Q_U^+(\sigma_U) \mathbbm{1}_{\{\sigma_{G\setminus \gt}\in I_{\sigma_U}(G\setminus\gt)\}}\lambda^{|\sigma_{G\setminus \gt}|}} {\sum_{\sigma'\in\{0,1\}^{(G\setminus \gt)\cup U}} Q_U^+(\sigma_U') \mathbbm{1}_{\{\sigma_{G\setminus \gt}'\in I_{\sigma_U'}(G\setminus\gt)\}}\lambda^{|\sigma_{G\setminus \gt}'|}}.
\]
The graph $(G\setminus \gt)\cup U$ consists of $(d-1)$-ary trees of depth $2\lfloor\frac{\psi}{2} \log_{d-1}n\rfloor$ rooted at each of the vertices of $V$ and the leaves constitute $U$.  For each $v\in V$ let $T_v$ denote the tree attached to $v$.
\begin{lemma}\label{e:GTreeMeasure}
A configuration $\sigma\in\{0,1\}^{(G\setminus\gt)\cup U}$ distributed according to $P^*$ has the following properties:
\begin{enumerate}
\item The collection of projections $\{\sigma_{T_v}\}_{v \in V}$ are independent.
\item For each $v\in V^\pm$ the measure on $\sigma_{T_v}$ is  given by the projection of $\hmu^\pm$ onto the first $2\lfloor\frac{\psi}{2} \log_{d-1}n\rfloor$ rows of the infinite $(d-1)$-ary tree.
\end{enumerate}
\end{lemma}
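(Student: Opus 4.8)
The plan is to exploit the structure recorded just above: the graph underlying $P^*$, namely $(G\setminus\gt)\cup U$, is the disjoint union $\bigsqcup_{v\in V}T_v$ of rooted $(d-1)$-ary trees of common depth $h:=2\lfloor\frac{\psi}{2}\log_{d-1}n\rfloor$, and $\mathrm{leaves}(T_v)\subseteq U^+$ when $v\in V^+$ while $\mathrm{leaves}(T_v)\subseteq U^-$ when $v\in V^-$. Part~(1) will then be a bare factorisation of the density of $P^*$, and part~(2) an identification of each factor with a finite projection of $\hmu^{\pm}$.

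\textbf{Part (1).} I would substitute into the definition of $P^*$ the product structure of each of its three ingredients. Since $(G\setminus\gt)\cup U$ has no edge joining distinct trees, $\sigma_{G\setminus\gt}\in I_{\sigma_U}(G\setminus\gt)$ holds if and only if $\sigma_{T_v}\in I(T_v)$ for every $v\in V$; also $\lambda^{|\sigma_{G\setminus\gt}|}=\prod_{v\in V}\lambda^{|\sigma_{T_v\cap(G\setminus\gt)}|}$, the exponents counting occupied non-leaf vertices of $T_v$; and $Q_U^+$ is a product over $U$, so $Q_U^+(\sigma_U)=\prod_{v\in V}\prod_{u\in\mathrm{leaves}(T_v)}(q^{s_v})^{\sigma_u}(1-q^{s_v})^{1-\sigma_u}$, where $s_v\in\{+,-\}$ is the sign of the half containing $v$. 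Hence the numerator of $P^*$, and after summation its normalising constant, factor over $v\in V$, giving $P^*(\sigma)=\prod_{v\in V}P^*_v(\sigma_{T_v})$ with $P^*_v$ the probability measure on $\{0,1\}^{T_v}$ proportional to $\lambda^{|\sigma_{T_v\cap(G\setminus\gt)}|}\,\mathbbm 1_{\sigma_{T_v}\in I(T_v)}\prod_{u\in\mathrm{leaves}(T_v)}(q^{s_v})^{\sigma_u}(1-q^{s_v})^{1-\sigma_u}$. This is exactly the independence claim~(1), and it simultaneously exhibits $P^*_v$ as the hardcore measure on the finite tree $T_v$ in which the fugacity weight at each leaf is replaced by an external field whose occupied-to-vacant ratio is $q^{s_v}/(1-q^{s_v})$.

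\textbf{Part (2).} It remains to identify this $P^*_v$ with the restriction of $\hmu^+$ (for $v\in V^+$), respectively $\hmu^-$ (for $v\in V^-$), to the first $h$ rows of $\dtree$; by symmetry I treat only the plus case. Write $\hmu^+$ as the weak limit as $\ell\to\infty$ of the hardcore measure $\hmu^+_\ell$ on the ball $B_{2\ell}$ of $\dtree$ with level $2\ell$ conditioned to be fully occupied. Summing $\hmu^+_\ell$ over the vertices strictly below level $h$ and using the Markov property of Gibbs measures on trees, the restriction of $\hmu^+_\ell$ to $B_h$ is (up to an overall constant) the hardcore measure on $B_h$ in which the fugacity at each level-$h$ vertex is replaced by an external field with occupied-to-vacant ratio $X^{(2\ell-h)}/(1-X^{(2\ell-h)})$, where $X^{(0)}=1$ and $X^{(k)}=\frac{\lambda(1-X^{(k-1)})^{d-1}}{1+\lambda(1-X^{(k-1)})^{d-1}}$ is the one-dimensional hardcore recursion on the $(d-1)$-ary tree. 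Since $x\mapsto\lambda(1-x)^{d-1}/(1+\lambda(1-x)^{d-1})$ is decreasing, a monotonicity argument using $\lambda>\lambda_c$ shows that its even and odd iterates from $X^{(0)}=1$ converge respectively to $q^+$ and $q^-$, the $2$-periodic fixed point of~\eqref{e:qTreeRelation}. As $h$ is even, $2\ell-h$ is even, so letting $\ell\to\infty$ the leaf field ratio tends to $q^+/(1-q^+)$; hence the restriction of $\hmu^+$ to the first $h$ rows is exactly $P^*_v$ for $v\in V^+$. For $v\in V^-$ one conditions level $2\ell+1$ instead, $2\ell+1-h$ is odd, and the leaf field ratio tends to $q^-/(1-q^-)$, again matching $P^*_v$.

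The only point requiring care is the parity bookkeeping in part~(2): one must notice that because the $T_v$ have \emph{even} depth $h$ and their leaves carry the field $q^{\pm}$ imposed by $Q_U^{\pm}$, the weak-limit construction of $\hmu^{\pm}$ has to be read along the correct parity of iterates of the hardcore recursion, and that the even (resp.\ odd) iterates started from full occupation tend to $q^+$ (resp.\ $q^-$) --- which is precisely where the hypothesis $\lambda>\lambda_c$, guaranteeing the genuinely $2$-periodic fixed point with $q^-<q^+$, enters. Everything else is routine factorisation and marginalisation of a Gibbs measure on a tree.
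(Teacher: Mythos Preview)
Your argument is correct. Part~(1) is identical to the paper's: both simply note that the $T_v$ are disjoint and that $Q_U^+$ is a product measure, so the density of $P^*$ factors over $v\in V$.

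For part~(2) you take a genuinely different route from the paper. You compute directly: marginalising the finite-volume approximant $\hmu^+_\ell$ of $\hmu^+$ to the first $h$ levels gives the hardcore measure on $B_h$ with leaf field $X^{(2\ell-h)}/(1-X^{(2\ell-h)})$, and by the monotone convergence of the even/odd iterates of the tree recursion this field tends to $q^\pm/(1-q^\pm)$, matching the description of $P^*_v$ you isolated in part~(1). The paper explicitly remarks that such a direct computation works, but instead presents a ``replica'' argument: it builds an auxiliary infinite graph $\tilde G^*$ consisting of independent infinite $(d-1)$-ary trees rooted at the vertices of $U$, each carrying the Gibbs measure $\hmu^+$ or $\hmu^-$ according to side, so that the marginal on $U$ is \emph{exactly} $Q_U^+$; appending the finite trees $T_v$ then yields infinite $(d-1)$-ary trees rooted at $V$, and by the semi-translation invariance of $\hmu^\pm$ (and the evenness of $h$) the resulting measure is again the product of $\hmu^\pm$ over $V^\pm$. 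Your approach is more elementary and self-contained, at the cost of the parity bookkeeping you flag; the paper's approach is slicker and is meant to advertise the heuristic picture---that conditioning on the phase makes $\gt$ look, from the point of view of $U$, like a forest of infinite trees in the extremal measure---which is the conceptual backbone of the whole reduction.

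One minor remark: the convergence $X^{(2k)}\to q^+$, $X^{(2k+1)}\to q^-$ is essentially definitional here (it is how $q^\pm$ are defined via the weak limits $\hmu^\pm$), so $\lambda>\lambda_c$ is not strictly needed for the identification in this lemma---it only ensures $q^+\neq q^-$, which matters elsewhere.
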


\begin{proof}
Since the trees $T_v$ are disconnected and $Q_U^+$ is a product measure the independence of the $\sigma_{T_v}$ is immediate.  Verifying the distribution of $\sigma_{T_v}$ can easily be calculated via direct computation of the measure on the trees.  However, we present a different proof which we feel better illustrates the replica method intuition underlying the result.

Suppose that in the graph $\gt$ we had that $\qt_U^+(\eta) = Q_U^+(\eta)$ holds exactly.
Then the projection of the measure onto $(G\setminus\gt)\cup U$ is exactly given by $P^*$.  Note that this did not depend on the structure of $\gt$ except through $\qt_U^+(\eta)$.  So consider the graph $\tilde{G}^*$ which consists on $2m'$ infinite $(d-1)$-ary trees whose roots we identify with $U$.  Now take the Gibbs  measure $\P_{\tilde{G}^*}$ on configurations on $\tilde{G}^*$ as follows: the measure is a product measure over the different trees and the measure restricted to an individual tree is $\hmu^+$ for trees rooted in $U^+$ and $\hmu^-$ for trees rooted in $U^-$.

Note that with this choice of graph and Gibbs measure the  measure  $\P_{\tilde{G}^*}(\sigma_U \in \cdot)$  is $Q_U^+$.  Now construct $G^*$ by appending trees onto $U$ identically as in the construction of $G$. The resulting graph is a collection of $2m$ disconnected $(d-1)$-ary trees rooted at the vertices of $V$.  The resulting Gibbs measure $\P_{G^*}$  is  a product measure over the trees of $G^*$ and restricted to individual trees it corresponds to the measure $\hmu^+$ for trees rooted in $V+$ and $\hmu^-$ for trees rooted in $V^-$.  By construction the measure on $(G^* \setminus \tilde{G}^*) \cup U$ is identical to $P^*$ which completes the proof.
\end{proof}

Appending trees onto $\gt$ to construct $G$ has the effect of reweighting the projection of the measure on $\{0,1\}^U$.  If we denote
\[
\kappa(\eta) = \sum_{\sigma\in I_\eta(G\setminus\gt)} \lambda^{|\sigma|}
\]
for $\eta\in\{0,1\}^U$ then
\[
\P_{G}\left(\sigma_{U}=\eta \mid Y(\sigma)= + \right)  = \frac{\qt_U^+(\eta) \kappa(\eta)} {\sum_{\eta\in\{0,1\}^{ U}} \qt_U^+(\eta)\kappa(\eta) }.
\]
while
\begin{equation}\label{e:pStarDefn}
\P^*\left(\sigma_{U}=\eta \right)  = \frac{Q_U^+(\eta) \kappa(\eta)} {\sum_{\eta\in\{0,1\}^{ U}} Q_U^+(\eta)\kappa(\eta) }.
\end{equation}
Let $\mathcal{B}$ denote the set of configurations $\eta$ which have a large influence on $\sigma_V$ by
\[
\mathcal{B} = \left\{\eta\in\{0,1\}^U: \sup_{v\in V} |\P_G(\sigma_v=1\mid\sigma_U=\eta) - Q_V^+(\sigma_v=1)|> \exp(-2 \zeta_1 \lfloor\frac{\psi}{2} \log_{d-1}n\rfloor)\right\},
\]
where $\zeta_1$ is as in Lemma~\ref{l:reconstructionConcentration}
Observe that by the Markov property that the distribution of $\sigma_v$ depends only on $\eta$ through the leaves of $T_v$ and hence that $\mathcal{B}$ is independent of $\gt$.  By Lemma~\ref{l:reconstructionConcentration} and a union bound we have that
\begin{equation}\label{e:ProbBadB}
\P^*\left(\sigma_{U}\in \mathcal{B} \right) \leq |V| \exp\left( - \exp\left(2 \zeta_2 \lfloor\frac{\psi}{2} \log_{d-1}n\rfloor\right)\right).
\end{equation}
We are now ready to prove Theorem~\ref{t:Gproperties}.

\begin{proof}(Theorem~\ref{t:Gproperties})

The number of vertices follows immediately from the construction.  First choose $\theta$ small enough so that
\begin{equation}\label{e:thetaCond}
\theta < \min \{\frac{\zeta_1 \psi}{5 \log(d-1)}, \frac{\zeta_2 \psi}{3} \}.
\end{equation}
We will first prove equation~\eqref{e:GpropB}.  Since $\sigma_V$ is conditionally a product measure given $\sigma_U=\eta$ we have that  then for all $\eta'\in\{0,1\}^V$ that,
\begin{align*}
\left|\frac{\P_G\left(\sigma_V=\eta'\mid \sigma_U=\eta\right)}{Q_V^+(\eta')} -1 \right|
\leq \left| \prod_{v\in V} \frac{\P_G\left(\sigma_v=\eta_v'\mid \sigma_U=\eta\right)}{Q_V^+(\sigma_v=\eta_v')} -1 \right|.
\end{align*}
Now if $\eta\in \{0,1\}^U \setminus \mathcal{B}$ then by the definition of $\mathcal{B}$ we have that

\[
\left|\frac{\P_G\left(\sigma_v=\eta_v'\mid \sigma_U=\eta\right)}{Q_V^+(\sigma_v=\eta_v')} -1 \right| \leq O\left(\exp(-2 \zeta_1 \lfloor\frac{\psi}{2} \log_{d-1}n\rfloor)\right).
\]
It follows that if we take $\theta$ according to~\eqref{e:thetaCond}
then for large $n$ and all $\eta\in \{0,1\}^U \setminus \mathcal{B}$ and $\eta'\in\{0,1\}^V$ then,
\begin{equation}\label{e:weakEtaDependence}
\left|\frac{\P_G\left(\sigma_V=\eta'\mid \sigma_U=\eta\right)}{Q_V^+(\eta')} -1 \right| \leq n^{-3\theta},
\end{equation}
since $|V|=O(n^\theta)$.  Hence we have that
\begin{equation}\label{e:lInfinityBound}
\max_{\eta'} \left| \frac{\P_G(\sigma_V=\eta'|Y=+)}{Q_V^+(\sigma_V)} - 1\right| \leq n^{-3\theta} + \frac{\P_G(\sigma_U\in \mathcal{B}\mid Y=+)}{\min_{\eta'} Q^+_V(\eta')}.
\end{equation}
Now since $|V|=O(n^\theta)$ we have the  inequality $\max_{\eta'} \left( Q^+_V(\eta')\right)^{-1}= \exp\left( O(n^\theta)\right)$.  To get a bound on $\P_G(\sigma_U\in \mathcal{B}\mid Y=+)$ recall that
\begin{align}\label{e:ratioBadBRep}
\P_{G}\left(\sigma_{U}=\eta \mid Y(\sigma)= + \right)
&= \frac{\sum_{\eta\in\mathcal{B}}\P_{\gt}\left(\sigma_{U}=\eta \mid Y + \right) \kappa(\eta)} {\sum_{\eta\in\{0,1\}^{ U}} \P_{\gt}\left(\sigma_{U}=\eta \mid Y = +\right) \kappa(\eta) } \nonumber\\
&= \frac{\sum_{\eta\in\mathcal{B}} Z^+_{\gt}(\eta) \kappa(\eta)} {\sum_{\eta\in\{0,1\}^{ U}} Z^+_{\gt}(\eta) \kappa(\eta) }
\end{align}
Now by Theorem~\ref{t:gtWHP}
\begin{equation}\label{e:tGpropProofA}
\P\left(\sum_{\eta\in\{0,1\}^{ U}} Z^+_{\gt}(\eta) \kappa(\eta)  < \frac1{2\sqrt{n}} \E \sum_{\eta\in\{0,1\}^{ U}} Z^+_{\gt}(\eta) \kappa(\eta)  \right) \to 0
\end{equation}
while by Markov's inequality
\begin{equation}\label{e:tGpropProofB}
\P\left(\sum_{\eta\in\mathcal{B}} Z^+_{\gt}(\eta) \kappa(\eta)  > \frac12\sqrt{n} \E \sum_{\eta\in\mathcal{B}} Z^+_{\gt}(\eta) \kappa(\eta)  \right) \to 0.
\end{equation}
Further, recall that by Lemma~\ref{l:gtExpectedPartition} for all $\eta$,
\begin{equation}\label{e:tGpropProofC}
\E Z^+_{\gt}(\eta) = (1+o(1))Q_U^+(\eta)\E Z^+_{\gt},
\end{equation}
and hence by equations~\eqref{e:pStarDefn} and~\eqref{e:ProbBadB}
\begin{align}\label{e:tGpropProofD}
\frac{\sum_{\eta\in\mathcal{B}} \E Z^+_{\gt}(\eta) \kappa(\eta)} {\sum_{\eta\in\{0,1\}^{ U}} \E Z^+_{\gt}(\eta) \kappa(\eta) }
&= (1+o(1))\frac{\sum_{\eta\in\mathcal{B}} \E \qt^+_{U}(\eta) \kappa(\eta)} {\sum_{\eta\in\{0,1\}^{ U}} \E \qt^+_{U}(\eta) \kappa(\eta) } \nonumber\\
&= \P^*\left(\sigma_{U}\in \mathcal{B} \right) \leq |U|\exp\left( - \exp\left(2 \zeta_2 \lfloor\frac{\psi}{2} \log_{d-1}n\rfloor\right)\right).
\end{align}
Combining equations~\eqref{e:ratioBadBRep}, \eqref{e:tGpropProofA}, \eqref{e:tGpropProofB}, \eqref{e:tGpropProofC} and~\eqref{e:tGpropProofD} it follows that for large $n$,
\[
\P_G(\sigma_U\in \mathcal{B}\mid Y=+) \leq n|V| \exp\left( - \exp\left(2 \zeta_2 \lfloor\frac{\psi}{2} \log_{d-1}n\rfloor\right)\right).
\]
Now provided that $\theta$ satisfies~\eqref{e:thetaCond} then for large $n$,
\[
\P_G(\sigma_U\in \mathcal{B}\mid Y=+) \leq \exp(-n^{2\theta}).
\]
Substituting this into~\eqref{e:lInfinityBound}
\[
\max_{\eta'} \left| \frac{\P_G(\sigma_V=\eta'|Y=+)}{Q_V^+(\sigma_V)} - 1\right| \leq 2 n^{-3\theta},
\]
with room to spare for large $n$.  The analogous statement for the minus phase
\[
\max_{\eta'} \left| \frac{\P_G(\sigma_V=\eta'|Y=-)}{Q_V^-(\sigma_V)} - 1\right| \leq 2 n^{-3\theta},
\]
holds similarly and combining the two establishes equation~\eqref{e:GpropB}.

To establish~\eqref{e:GpropA} we will show that with high probability
\[
\frac{Z^+_G}{Z^-_G}>\frac2n, \quad \frac{Z^-_G}{Z^+_G} \geq \frac2n.
\]
By equation~\eqref{e:gtEZb} $\E Z^+_{\gt}=(1+o(1)) \E Z^-_{\gt}$ and \eqref{e:gtEZa} shows that for all $\eta\in\{0,1\}^U$ we have that
$\E Z^+_{\gt}(\eta)=(1+o(1)) \E Z^-_{\gt}(\eta)$.  In particular we have that since
\[
\E Z^\pm_{G} = \sum_{\eta\in\{0,1\}^{ U}} \E Z^\pm_{\gt}(\eta) \kappa(\eta)
\]
and so
\[
\E Z^+_{G} = (1+o(1)) \E Z^-_{G}.
\]
Now by Theorem~\ref{t:gtWHP}
\begin{equation}
\P\left(Z^\pm_{G}  < \frac1{2\sqrt{n}} \E Z^\pm_{G}  \right)=\P\left(\sum_{\eta\in\{0,1\}^{ U}} Z^\pm_{G}(\eta) \kappa(\eta)  < \frac1{2\sqrt{n}} \E \sum_{\eta\in\{0,1\}^{ U}} Z^\pm_{G}(\eta) \kappa(\eta)  \right) \to 0
\end{equation}
while by Markov's inequality
\begin{equation}
\P\left(Z^+_{G}> \frac13\sqrt{n} \E Z^+_{G} \right) \to 0.
\end{equation}
Combining the previous two equations establishes equation~\eqref{e:GpropA} and completes the proof.
\end{proof}

\section{Technical Condition}\label{s:computerAssit}

Our last result is to verify Condition~\ref{cond:technical} in the case that $\lambda=1,d=6$.  With $g_{\alpha,\beta}(\gamma,\delta,\epsilon)=f(\alpha,\beta,\gamma,\delta,\epsilon)$  we use a computer assisted proof to show that $g_{\alpha,\beta}$ attains its unique maximum in the set~\eqref{e:greekRegion} at the point $(\gamma^*,\delta^*,\epsilon^*)=(\alpha^2,\beta^2,\alpha(1-\alpha-\beta))$ for $(\alpha,\beta)$ in a neighbourhood of $(p^-,p^+)$.  These values are approximately  $p^+\approx 0.40831988 , q^- \approx 0.03546955$ (see~\cite{DFJ:02}).  In~\cite{MWW:09}*{Lemma 6.3} it is shown that $g_{\alpha,\beta}$ is maximized as a function of $\epsilon$ by taking
\begin{equation}\label{e:eHatdefn}
\hat{\epsilon}(\alpha,\beta,\gamma,\delta)=\frac12 \left[ 1 +\alpha -\beta  -2\gamma - \sqrt{(1-\alpha-\beta)^2 + 4(\alpha-\gamma)(\beta-\delta)} \right].
\end{equation}
It thus suffices to show that $\hat{g}_{\alpha,\beta}(\cdot,\cdot)=g_{\alpha,\beta}(\cdot,\cdot,\hat\epsilon)$ is maximized at $(\alpha^2,\beta^2)$ for
\[
0\leq \gamma \leq \alpha, \quad 0 \leq \delta \leq \beta.
\]
It is easy to establish that if
\begin{align*}
f^*(\alpha,\beta,\gamma,\delta) = f^1(\alpha,\gamma) + f^2(\beta,\delta)
\end{align*}
where
\begin{align*}
f^1(\alpha,\gamma) &=  H(\alpha)+H_1(\gamma,\alpha)+H_1(\alpha-\gamma,1-\alpha)\\
f^2(\beta,\delta) &= H(\beta) + H_1(\delta,\beta)  + H_1(\beta-\delta,1-\beta)
\end{align*}
then $f\leq f^*$ since $f-f^*$ is the log of  a probability.  Now as a function of $\gamma$, $f_1(\alpha,\gamma)$ is maximized at is maximized at $\gamma=\alpha^2$.  Similarly as a function of $\delta$, $f_2(\beta,\delta)$ is maximized at $\delta=\beta^2$ and is increasing (resp. decreasing) in $\delta$ for $\delta < \beta^2$ (resp. $\delta>\beta^2$). Direct computation then shows that
\begin{align*}
\hat{g}_{p^-,p^+}((p^-)^2,(p^+)^2) &> 1.430 > 1.425 > f^1(p^-,(p^-)^2) + f^2(p^+,0.015)\\
\hat{g}_{p^-,p^+}((p^-)^2,(p^+)^2) &> 1.430 > 1.414 > f^1(p^-,(p^-)^2) + f^2(p^+,0.330)\\
\end{align*}
so for $(\alpha,\beta)$ in a small enough neighbourhood of $(p^-,p^+)$ and $\delta \in [0,\frac{15}{1000}] \cup [\frac{33}{100},\beta]$ and all $0\leq \gamma\leq \alpha$ we have that
\[
\hat{g}_{\alpha,\beta}(\alpha,\beta) > g_{\alpha,\beta}(\gamma,\delta)
\]
so it suffices to consider the set
\begin{equation}\label{e:optimsationArea}
0\leq \gamma \leq \alpha, \quad \frac{15}{1000} \leq \delta \leq \frac{33}{100}
\end{equation}
which we will denote $\Upsilon=\Upsilon(\alpha,\beta)$.  By~\cite{MWW:09}*{Lemma 5.1} the function $\hat{g}_{\alpha,\beta}(\cdot,\cdot,\hat\epsilon)$ has a stationary point at $(\alpha^2,\beta^2)$ so the result will follow by showing that the Hessian matrix $D^2 \hat{g}_{\alpha,\beta}(\cdot,\cdot)$ is negative definite in the region defined by~\eqref{e:optimsationArea}.  This in turn follows as we have that $D^2 g_{p^-,p^+}$ is negative definite at $((p^-)^2,(p^+)^2)$ (via a direct computation) and that
\begin{equation}\label{e:optDeterminant}
\det D^2 \hat{g}_{\alpha,\beta}(\cdot,\cdot)>0
\end{equation}
in the region defined by~\eqref{e:optimsationArea} for $(\alpha,\beta)$ sufficiently close to $(p^-,p^+)$. This is performed using a computer assisted proof.
By~\cite{MWW:09}*{Lemma 6.4} this determinant is given by
\[
\det D^2 \hat{g}_{\alpha,\beta}(\cdot,\cdot,\hat\epsilon)= \left(\frac{\partial f}{\partial^2\gamma} + \frac{\partial \hat\epsilon}{\partial\gamma} \frac{\partial f}{\partial\gamma\partial\epsilon} \right)\left(\frac{\partial f}{\partial^2\delta} + \frac{\partial \hat\epsilon}{\partial\delta} \frac{\partial f}{\partial\delta\partial\epsilon} \right) - \left(\frac{\partial f}{\partial\gamma\partial\delta} + \frac{\partial \hat\epsilon}{\partial\gamma} \frac{\partial f}{\partial\delta\partial\epsilon} \right)^2(\cdot,\cdot,\hat\epsilon).
\]
where the expressions for the partial derivative are given in~\cite{MWW:09}*{Lemma 6.2 and 6.4} as follows,
\begin{align*}
\frac{\partial f}{\partial^2\gamma} &= -\frac{6}{1-2\beta+\delta-\gamma-\epsilon}-\frac{6}{\alpha-\gamma-\epsilon} + \frac{5}{1-2\alpha+\gamma}\\
&\qquad -\frac{6}{\beta-\delta-(\alpha-\gamma-\epsilon)}+\frac{6}{1-\beta-\gamma-\epsilon}+\frac{4}{\alpha-\gamma}-\frac1{\gamma}\\
\frac{\partial f}{\partial\gamma\partial\epsilon} &= -\frac{6}{1-2\beta+\delta-\gamma-\epsilon}-\frac{6}{\alpha-\gamma-\epsilon}  -\frac{6}{\beta-\delta-(\alpha-\gamma-\epsilon)}+\frac{6}{1-\beta-\gamma-\epsilon}\\
\frac{\partial f}{\partial^2\delta} &= -\frac{6}{1-2\beta+\delta-\gamma-\epsilon} + \frac{5}{1-2\beta+\delta}\\
&\qquad -\frac{6}{\beta-\delta-(\alpha-\gamma-\epsilon)} + \frac{4}{\beta-\delta}-\frac1{\delta}\\
\frac{\partial f}{\partial\delta\partial\epsilon} &= \frac{\partial f}{\partial\gamma\partial\delta} =\frac{6}{1-2\beta+\delta-\gamma-\epsilon} + \frac{6}{\beta-\delta-(\alpha-\gamma-\epsilon)}
\end{align*}
and
\begin{align*}
\frac{\partial \hat\epsilon}{\partial\gamma} &= -1 + \frac{\beta-\delta}{\sqrt{(1-\alpha-\beta)^2+4(\alpha-\gamma)(\beta-\delta)}}\\
\frac{\partial \hat\epsilon}{\partial\delta} &= \frac{\alpha-\gamma}{\sqrt{(1-\alpha-\beta)^2+4(\alpha-\gamma)(\beta-\delta)}}.
\end{align*}

Showing that the determinant is always positive is done using a computer assisted proof.  Mathematica can perform interval arithmetic which given a function $h(\cdot)$ and an interval $[x,y]$ will return an interval containing the range of $h([x,y])$.  This gives rigorous upper and lower bounds on the function including rounding in a conservative (i.e. rigorous) manner.  This approach is slightly complicated by the fact that a couple of the terms go to infinity at the boundary of~\eqref{e:optimsationArea}.  We, therefore, do our estimates in a couple of stages.  First let
\[
h_1 = \frac{\partial f}{\partial^2\delta} + \frac{\partial \hat\epsilon}{\partial\delta} \frac{\partial f}{\partial\delta\partial\epsilon}.
\]
Then
\[
\max_{(\gamma,\delta)\in\Upsilon} h_1(p^-,p^+,\gamma,\delta,\hat\epsilon)
\leq \max_{0\leq i \leq 99,1\leq j \leq 32} h_1\left(p^-,p^+, \left[\frac{\alpha i}{100}, \frac{\alpha(i+1)}{100}\right],\left[\frac{j}{100},\frac{j+1}{100}\right],\hat\epsilon\right) < -17.
\]
where the last inequality is derived using mathematica's interval arithmetic.  In particular we can always take $h_1$ to be negative for $(\alpha,\beta)$ close to $(p^-,p^+)$.

We now analyse the term $\frac{4}{\alpha-\gamma} - \frac{\beta-\delta}{\sqrt{(1-\alpha-\beta)^2 + 4(\alpha-\gamma)(\beta-\delta)} } \frac{6}{\alpha-\gamma-\hat\epsilon}$ which appears in
$\left(\frac{\partial f}{\partial^2\gamma} + \frac{\partial \hat\epsilon}{\partial\gamma} \frac{\partial f}{\partial\gamma\partial\epsilon} \right)$.
As $\gamma$ goes to $\alpha$ this term diverges so we need some further estimates on it before applying interval arithmetic analysis.  When $0\leq y \leq \frac54$ we have that
\[
\left(1+\frac25 y\right)^2 \leq 1+y
\]
and so since
\[
\frac{4(\alpha-\gamma)(\beta-\delta)}{(1-\alpha-\beta)^2}\leq \frac{4\alpha\beta}{(1-\alpha-\beta)^2}\leq 0.19
\]
when $(\alpha,\beta)$ is close to $(p^-,p^+)$ we can take
\[
1+\frac25 \cdot \frac{4(\alpha-\gamma)(\beta-\delta)}{(1-\alpha-\beta)^2} \leq \sqrt{1+\frac{4(\alpha-\gamma)(\beta-\delta)}{(1-\alpha-\beta)^2}}.
\]
Rearranging we conclude that
\[
\sqrt{(1-\alpha-\beta)^2 + 4(\alpha-\gamma)(\beta-\delta)} - (1-\alpha-\beta) \leq \frac35 \frac{4(\alpha-\gamma)(\beta-\delta)}{\sqrt{(1-\alpha-\beta)^2 + 4(\alpha-\gamma)(\beta-\delta)}}.
\]
Now using this inequality and plugging in the definition of $\hat\epsilon$ we have that
\begin{align*}
\alpha-\gamma-\hat\epsilon &= \alpha-\gamma-\frac12 \left[ 1 +\alpha -\beta  -2\gamma - \sqrt{(1-\alpha-\beta)^2 + 4(\alpha-\gamma)(\beta-\delta)}\right] \\
&= \frac12 \left[ \sqrt{(1-\alpha-\beta)^2 + 4(\alpha-\gamma)(\beta-\delta)} - (1-\alpha-\beta)\right]\\
&\leq  \frac65 \frac{(\alpha-\gamma)(\beta-\delta)}{\sqrt{(1-\alpha-\beta)^2 + 4(\alpha-\gamma)(\beta-\delta)}}
\end{align*}
and hence we have that
\begin{align*}
\frac{4}{\alpha-\gamma} -\frac{\beta-\delta}{\sqrt{(1-\alpha-\beta)^2 + 4(\alpha-\gamma)(\beta-\delta)} } \frac{6}{\alpha-\gamma-\hat\epsilon}\leq \frac{-1}{\alpha-\gamma}.
\end{align*}
for $(\alpha,\beta)$ in a neighbourhood of $(p^-,p^+)$.  It follows that
\begin{align*}
\frac{\partial f}{\partial^2\gamma} + \frac{\partial \hat\epsilon}{\partial\gamma} \frac{\partial f}{\partial\gamma\partial\epsilon} & \leq -\frac{6}{1-2\beta+\delta-\gamma-\hat\epsilon} + \frac{5}{1-2\alpha+\gamma} -\frac{6}{\beta-\delta-(\alpha-\gamma-\hat\epsilon)}+\frac{6}{1-\beta-\gamma-\hat\epsilon}\\
&\qquad-\frac{1}{\max\{\frac1{10000},\alpha-\gamma\}}
-\frac1{\max\{\frac1{10000},\gamma\}}\\
&+\left(-1 + \frac{\beta-\delta}{\sqrt{(1-\alpha-\beta)^2+4(\alpha-\gamma)(\beta-\delta)}}\right)\\
&\qquad \cdot \left(-\frac{6}{1-2\beta+\delta-\gamma-\hat\epsilon}  -\frac{6}{\beta-\delta-(\alpha-\gamma-\hat\epsilon)}+\frac{6}{1-\beta-\gamma-\hat\epsilon}\right)
\end{align*}
where we denote the right hand side by $\Psi(\alpha,\beta,\gamma,\delta)$.  Denote
\[
\Phi(\alpha,\beta,\gamma,\delta):= \left[\Psi(\alpha,\beta,\gamma,\delta) \left(\frac{\partial f}{\partial^2\delta} + \frac{\partial \hat\epsilon}{\partial\delta} \frac{\partial f}{\partial\delta\partial\epsilon} \right) - \left(\frac{\partial f}{\partial\gamma\partial\delta} + \frac{\partial \hat\epsilon}{\partial\gamma} \frac{\partial f}{\partial\delta\partial\epsilon} \right)^2 \right]
\]
Since $h_1(p^-,p^+,\gamma,\delta,\hat\epsilon)$ is negative throughout $\Upsilon$ we have that for $(\alpha,\beta)$ in a small neighbourhood of $(p^-,p^+)$,
\begin{align*}
\max_{(\gamma,\delta)\in\Upsilon} \det D^2 \hat{g}_{\alpha,\beta} & \geq \max_{(\gamma,\delta)\in\Upsilon} \Phi(\alpha,\beta,\gamma,\delta).
\end{align*}
Now applying a computer assisted proof using interval arithmetics we get that
\begin{align*}
\max_{(\gamma,\delta)\in\Upsilon} \Phi(p^-,p^+,\gamma,\delta) \geq \max_{0\leq i \leq 99,1\leq j \leq 32} \Phi\left(p^-,p^+,\left[\frac{\alpha i}{100}, \frac{\alpha(i+1)}{100}\right],\left[\frac{j}{100},\frac{j+1}{100}\right]\right) > 1500.
\end{align*}
By continuity of $\Phi$ this inequality  also holds for $(\alpha,\beta)$ in a small neighbourhood of $(p^-,p^+)$. This then establishes  Condition~\ref{cond:technical} in the case that $\lambda=1,d=6$.

\begin{bibdiv}
\begin{biblist}

\bib{AchCoj:08}{article}{
  title={{Algorithmic barriers from phase transitions}},
  author={Achlioptas, D.},
  author={Coja-Oghlan, A.},
  booktitle={IEEE 49th Annual IEEE Symposium on Foundations of Computer Science, 2008. FOCS'08},
  pages={793--802},
  year={2008}
}

\bib{Aldous:01}{article}{
  title={{The $\zeta(2)$ limit in the random assignment problem}},
  author={Aldous, D.J.},
  journal={Random Structures and Algorithms}
  volume={18},
  pages={381--418},
  year={2001}
}

\bib{BKMP:05}{article}{
  title={{Glauber dynamics on trees and hyperbolic graphs}},
  author={Berger, N.},
  author={Kenyon, C.},
  author={Mossel, E.},
  author={Peres, Y.},
  journal={Probability Theory and Related Fields},
  volume={131},
  pages={311--340},
  year={2005},
}

\bib{BerFuj:99}{article}{
  title={{On approximation properties of the independent set problem for low degree graphs}},
  author={Berman, P.},
  author={Fujito, T.},
  journal={Theory of Computing Systems},
  volume={32},
  pages={115--132},
  year={1999},
  publisher={Springer}
}

\bib{BST:10}{article}{
  title={{Reconstruction Threshold for the Hardcore Model}},
  author={Bhatnagar, N.},
  author={Sly, A.},
  author={Tetali, P.},
  journal={arXiv:1004.3531},
  year={2010}
}

\bib{BVVW:08}{article}{
  title={{Reconstruction for colorings on trees}},
  author={Bhatnagar, N.},
  author={Vera, J.},
  author={Vigoda, E.},
  author={Weitz, D.},
  journal={To appear in SIAM J. on Discrete Math.},
  year={2008}
}

\bib{BCMR:06}{article}{
  title={{The Kesten-Stigum reconstruction bound is tight for roughly symmetric binary channels}},
  author={Borgs, C.},
  author={Chayes, J.},
  author={Mossel, E.},
  author={Roch, S.},
  booktitle={Foundations of Computer Science, 2006. FOCS'06. 47th Annual IEEE Symposium on},
  pages={518--530},
  year={2006}
}

\bib{DemMon:10}{article}{
  title={{Ising models on locally tree-like graphs}},
  author={Dembo, A.},
  author={Montanari, A.},
  journal={Ann. Appl. Probab},
  volume={20},
  pages={565--592},
  year={2010}
}

\bib{DFJ:02}{article}{
  title={{On Counting Independent Sets in Sparse Graphs}},
  author={Dyer, M.},
  author={Frieze, A.},
  author={Jerrum, M.},
  journal={SIAM Journal on Computing},
  volume={31},
  pages={1527},
  year={2002}
}

\bib{DyGr:00}{article}{
  title={{On Markov Chains for Independent Sets}},
  author={Dyer, M.},
  author={Greenhill, C.},
  journal={Journal of Algorithms},
  volume={35},
  pages={17--49},
  year={2000},
  publisher={Elsevier}
}

\bib{GalTet:06}{article}{
  title={{Slow mixing of Glauber dynamics for the hard-core model on regular bipartite graphs}},
  author={Galvin, D.},
  author={Tetali, P.},
  journal={Random Structures and Algorithms},
  volume={28},
  pages={427--443},
  year={2006},
}

\bib{Georgii:88}{book}{
  title={{Gibbs measures and phase transitions}},
  author={Georgii, H.O.},
  year={1988},
  publisher={Walter de Gruyter}
}

\bib{Greenhill:02}{article}{
  title={{The complexity of counting colourings and independent sets in sparse graphs and hypergraphs}},
  author={Greenhill, C.},
  journal={Computational Complexity},
  volume={9},
  pages={52--72},
  year={2000},
  publisher={Springer}
}

\bib{JanMos:04}{article}{
  title={{Robust reconstruction on trees is determined by the second eigenvalue}},
  author={Janson, S.},
  author={Mossel, E.},
  journal={Annals of Probability},
  volume={32},
  pages={2630--2649},
  year={2004}
}

\bib{JLR:00}{book}{
  title={{Random graphs}},
  author={Janson, S.},
  author={{\L}uczak, T.},
  author={Ruci{\'n}ski, A.},
  year={2000},
  publisher={Wiley}
}

\bib{JerSin:90}{article}{
  title={{Polynomial-time approximation algorithms for the Ising model}},
  author={Jerrum, M. },
  author={Sinclair, A.},
  journal={Automata, Languages and Programming},
  pages={462--475},
  year={1990},
  publisher={Springer}
}

\bib{Kelly:85}{article}{
  title={{Stochastic models of computer communication systems}},
  author={Kelly, FP},
  journal={Journal of the Royal Statistical Society. Series B (Methodological)},
  volume={47},
  number={3},
  pages={379--395},
  year={1985},
  publisher={Royal Statistical Society}
}

\bib{KesSti:66}{article}{
  title={{Additional limit theorems for indecomposable multidimensional Galton-Watson processes}},
  author={Kesten, H.},
  author={Stigum, B.P.},
  journal={The Annals of Mathematical Statistics},
  volume={37},
  pages={1463--1481},
  year={1966}
}

\bib{KMRSZ:07}{article}{
  title={{Gibbs states and the set of solutions of random constraint satisfaction problems}},
  author={Kr\c{z}aka{\l}a, F.},
  author={Montanari, A.},
  author={Ricci-Tersenghi, F.},
  author={Semerjian, G.},
  author={Zdeborov{\'a}, L.},
  journal={Proceedings of the National Academy of Sciences},
  volume={104},
  pages={10318},
  year={2007},
}

\bib{LubVig:99}{article}{
  title={{Fast convergence of the Glauber dynamics for sampling independent sets}},
  author={Luby, M.},
  author={Vigoda, E.},
  journal={Random Structures and Algorithms},
  volume={15},
  pages={229--241},
  year={1999},
  publisher={John Wiley \& Sons}
}

\bib{MarOli:94}{article}{
   author={Martinelli, F.},
   author={Olivieri, E.},
   title={Approach to equilibrium of Glauber dynamics in the one phase
   region. I. The attractive case},
   journal={Comm. Math. Phys.},
   volume={161},
   date={1994},
   number={3},
   pages={447--486},
}

\bib{MezMon:09}{book}{
  title={{Information, physics, and computation}},
  author={Mezard, M.},
  author={Montanari, A.},
  year={2009},
  publisher={Oxford University Press},
  address={USA}
}

\bib{MPV:87}{book}{
  title={{Spin glass theory and beyond}},
  author={Mezard, M.},
  author={Parisi, G.},
  author={Virasoro, M.A.},
  year={1987},
  publisher={World scientific},
  address={Singapore}
}

\bib{Mossel:04}{article}{
  title={{Survey: information flow on trees}},
  author={Mossel, E.},
  booktitle={Graphs, morphisms, and statistical physics: DIMACS Workshop Graphs, Morphisms and Statistical Physics, March 19-21, 2001, DIMACS Center},
  pages={155--170},
  year={2004},
}

\bib{MMS:09}{article}{
  title={{The weak limit of Ising models on locally tree-like graphs}},
  author={Montanari, A.},
  author={Mossel, E.},
  author={Sly, A.},
  journal={ArXiv:0912.0719},
  year={2009}
}

\bib{MosPer:03}{article}{
  title={{Information flow on trees}},
  author={Mossel, E.},
  author={Peres, Y.},
  journal={Annals of Applied Probability},
  volume={13},
  pages={817--844},
  year={2003},
  publisher={Institute of Mathematical Statistics}
}

\bib{MosSly:09}{article}{
  title={{Exact Thresholds for Ising-Gibbs Samplers on General Graphs}},
  author={Mossel, E.},
  author={Sly, A.},
  journal={arXiv:0903.2906},
  year={2009},
}

\bib{MWW:09}{article}{
  title={{On the hardness of sampling independent sets beyond the tree threshold}},
  author={Mossel, E.},
  author={Weitz, D.},
  author={Wormald, N.},
  journal={Probability Theory and Related Fields},
  volume={143},
  pages={401--439},
  year={2009},
  publisher={Springer},
}

\bib{JerSin:89}{article}{
  title={{Approximate counting, uniform generation and rapidly mixing Markov chains}},
  author={Sinclair, A.},
  author={Jerrum, M.}
  journal={Information and Computation},
  volume={82},
  pages={93--133},
  year={1989},
  publisher={Elsevier}
}

\bib{Sokal:01}{article}{
  title={{A personal list of unsolved problems concerning lattice gases and antiferromagnetic Potts models}},
  author={Sokal, A.},
  journal={Markov Process. Related Fields},
  volume={7},
  pages={21--38},
  year={2001},
}

\bib{Sly:08}{article}{
  title={{Uniqueness thresholds on trees versus graphs}},
  author={Sly, A.},
  journal={The Annals of Applied Probability},
  volume={18},
  number={5},
  pages={1897--1909},
  year={2008},
  publisher={Institute of Mathematical Statistics}
}

\bib{Sly:09}{article}{
  title={{Reconstruction for the Potts model}},
  author={Sly, A.},
  booktitle={Proceedings of the 41st annual ACM symposium on Theory of computing},
  pages={581--590},
  year={2009},
}

\bib{Talagrand:06}{article}{
  title={{The Parisi formula}},
  author={Talagrand, M.},
  journal={Annals of Mathematics},
  volume={163},
  pages={221--264},
  year={2006},
}

\bib{Weitz:06}{article}{
 author = {Weitz, Dror},
 title = {Counting independent sets up to the tree threshold},
 conference={
    title={STOC '06: Proceedings of the thirty-eighth annual ACM symposium on Theory of computing},
    date={2006}
 },
 pages = {140--149},
}

\bib{Wormald:99}{article}{
  title={{Models of random regular graphs}},
  author={Wormald, N.C.},
  journal={London Mathematical Society Lecture Note Series},
  pages={239--298},
  year={1999},
}

\end{biblist}
\end{bibdiv}

\end{document}